    \definecolor{linkcolor}{RGB}{0,0,128}
\newcommand*\circled[1]{\tikz[baseline=(char.base)]{
            \node[shape=circle,draw,inner sep=2pt] (char) {#1};}}
\theoremstyle{plain}
\newtheorem{prototheorem}{Theorem}
\newtheorem{theorem}[prototheorem]{Theorem}
\newtheorem{lemma}[prototheorem]{Lemma}
\newtheorem{corollary}[prototheorem]{Corollary}
\theoremstyle{remark}
\newtheorem{remark}{Remark}
\theoremstyle{definition}
\DeclareMathOperator{\sign}{sgn}
\newcommand{\pos}[2]{#1^{(#2)}}
\DeclarePairedDelimiter\floor{\lfloor}{\rfloor}
\newcommand{\R}{\mathbb{R}}
\title{Incorporating Local Step-Size Adaptivity \\ into the No-U-Turn Sampler using Gibbs Self Tuning}
\author{Nawaf Bou-Rabee\thanks{Department of Mathematical Sciences, Rutgers University, \href{mailto:nawaf.bourabee@rutgers.edu}{\texttt{nawaf.bourabee@rutgers.edu}}}
\and
Bob Carpenter\thanks{Center for Computational Mathematics, Flatiron Institute, \href{mailto:bcarpenter@flatironinstitute.org}{\texttt{bcarpenter@flatironinstitute.org}}}
\and
Tore Selland Kleppe\thanks{Department of Mathematics and Physics, University of Stavanger, 
\href{mailto:mtore.kleppe@uis.no}{\texttt{tore.kleppe@uis.no}}}
\and
Milo Marsden\thanks{Department of Mathematics, Stanford University, 
\href{mailto:mmarsden@stanford.edu}{\texttt{mmarsden@stanford.edu}}}
}
\begin{document}

\maketitle
\begin{abstract}
\noindent
 Adapting the step size locally in the no-U-turn sampler (NUTS) is challenging because the step-size and path-length tuning parameters are interdependent. The determination of an optimal path length  requires a predefined step size, while the ideal step size must account for errors along the selected path. Ensuring reversibility further complicates this tuning problem. In this paper, we present a method for locally adapting the step size in NUTS that is an instance of the Gibbs self-tuning (GIST) framework. Our approach guarantees reversibility with an acceptance probability that depends exclusively on the conditional distribution of the step size.  We validate our step-size-adaptive NUTS method on Neal's funnel density and a high-dimensional normal distribution, demonstrating its effectiveness in challenging scenarios.
 \end{abstract}

\section{Introduction}

Hamiltonian Monte Carlo (HMC), when well-tuned, often generates a Markov chain that mixes rapidly, particularly in high-dimensional spaces \cite{DuKePeRo1987, Ne2011, BePiRoSaSt2013, seiler2014positive,chen2020fast}.  HMC updates the states of a Markov chain by alternating between Gibbs steps of momentum refreshment and Metropolis-within-Gibbs steps with proposals generated by leapfrog simulations of Hamiltonian dynamics.  Path length, a crucial tuning parameter, specifies the number of leapfrog steps taken between consecutive momentum refreshments.    Suboptimal path lengths can result in either highly correlated chain states if too short, or cause undesirable U-turns if too long, both of which hinder mixing, a phenomenon first observed by Mackenzie \cite{Ma1989}.
 
The no-U-turn sampler (NUTS) effectively addresses this path-length tuning problem \cite{HoGe2014,betancourt2017conceptual, carpenter2016stan}.  NUTS is a recursive algorithm that doubles the leapfrog integration time at each step in a random direction (forward or backward in time) until a U-turn (in position) is detected. The next state is then randomly selected from the generated leapfrog iterates, favoring those iterates that are further away from the starting point and those that have lower energy.  This approach ensures the reversibility of the NUTS chain while maintaining computational costs linear in the number of leapfrog steps.  These properties make NUTS the preferred sampler for continuously differentiable densities in probabilistic programming environments \cite{carpenter2016stan,salvatier2016probabilistic,nimble-article:2017,ge2018t,phan2019composable}.
 
However, the effectiveness of path length self-tuning can be undermined without a well-tuned leapfrog step size.   Specifically, the leapfrog step size must be  small enough to keep the energy error  within acceptable bounds during leapfrog integrations.  Poor tuning of the leapfrog step size can lead to large energy errors, creating bottlenecks that can severely impede the mixing of NUTS. These bottlenecks are particularly acute in stiff and high-dimensional problems, which are frequently encountered in practice.

In stiff problems, a smaller leapfrog step size is necessary for stability of the leapfrog integrator.  This step size restriction is due to the fact that the leapfrog integrator is only conditionally stable \cite{LeRe2004};  see Figure 2 of Ref.~\cite{bou2012patch} for an illustration of conditional stability). Thus, if the leapfrog step size is fixed, the smallest necessary step size must be used throughout state space, which can impair mixing in terms of computational cost.  A classic example of this type of bottleneck is Neal’s funnel problem \cite{Neal2003Slice}, where variable curvature causes NUTS to mix slowly \cite{betancourt2013hamiltonianmontecarlohierarchical, modi2023delayed, turok2024sampling}.

In high-dimensional problems, energy errors can accumulate, resulting in bottlenecks even if the leapfrog step size meets the stability requirement. However, due to a concentration phenomenon, these energy errors typically decrease once the chain enters the high probability mass region, or typical set, where the target distribution concentrates.   For instance, in a $d$-dimensional standard normal target, one can prove that the energy error scales as $O(h^2 d)$ outside the typical set, but improves to $O(h^4 d)$ in the typical set, due to the geometric properties of the leapfrog integrator and Gaussian concentration \cite{BePiRoSaSt2013,seiler2014positive}.
 
These challenging scenarios highlight the need for step-size adaptivity in NUTS.  However, locally adapting the leapfrog step size alongside the NUTS approach to adapting the number of steps presents a daunting challenge, primarily due to its recursive architecture and the need to maintain reversibility.  As a result, local adaptation of step size within NUTS is a non-trivial problem.  

This paper introduces a novel solution to this combined step-size and path-length tuning problem, based on the recently introduced Gibbs self-tuning (GIST) framework \cite{BouRabeeCarpenterMarsden2024}.  Our approach treats step-size-adaptive NUTS as a Gibbs sampler in an enlarged space, where the leapfrog step size is considered  a dynamic variable.  Additionally, the Metropolization within GIST corrects for discretization artifacts that prevent exact reversibility, thereby ensuring that reversibility is maintained even as the step size locally adapts to the geometry of the target distribution. Our numerical results demonstrate that incorporating local step-size adaptivity through the GIST framework improves the performance of NUTS on challenging target densities.
 
The paper is structured as follows.  First, we review the GIST framework and show how it can be used to reformulate NUTS as a GIST sampler. Next, we discuss different methods for leapfrog step-size selection, concluding with a GIST-based approach for adapting step-size in NUTS while preserving reversibility. We then demonstrate that our step-size-adaptive NUTS method effectively addresses stiffness in funnel problems and energy accuracy issues in high-dimensional settings.  A detailed proof of the reversibility of step-size-adaptive NUTS method is provided in Appendix~\ref{app:proof}.

\section{Short overview of Gibbs self tuning}

\label{sec:GIST_sampler}

Gibbs self-tuning (GIST) offers a powerful framework for locally tuning parameters within Metropolis samplers, including NUTS \cite{BouRabeeCarpenterMarsden2024}. To understand its potential, let's briefly review the key concepts underpinning HMC samplers from a Gibbs perspective.  Given a  target distribution with (non-normalized) density $e^{-U(\theta)}$ on $\mathbb{R}^d$ where
$U: \mathbb{R}^d \to \mathbb{R}$ is a continuously differentiable potential energy function, HMC samplers  generate a Markov chain on $\mathbb{R}^d$ that leaves this target distribution invariant.  They do this by enlarging state space to include an auxiliary momentum variable  $\rho \in \mathbb{R}^{d}$.  On this enlarged state space $\mathbb{R}^{2d}$, a joint distribution is defined that combines the target distribution with a Gaussian distribution over the momentum variables.  For simplicity, we assume the Gaussian distribution in momentum is a $d$-dimensional standard normal, and hence, the enlarged density is defined by
\begin{equation} \label{eq:target}
f(\theta, \rho) \propto e^{-H(\theta,\ \rho)} \;,  \quad \text{where}~~H(\theta,\rho) = U(\theta) + \frac{1}{2} \, \rho^{\top} \, \rho \;.
\end{equation}
We write the corresponding probability distribution as $\mu$ with background measure on the enlarged space $\mathbb{R}^{2d}$ given by Lebesgue measure $m^{2d}$. 

Given a step size $h>0$, a fundamental component of an HMC sampler is a leapfrog integrator $\Phi_h: \mathbb{R}^{2d} \to \mathbb{R}^{2d}$ for Hamilton's equations,
\begin{equation} \label{eq:exact_flow}
\frac{d}{dt} \theta_t = \rho_t \qquad \textrm{and} \qquad \frac{d}{dt} \rho_t = - \nabla U(\theta_t) \;. 
\end{equation}

\begin{algorithm}[t]
\begin{flushleft}
$\texttt{leapfrog}(\theta, \rho, L, h)$
\vspace*{2pt}
\hrule
\vspace*{2pt}
\textrm{input:}
\begin{tabular}[t]{ll}
$(\theta, \rho) \in \mathbb{R}^{2d}$ & position, momentum \\[2pt] 
$h>0$ &  leapfrog step size  \\[2pt] 
$L \in \mathbb{Z}$ &  number of leapfrog steps \\[2pt] 
\end{tabular} 
\vspace*{4pt}
\hrule
\vspace*{8pt}
$\theta^{(0)} = \theta, \rho^{(0)}= \rho, H^{(0)} = \frac{1}{2} |\rho^{(0)}|^2 + U(\theta^{(0)}) , H^+ = H^- = H^{(0)}$ \hfill (initialize) \\[4pt]
if $L<0$ 
\\[-12pt]
\null \qquad  $\rho^{(0)}= -\rho^{(0)}$ \hfill (reverse for negative steps)
\\[6pt]
for $i$ from $0$ to $|L| - 1$ (inclusive):  \hfill ($L$ leapfrog steps)
\\[-6pt]
\null \qquad $\pos{\rho}{i + 1/2} = \pos{\rho}{i} - \frac{h}{2}  \nabla U(\pos{\theta}{i})$ \hfill (half step momentum)
\\[-6pt]
\null \qquad $\pos{\theta}{i + 1} = \pos{\theta}{i} + h  \pos{\rho}{i + 1/2}$ \hfill (full step position)
\\[-6pt]
\null \qquad $\pos{\rho}{i + 1} = \pos{\rho}{i + 1/2} - \frac{h}{2} \cdot \nabla U(\pos{\theta}{i + 1})$ \hfill (half step momentum)
\\[-6pt]
\null \qquad $H^{(i+1)} = \frac{1}{2} |\rho^{(i+1)}|^2 + U(\theta^{(i+1)})$ \hfill ($(i+1)$-th energy) \\[-6pt]
\null \qquad $H^+ = \max(H^{(i+1)} ,  H^+), H^- = \min(H^{(i+1)}, H^-)$ \hfill (update max/min energy) \\[6pt]
if $L<0$ 
\\[-12pt]
\null \qquad  $\rho^{(L)}= -\rho^{(L)}$ 
\hfill (reset momentum if reversed)
\\[6pt]
 return $ \theta^{(L)}, \rho^{(L)}, H^+, H^-$ \hfill 
\vspace*{4pt}
\hrule
\caption{\it The leapfrog algorithm returns the position, momentum, maximum Hamiltonian and minimum Hamiltonian resulting from $|L|$ leapfrog steps with step size $h$ either forward ($L \ge 0$) or backward ($L < 0$) in time starting from position and momentum $(\theta, \rho)$.}
\label{algo:leapfrog}
\end{flushleft}
\end{algorithm}

Algorithm~\ref{algo:leapfrog} describes $L$ leapfrog steps with step size $h$ from an initial position and momentum $(\theta, \rho) \in \mathbb{R}^{2d}$. While the leapfrog integrator introduces discretization error and does not preserve $\mu$, it remains volume-preserving (i.e., $m^{2d}$-preserving) and reversible.  As a result, the composition $\mathcal{S} \circ \Phi_h$ forms an $m^{2d}$-preserving involution, where $\mathcal{S}: \mathbb{R}^{2d} \to \mathbb{R}^{2d}$ is the momentum flip map defined by $S(\theta, \rho) = (\theta, -\rho)$ \cite{LeRe2004,Hairer2010GeometricNumerical}.  The leapfrog integrator combined with the momentum flip can be used as a proposal move in a Metropolis algorithm.  The Metropolis adjustment ensures detailed balance and yields a correct Markov chain in which the transition kernel leaves the target distribution invariant \cite{BoSaActaN2018}.  

The HMC sampler can be viewed as a Gibbs sampler in this enlarged space alternating between Gibbs momentum refreshments and 
a Metropolis-within-Gibbs step that uses a proposal move defined by $(\theta, \rho) \mapsto \mathcal{S} \circ \Phi_h^L(\theta, \rho)$, where $L$ is the number of leapfrog steps, $\Phi^0_h$ is the identity map with $\Phi^0_h(\theta, \rho) = (\theta, \rho)$, and $\Phi^{L + 1}_h = \Phi_h \circ \Phi^L_h$. By leveraging Hamiltonian flows, well-tuned HMC can traverse high-dimensional spaces more rapidly than traditional MCMC methods, making it well-suited for sampling complex, high-dimensonal distributions.   Note, in this setup, there are two key HMC tuning parameters: the step size $h$ and the number of leapfrog steps $L$.\footnote{In the general case, there is also a positive-definite mass matrix to tune; we are implicitly taking it to be the identity to focus on step-size and path-length tuning.}  

GIST extends HMC by incorporating these tuning parameters as an auxiliary variable denoted by $\alpha$, updated based on the local state of the chain.  We assume $\alpha$ takes values in a space $\mathcal{A}$ with background measure $\nu$.  This measure functions similarly to the reference Lebesgue measure $m^d$ for the auxiliary momentum variable.  On the enlarged space $\mathbb{R}^{2d} \times \mathcal{A}$, a joint density is defined by specifying a conditional density of the tuning parameter given the position and momentum,
\begin{equation} \label{eq:f}
f(\theta, \rho, \alpha) \propto  e^{-H(\theta,\, \rho)} \, p( \alpha  \mid  \theta, \rho)  \;.
\end{equation}
The corresponding joint distribution has density $f$ relative to the background measure on the enlarged space $\mathbb{R}^{2d} \times \mathcal{A}$ given by $m^{2d} \otimes \nu$.
Moreover, for every $(\theta, \rho) \in \mathbb{R}^{2d}$, $p(\alpha \mid \theta, \rho)$ is a conditional density relative to $\nu$.   Analogously to the HMC sampler, a GIST sampler is fully defined by specifying the tuning parameter distribution in \eqref{eq:f} and a measure-preserving involution on the enlarged space,
\begin{equation}
\label{eq:G}
G: (\theta, \rho, \alpha) \mapsto (F(\alpha)(\theta, \rho), \pi(\theta, \rho)(\alpha)) \;,
\end{equation} 
where for each $\alpha \in \mathcal{A}$ the map $F(\alpha): \mathbb{R}^{2d} \to \mathbb{R}^{2d}$ is $m^{2d}$-preserving and for each $(\theta, \rho) \in \mathbb{R}^{2d}$ the map $\pi(\theta, \rho): \mathcal{A} \to \mathcal{A}$ is a $\nu$-preserving involution.  In what follows, $\pi(\theta,\rho)$ will be either the identity function, a sign flip, a shift, or a more complex function related to subtree doubling in NUTS.  Similarly, $F(\alpha)$ is  arbitrary, though it will typically be the result of a simulation of Hamiltonian dynamics in discrete steps using the leapfrog algorithm.

Given the current state of the chain $\theta_0 \in \mathbb{R}^{d}$, the GIST sampler finds an updated state $\theta_1 \in \mathbb{R}^{d}$ by the following steps.
\begin{itemize}
 \item Gibbs momentum refreshment step 
 $\rho_{0} \sim \mathrm{normal}(0, \textrm{I}_{d \times d}) $.
\item Gibbs tuning parameter refreshment step $\alpha_{0}\sim p(\cdot \mid \theta_0,\rho_{0})$.
\item Metropolis-within-Gibbs step with proposal $(\theta_1^*, \rho_1^*, \alpha_1^*) = G(\theta_0, \rho_{0}, \alpha_{0})$ and target $f$ in \eqref{eq:f}, with update
\begin{equation} \label{eq:met_within_gibbs}
(\theta_1, \rho_1, \alpha_1) \ =\ \begin{cases}
                   (\theta_1^*, \rho_1^*, \alpha_1^*) &\text{with probability } a(\theta_0, \rho_{0}, \alpha_{0}), \textrm{ and} \\
                    (\theta_0,\rho_{0},\alpha_{0}) &\mathrm{otherwise.}
                    \end{cases} \end{equation}
                    \end{itemize}
The acceptance probability function $a$ in \eqref{eq:met_within_gibbs} is defined by  
\begin{equation} \label{eq:GISTap}
a(\theta, \rho, \alpha) =  1 \wedge \left( e^{-\Delta H(\theta, \rho)} \, \frac{p( \pi(\theta, \rho)(\alpha) \mid F(\alpha)(\theta, \rho))}{p(\alpha \mid \theta, \rho)} \right) 
 \;,
\end{equation} where $\Delta H(\theta, \rho) := H \circ F(\alpha)(\theta,\rho) - H(\theta,\rho)$.  Theorem~1 in Ref.~\cite{BouRabeeCarpenterMarsden2024} establishes the correctness of GIST samplers, i.e., the transition kernel of the GIST sampler leaves the target distribution invariant.  A GIST sampler is fully specified by the tuning parameter conditional $p(\alpha \mid \theta,\rho)$ in \eqref{eq:f} and the measure-preserving involution $G$ in \eqref{eq:G}.

\section{The no-U-turn sampler as a GIST sampler}
\label{sec:NUTS-fixed-step}



As noted in \cite{BouRabeeCarpenterMarsden2024}, the no-U-turn sampler (NUTS) can be understood as a GIST sampler.  In this section, we provide a self-contained GIST formulation of NUTS by specifying a GIST sampler whose transition step aligns with a NUTS transition step. For any $m,n \in \mathbb{Z}$, let $[m:n] = \{ m, m+1, ..., n \}$, i.e., the set of consecutive integers from $m$ to $n$.  Recall that NUTS generates an index set of leapfrog iterates (or a leapfrog orbit) $[a:b]$ by randomly sampling the endpoints $a, b \in \mathbb{Z}$.  It then selects one of these iterates by sampling its index $L \in [a : b]$. The proposal, $(\theta^*, \rho^*) = \Phi_h^L(\theta, \rho)$, is always accepted.  Algorithm~\ref{algo:nuts} outlines these two crucial substeps of NUTS, which we will describe in greater detail next.  Although the following algorithms are self-contained, they are not optimized for performance. 

\begin{algorithm}[t]
\begin{flushleft}
$\texttt{NUTS}(\theta, h, R, M)$
\vspace*{2pt}
\hrule
\vspace*{2pt}
\textrm{Inputs:}
\begin{tabular}[t]{ll}
$\theta \in \mathbb{R}^{d}$ & position \\[2pt] 
$h>0$ &  coarse step size
\\[2pt] 
$R \ge 1$ &  fine step size is $h/R$
\\[2pt] $M \in \mathbb{N}$ &  maximum size of leapfrog orbit is $2^M$  \\[2pt] 
\end{tabular} 
\vspace*{4pt}
\hrule
\vspace*{8pt}
 $\rho \sim \textrm{normal}(0, \textrm{I}_{d \times d})$ \hfill (complete momentum refreshment)
\\[4pt]
$B \sim  \operatorname{uniform}(\{ 0, 1 \}^M)$ \hfill (symmetric Bernoulli process refreshment)
\\[4pt]
$(a, b, \_) = \texttt{leapfrog-orbit-selection}(\theta, \rho, B, h, R)$ \\ [4pt]
$(\theta^*, \_, \_) = \texttt{leapfrog-index-selection}(\theta, \rho, a, b, h, R)$ \\ [4pt]
 return $  \theta^*$ \hfill 
\vspace*{4pt}
\hrule
\caption{\it Transition step for the no-U-turn sampling (NUTS) algorithm with fixed step size.}
\label{algo:nuts}
\end{flushleft}
\end{algorithm}

Fix the leapfrog step size to be $h>0$ and the maximum size of a NUTS orbit to be $2^M$, where $M \in \mathbb{N}$.  For the subsequent discussion on adapting step size, we describe a version of NUTS which, for a positive integer $R$, uses $\Phi_{h/R}^R$ in place of $\Phi_h$.  This means taking $R$ leapfrog steps of size $h/R$ instead of a single leapfrog step of size $h$. To obtain standard NUTS, set $R =1$ in Algorithm \ref{algo:nuts}.  Consider the following auxiliary variables.
\begin{itemize}
\item
A symmetric Bernoulli process $B \in \{ 0,1 \}^M$ representing the potential random Bernoulli directions used in constructing the leapfrog orbit $[a:b]$.
\item
The number of Bernoulli directions $\ell \in [1:M]$ used to construct the actual leapfrog orbit $[a:b]$.
\item 
The endpoints $a,b \in \mathbb{Z}$ of the leapfrog orbit $[a:b]$.
\item 
The index $L \in [a:b]$ of the selected leapfrog iterate.
\end{itemize}
Thus, the auxiliary variable space is $\{ 0,1 \}^M \times [1:M] \times \mathbb{Z}^3$. Given the position and momentum $(\theta, \rho) \in \mathbb{R}^{2d}$, the conditional distribution of these auxiliary variables is defined by
\begin{equation} \label{eq:NUTS-kernel}
p_{\text{NUTS}}(
B, \ell,  a, b,  L \mid \theta, \rho, h, R) \  = \ \frac{1}{2^M} \, P(\ell, a, b \mid \theta, \rho, B, h, R) \, Q(L \mid \theta, \rho, a, b, h, R) \;.
\end{equation}
The distribution $P(\ell, a, b \mid \theta, \rho, B, h, R)$ is precisely defined in Appendix \ref{app:proof}, and the procedure to sample from it is described below.  Additionally, the distribution $Q(L \mid \theta, \rho, a, b, h, R)$ is further discussed below. The factor $1/2^M$ is the probability of outcome $B$ in the uniform distribution over binary sequences of size $M$ (i.e., $B$ is a symmetric Bernoulli process of size $M$).  As indicated in Algorithm~\ref{algo:nuts},  the auxiliary variable $B$ is refreshed in the same way as momentum.

Given $B$ and $(\theta, \rho)$, the number $\ell$ of Bernoulli directions used  and the endpoints $a,b$ of the leapfrog orbit are determined  by a deterministic recursive procedure that stops when a U-turn or sub-U-turn occurs.  A leapfrog orbit $[a : b]$ starting from $(\theta, \rho)$ has a U-turn if it belongs to the set \begin{equation} \label{eq:uturn} 
\text{U-turn} = \{(a, b, \theta, \rho, h, R) \mid \rho_+ \cdot (\theta_+ - \theta_-) < 0 \ \text{  or  } \ \rho_- \cdot (\theta_+ - \theta_-) < 0 \} \;,
\end{equation}
where $(\theta_-, \rho_-) = \Phi_{h/R}^{Ra}(\theta, \rho)$ and $(\theta_+, \rho_+) = \Phi_{h/R}^{Rb}(\theta, \rho)$. The indicator function $\mathbb{1}_{\text{U-turn}}(a, b, \theta, \rho, h, R)$ equals $1$ if the leapfrog iterates starting from $(\theta, \rho)$ indexed by $[a:b]$ has a $U$-turn.  The sub-U-turn indicator  $\mathbb{1}_{\text{sub-U-turn}}(a, b, \theta, \rho, h, R)$ is defined recursively as follows. \begin{itemize}
    \item For $a \ne b$, 
\begin{equation} \label{eq:subuturn}
\begin{aligned}
&\mathbb{1}_{\text{sub-U-turn}}(a, b, \theta, \rho, h, R)  
\\
&= \max\!\left( \strut \mathbb{1}_{\text{sub-U-turn}}(a, m, \theta, \rho, h, R), \
       \mathbb{1}_{\text{sub-U-turn}}(m+1, b, \theta, \rho, h, R), \
       \mathbb{1}_{\text{U-turn}}(a, b, \theta, \rho, h, R)\right) 
       \end{aligned}
\end{equation}
where $m = \lfloor (a + b)/2 \rfloor$.  
\item
For $a = b$, $\mathbb{1}_{\text{sub-U-turn}}(a, b, \theta, \rho, h, R) = 0$. 
\end{itemize} 
Algorithms~\ref{algo:indicator-U-turn} and~\ref{algo:indicator-sub-U-turn} show how these indicator functions are evaluated.  Critically for efficiency, there is only a linear number of U-turn evaluations rather than the quadratic number that would result from evaluating all pairs of points along a trajectory.  As a result, there may be internal U-turns not captured by the sub-U-turn function.  Algorithm~\ref{algo:leapfrog-orbit-selection} illustrates how to compute the leapfrog orbit endpoints.  Importantly for the subsequent discussion on adapting the step size, each call to $\texttt{leapfrog}$  allows $R$ leapfrog steps with step size $h/R$.  

\begin{algorithm}[t]
\begin{flushleft}
$\texttt{indicator-U-turn}(a, b, \theta, \rho, h, R)$
\vspace*{2pt}
\hrule
\vspace*{2pt}
\textrm{Inputs:}
\begin{tabular}[t]{ll}
$a,b \in \mathbb{Z}$ & leapfrog orbit endpoints
\\[2pt] 
$(\theta, \rho) \in \mathbb{R}^{2d}$ & current position, momentum \\[2pt]
$h>0$ &  coarse step size
\\[2pt] 
$R \ge 1$ &  fine step size is $h/R$
\\[2pt] 
\end{tabular} 
\vspace*{4pt}
\hrule
\vspace*{8pt}
$(\theta^-, \rho^-, H^+, H^-) = \texttt{leapfrog}(\theta, \rho,R a, h/R)$ \hfill ($R a$ leapfrog steps of size $h/R$)
\\[4pt] 
$(\theta^+, \rho^+, \widetilde{H}^+, \widetilde{H}^-) = \texttt{leapfrog}(\theta, \rho, R b, h/R)$ \hfill ($R b$ leapfrog steps of size $h/R$)
\\[4pt] 
$H^+ = \max( \widetilde{H}^+, H^+)$, \ $H^- = \min( \widetilde{H}^-, H^-)$ \\[4pt]
$\textit{Uturn} = \rho^+ \cdot (\theta^+ - \theta^-) <0$ or $\rho^- \cdot (\theta^+ - \theta^-) <0$
\\[4pt]
 return $ \textit{Uturn}, H^+, H^-$ \hfill 
\vspace*{4pt}
\hrule
\caption{\it Calculation of the indicator function for U-turns.}
\label{algo:indicator-U-turn}
\end{flushleft}
\end{algorithm}

\begin{algorithm}[t]
\begin{flushleft}
$\texttt{indicator-sub-U-turn}(a, b, \theta, \rho, h, R)$
\vspace*{2pt}
\hrule
\vspace*{2pt}
\textrm{Inputs:}
\begin{tabular}[t]{ll}
$a,b \in \mathbb{Z}$ & leapfrog orbit endpoints
\\[2pt]  
$(\theta, \rho) \in \mathbb{R}^{2d}$ & current position, momentum \\[2pt]
$h>0$ &  coarse step size
\\[2pt] 
$R \ge 1$ &  fine step size is $h/R$
\\[2pt] 
\end{tabular} 
\vspace*{4pt}
\hrule
\vspace*{8pt}
if $a=b$:  \hfill 
\\[-12pt]
\null \qquad   $\textrm{return } 0$
\\[4pt]
$m=\lfloor (a+b)/2 \rfloor$
\\[4pt]
$(\textit{full}, \_, \_) = \texttt{indicator-U-turn}(a,b,\theta,\rho,h,R)$ 
\\[-6pt]
\null $\textit{left} = \texttt{indicator-sub-U-turn}(a,m,\theta,\rho,h,R)$
\\[-6pt]
\null  $\textit{right} = \texttt{indicator-sub-U-turn}(m+1,b,\theta,\rho,h,R)$ 
\\[4pt]
return $\max(\textit{full},\textit{left},\textit{right})$ \hfill 
\vspace*{4pt}
\hrule
\caption{\it Calculation of the indicator function for sub-U-turns.}
\label{algo:indicator-sub-U-turn}
\end{flushleft}
\end{algorithm}

Given the leapfrog orbit $[a:b]$, the index $L \in [a:b]$ is randomly sampled from the index selection kernel $Q(L \mid \theta, \rho,a,b, h, R)$.  Typically, $Q$ is taken to be either the categorical distribution based on the Boltzmann weights of the leapfrog iterates, 
\begin{equation} \label{eq:categorical}
Q(L  \mid \theta, \rho, a, b, h, R) 
= 
\dfrac{e^{-H \, \circ \, \Phi_{h/R}^{RL}(\theta, \rho)}}
      {\sum_{k \in [a:b]} e^{-H \, \circ \, \Phi_{h/R}^{Rk}(\theta, \rho)}} \, \mathbb{1}_{[a:b]}(L) \;, 
\end{equation}  or the biased progressive sampling kernel described in \cite{betancourt2017conceptual}.  Algorithm~\ref{algo:leapfrog-index-selection} shows how to sample this categorical distribution with a single pass through the leapfrog iterates.   

To complete the specification of NUTS as a GIST sampler, we define the following measure-preserving involution on the enlarged space,
\begin{equation}
\label{eq:G_nuts} 
G: (\theta, \rho, B, \ell, a, b, L) \mapsto (\theta^*, \rho^*, B^*, \ell, a^*,  b^*, -L) \;,
\end{equation}
where $(\theta^*, \rho^*) = \Phi^{RL}_{h/R}(\theta, \rho)$, $a^* = a-L$, and $b^* = b-L$. The sequence $B^* \in \{ 0, 1\}^M$ is defined such that 
\[
(a^*, b^*, \_) = \texttt{leapfrog-orbit-selection}(\theta^*, \rho^*, B^*, h, R) \;.
\]

The functions \texttt{leapfrog-orbit-selection} and \texttt{leapfrog-index-selection} are defined in Algorithms \ref{algo:leapfrog-orbit-selection} and \ref{algo:leapfrog-index-selection}. As the NUTS proposal is always accepted, the sequence $B^*$ plays no role in the implementation of NUTS and its precise definition is postponed to the next section.

\begin{algorithm}[t]
\begin{flushleft}
$\texttt{leapfrog-orbit-selection}(\theta, \rho, B, h, R)$
\vspace*{2pt}
\hrule
\vspace*{2pt}
\textrm{Inputs:}
\begin{tabular}[t]{ll}
$(\theta, \rho) \in \mathbb{R}^{2d}$ & current position, momentum \\[2pt]
$B$ & binary string of size $M \in \mathbb{N}$
\\[2pt] 
$h>0$ &  coarse step size
\\[2pt] 
$R \ge 1$ &  fine step size is $h/R$
\\[2pt] 
\end{tabular} 
\vspace*{4pt}
\hrule
\vspace*{8pt}
$a = b = 0 $, $H^+ = H^- = \frac{1}{2} |\rho|^2 + U(\theta)$ \hfill (initialize endpoints of index-set and extremal energies)
\\[4pt]
for $i$ from $1$ to $M$ (inclusive):  \hfill (no more than $M$ doublings)
\\[-6pt]
\null \qquad $\widetilde{a} = a + (-1)^{B_{i}} \, 2^{i-1}$, $\widetilde{b} = b + (-1)^{B_{i}} \, 2^{i-1}$  \hfill (endpoints of proposed extension) 
\\[-6pt]
\null \qquad $(\textit{Uturn},\widetilde{H}^+, \widetilde{H}^-) = \texttt{indicator-U-turn}(a,b,\theta, \rho, h, R)$,   \hfill (U-turn?)
\\[-6pt]
\null \qquad $H^+ = \max( \widetilde{H}^+, H^+)$, $H^- = \min( \widetilde{H}^-, H^-)$ \\[-6pt]
\null \qquad $\textit{subUturn} = \texttt{indicator-sub-U-turn}(\widetilde{a},\widetilde{b},\theta, \rho, h, R)$  \hfill (sub-U-turn?)
\\[-6pt]
\null \qquad if $\max(\textit{Uturn}, \textit{subUturn}) = 0$ \hfill (if no-u-turn) 
\\[-6pt]
\null \qquad \qquad  $a=\min(a,\widetilde{a})$,  $b=\max(b,\widetilde{b})$  \hfill (double size of index-set) 
\\[-12pt]
\null \qquad else 
\\[-12pt]
\null \qquad \qquad break 
\\[4pt]
$\Delta H = H^+ - H^-$
\\[4pt]
 return $ a, b,  \Delta H$ \hfill 
\vspace*{4pt}
\hrule
\caption{\it The no-U-turn orbit selection procedure.}
\label{algo:leapfrog-orbit-selection}
\end{flushleft}
\end{algorithm}

\begin{algorithm}[t]
\begin{flushleft}
$\texttt{leapfrog-index-selection}(\theta, \rho, a, b, h, R)$
\vspace*{2pt}
\hrule
\vspace*{2pt}
\textrm{Inputs:}
\begin{tabular}[t]{ll}
$(\theta, \rho) \in \mathbb{R}^{2d}$ & current position, momentum \\[2pt]
$a \in \mathbb{Z}$ & left endpoint of index set 
\\[2pt] 
$b \in \mathbb{Z}$ & right endpoint of index set 
\\[2pt] 
$h>0$ &  coarse step size
\\[2pt] 
$R \ge 1$ &  fine step size is $h/R$
\\[2pt] 
\end{tabular} 
\vspace*{4pt}
\hrule
\vspace*{8pt}
$(\widetilde{\theta}^*, \widetilde{\rho}^*,\_,\_) = \text{leapfrog}(\theta, \rho,R a, h/R)$ \hfill ($R a$ leapfrog steps of size $h/R$)
\\[4pt] 
$\widetilde{H}^* = \frac{1}{2} |\widetilde{\rho}^*|^2 + U(\widetilde{\theta}^*)$,  $w=e^{-\widetilde{H}^*}$ , $L=a$, $(\theta^*, \rho^*) = (\widetilde{\theta}^*, \widetilde{\rho}^*)$ \hfill (initialize)
\\[4pt]
for $i$ from $a+1$ to $b$ (inclusive):  \hfill 
\\[-6pt]
\null \qquad $(\widetilde{\theta}^*, \widetilde{\rho}^*,\_,\_) = \text{leapfrog}(\widetilde{\theta}^*, \widetilde{\rho}^*,R, h/R)$  \hfill ($R i$ leapfrog steps of size $h/R$)
\\[-6pt]
\null \qquad $\widetilde{H}^* = \frac{1}{2} |\widetilde{\rho}^*|^2 + U(\widetilde{\theta}^*)$, 
\ $w = w + e^{-\widetilde{H}^*} $, 
\ $\textit{ap} = e^{-\widetilde{H}^*}/w$ \hfill
\\[-6pt]
\null \qquad  $u \sim \textrm{uniform}([0, 1])$  \hfill 
\\[-6pt]
\null \qquad if $u \le \textit{ap}$ \hfill 
\\[-6pt]
\null \qquad \qquad  $L = i$ \\[-6pt]
\null \qquad \qquad   $(\theta^*, \rho^*) = (\widetilde{\theta}^*, \widetilde{\rho}^*)$ \hfill 
\\[4pt]
 return $  \theta^*, \rho^*, L$ \hfill 
\vspace*{6pt}
\hrule
\caption{\it The leapfrog index selection procedure for the no-U-turn sampler.}
\label{algo:leapfrog-index-selection}
\end{flushleft}
\end{algorithm}

\section{Locally adaptive step-size selection for NUTS}
\label{sec:adaptNUTS}

Locally adapting the step size  in NUTS is a challenging task due to the interdependence between step-size and path-length tuning parameters. Determining the optimal path length requires setting a step size in advance, while the ideal step size must account for the leapfrog energy error incurred along a given path. Additionally, reversibility is essential.  In this section, we introduce a locally adaptive step-size selection method for NUTS using the GIST framework, which simultaneously selects both tuning parameters such that the GIST acceptance probability depends solely on the conditional distribution of the step size. Later, in Sections~\ref{sec:funnel} and~\ref{sec:highDnormal}, we validate this approach on Neal’s funnel density and a high-dimensional standard normal target, both difficult for samplers with fixed step sizes, including NUTS. These numerical results indicate that this adaptive strategy significantly enhances NUTS’s sampling efficiency in these challenging scenarios.

\subsection{Initial point symmetry of step-size selection}

\label{sec:init_point_symm}

A simple, ideal step-size selection method would ensure that the optimal step size for a given $(\theta, \rho)$  aligns with the optimal step size for any point along the leapfrog path passing through $(\theta, \rho)$.  Inserting a step-size function with this initial point symmetry into any reversible GIST sampler that locally adapts path length --- including NUTS --- preserves reversibility. Specifically, for an acceptance probability threshold $a_{\min} \in (0,1)$ and any position and momentum $(\theta, \rho) \in \mathbb{R}^{2d}$, a step-size selection function $\eta(a_{\min}): \mathbb{R}^{2d} \to \mathbb{R}_{>0}$ has \emph{initial point symmetry} if
\begin{equation} \label{eq:step-size-initial-point-symmetry}
h^*=\eta(a_{\min})(\theta, \rho)  \quad \text{satisfies} \quad h^* = \eta(a_{\min})(\Phi^i_{h^*}(\theta, \rho)) \quad  \text{for all $i \in \mathbb{Z}$} \;. 
\end{equation} Here, $\Phi_{h^*}^i(\theta, \rho)$ denotes the leapfrog iterate after $i$ steps  with step size $h^*$ starting from $(\theta, \rho)$.

One idea for a step-size selection function with this property is: \
\begin{equation} \label{eq:eta}
\eta(a_{\min})(\theta, \rho) = \sup \left\{ h>0 \mid  e^{ - (H^+  -  H^-) } \ge a_{\min} \right\} \;,
\end{equation}
where 
\[
H^+ = \max_{ i \in \mathbb{Z}} H \circ \Phi_{h}^i(\theta, \rho)
\quad
\textrm{ and }
\quad
H^- =  \min_{i \in \mathbb{Z}}  H \circ \Phi_{h}^i(\theta, \rho) \;.
\]
In this definition, $H^+$ and $H^-$ represent the maximum and minimum value of the Hamiltonian function along the infinite two-sided leapfrog path passing through $(\theta, \rho)$.  These values are properties of the entire leapfrog path and are therefore uniform with respect to the initial point along the path.  Under mild regularity conditions on the potential energy function, $H^+$ and $H^-$ can be expected to vary continuously with the step size.  Intuitively, the leapfrog iterates lie on a level set of a modified Hamiltonian function \cite{Mo1968}, and the size of the energy error envelope on this level set changes continuously with the step size.  Consequently, for any acceptance probability threshold $a_{\min} \in (0,1)$,  there may exist a unique step size satisfying \eqref{eq:eta}, regardless of the initial point taken along the leapfrog path.  However, the next remark shows that this might not always be the case even for a standard normal target distribution.  

\begin{figure}
\centering
\includegraphics[scale=0.3]{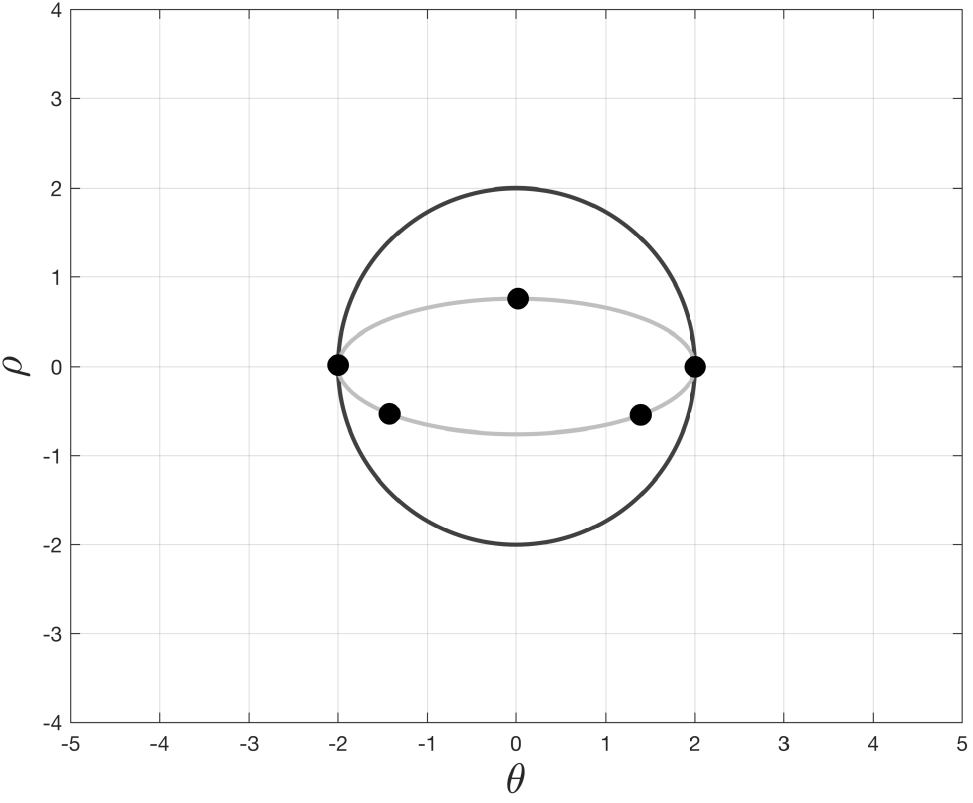}
\includegraphics[scale=0.3]{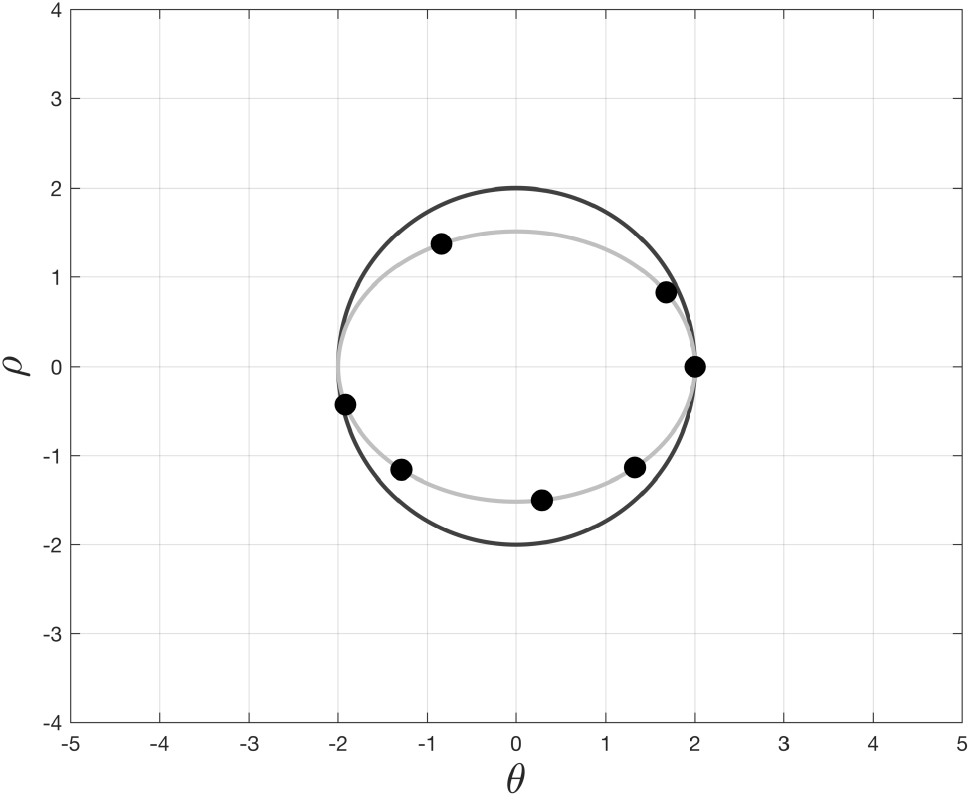}
\includegraphics[scale=0.3]{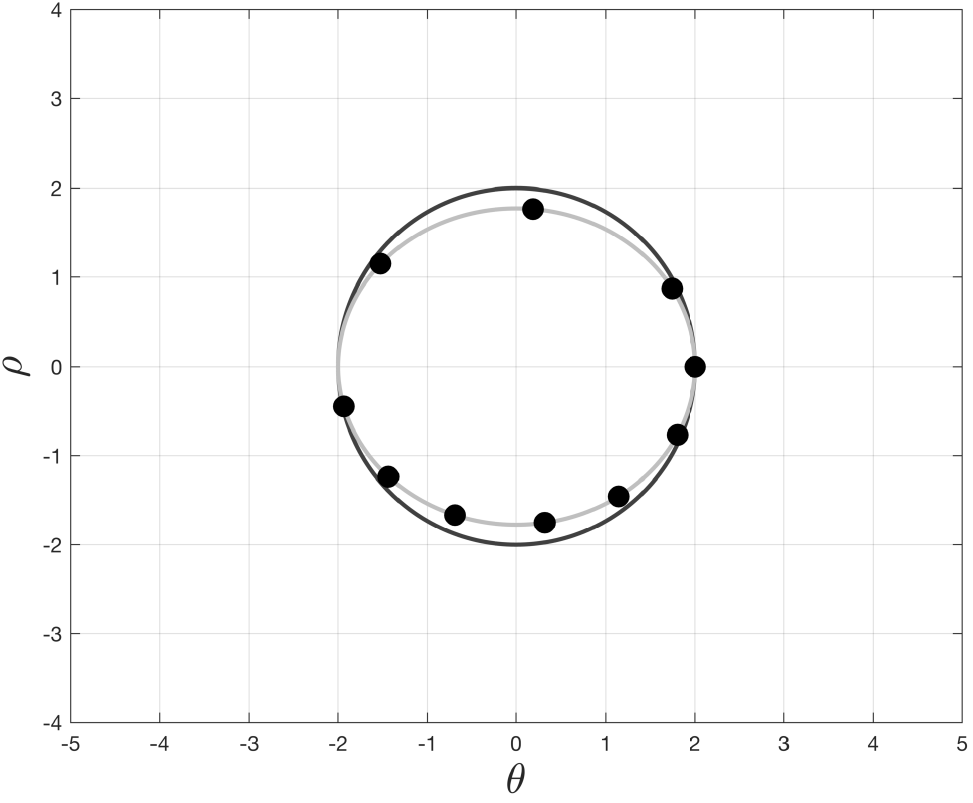} \\
\includegraphics[scale=0.3]{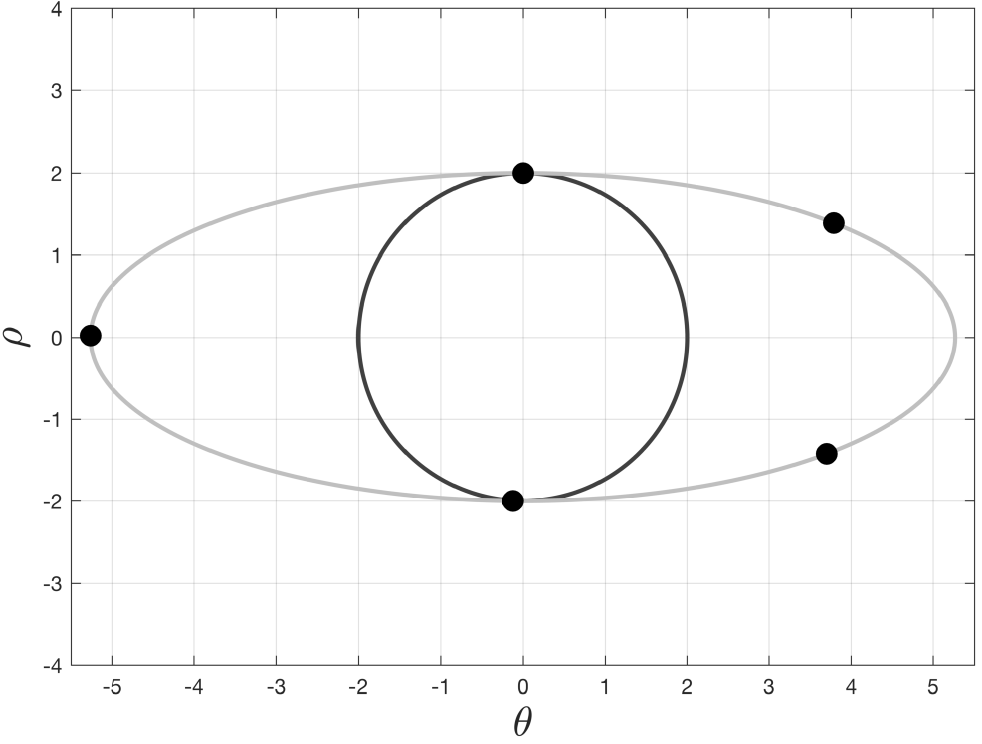}
\includegraphics[scale=0.3]{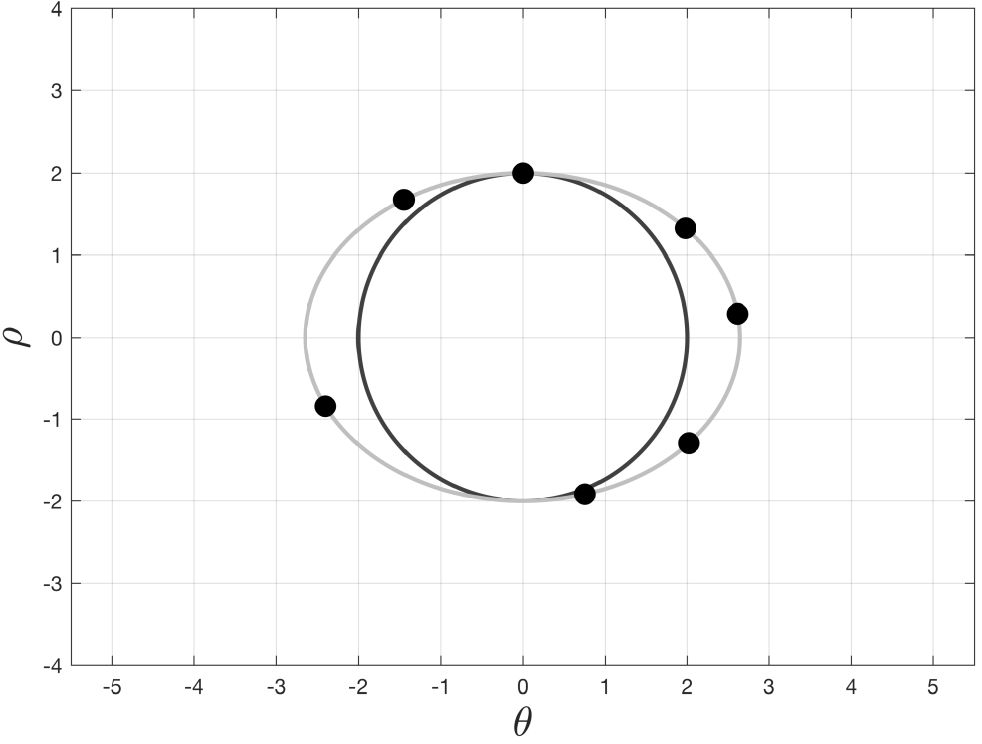}
\includegraphics[scale=0.3]{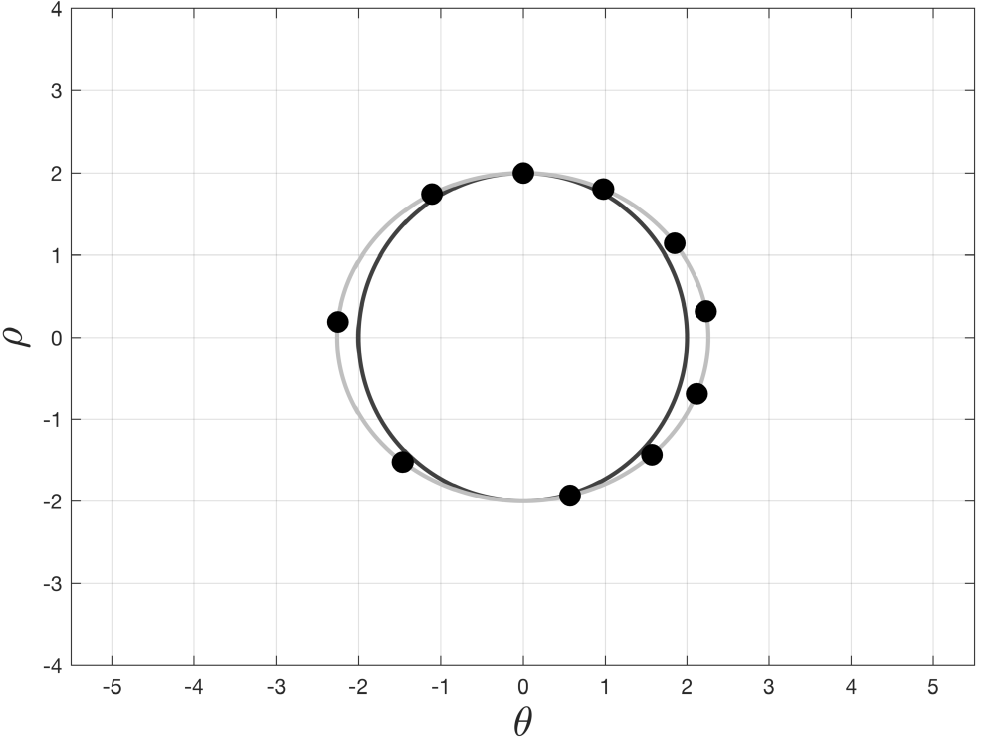}
\caption{\textit{For a standard normal target, the underlying leapfrog integrator preserves a modified Hamiltonian (grey) illustrated here for step sizes $h\in \{1.85, 1.31, 0.925\}$ and initial conditions $(\theta_0,\rho_0) \in \{ (2,0), (0,2) \}$.  In black is a level set of the exact Hamiltonian through $(\theta_0,\rho_0)$.  The dots are points on a leapfrog path with initial condition $(\theta_0,\rho_0)$.}}
\label{fig:stdnormal}
\end{figure}
\begin{remark} 
\label{rmk:modifiedHnormal}
Suppose $h \in (0,2)$.  As illustrated in Figure~\ref{fig:stdnormal}, 
for a standard normal target distribution, for which $H(\theta, \rho) = \frac{1}{2} (|\theta|^2 + |\rho|^2)$, the leapfrog integrator preserves the modified Hamiltonian: 
\[
H_h(\theta, \rho) = \frac{1}{2} \left( 1 - \frac{h^2}{4} \right) |\theta|^2 + \frac{1}{2} |\rho|^2 \;,
\] 
which means $H_h \circ \Phi_h = H_h$.  The level sets of $H_h$ are ellipses with major axes aligned with the $\theta$-axis. Thus, the maximum value of $H$ occurs at $H(\theta^*,0)$ where $\theta^*$ satisfies $H_h(\theta^*,0) = H_h(\theta, \rho)$, and the  minimum value of $H$ occurs at $(0,\rho^*)$ where $\rho^*$ satisfies $H_h(0,\rho^*) = H_h(\theta, \rho)$.  Therefore, $H$ evaluated along any leapfrog path with initial condition $(\theta, \rho)$ satisfies 
\[
\frac{1}{2} \left( \left( 1- \frac{h^2}{4} \right) |\theta|^2 + |\rho|^2 \right) \,  \le \,  H \circ \Phi_h^i(\theta, \rho) \, \le \,  \frac{1}{2} \left( |\theta|^2 + \left(1-\frac{h^2}{4} \right)^{-1} |\rho|^2 \right)
\] 
uniformly in $i \in \mathbb{Z}$.  The difference between these bounds yields
\begin{align}
\label{eq:dH_stdnormal}
(H^+ - H^-)(\theta, \rho) &\le \frac{h^2}{2} \left( \frac{1}{4} |\theta|^2 + \frac{1}{4 - h^2} |\rho|^2 \right) =  \frac{h^2}{ 4-h^2} H_h(\theta, \rho) \;.  
\end{align}
Because $H_h \circ \Phi_h = H_h$, for all $(\theta, \rho) \in R^{2d}$ and $i \in \mathbb{Z}$, 
\[
(H^+ - H^-) ( \Phi_h^i(\theta, \rho) ) \le \frac{h^2}{4 - h^2} H_h \circ \Phi_h^i(\theta, \rho) = \frac{h^2}{ 4-h^2} H_h(\theta, \rho) \; .
\]

\begin{figure}[t]
\centering
\includegraphics[scale=0.45]{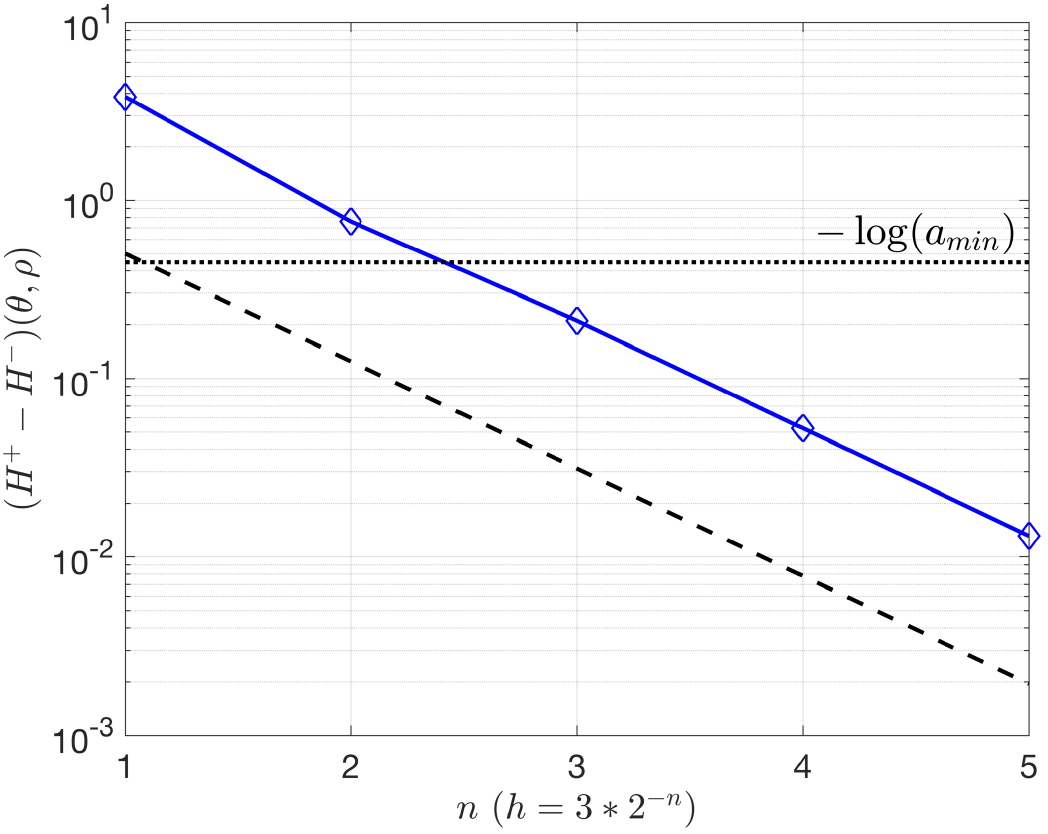} 
\includegraphics[scale=0.45]{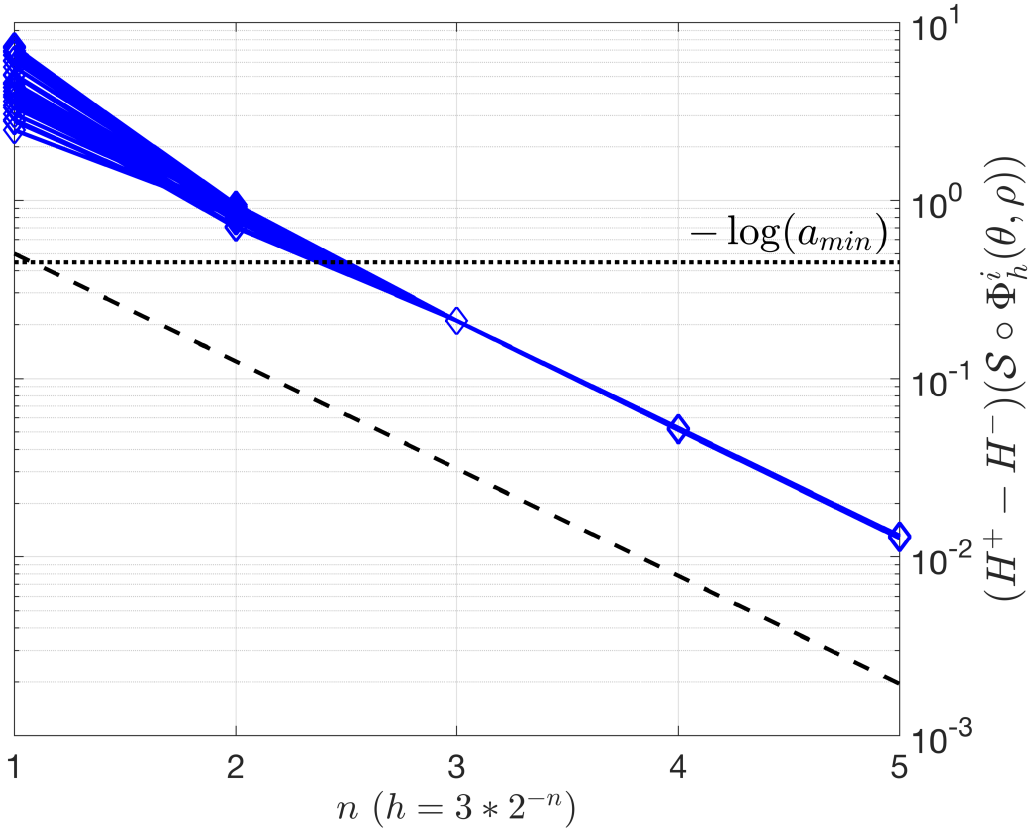}
\caption{\textit{The left panel shows a plot of the path-wise energy error for leapfrog initiated at $(\theta,\rho)=(-3.24,-1.20)$. The dotted line indicates the corresponding acceptance probability threshold $a_{min}=64\%$, which is met at $h=3/8$.  For that value of $h$, the right panel shows the path-wise energy error for initial points $\{ \mathcal{S} \circ \Phi_h^i(\theta, \rho) \}$.  The acceptance probability threshold is met at $h=3/8$ uniformly in $i$.}}
\label{fig:stdnormal2}
\end{figure}

However, the upper bound \eqref{eq:dH_stdnormal} does not ensure that the corresponding step-size selection function maintains initial point symmetry.    Figure~\ref{fig:stdnormal2} illustrates the challenge of ensuring initial point symmetry, even in this simple setting, due to discretization artifacts. This difficulty highlights the importance of using the Metropolis-within-Gibbs step in the GIST sampler to ensure reversiblity, a point that will be further expanded in the following subsection.
\end{remark}

\subsection{Near initial point symmetry of step-size selection} \label{sec:implementable-step-size-selection}

Consider the step-size selection function $\eta$ defined in \eqref{eq:eta}.
By incorporating $a_{min} \in (0,1)$ as an input parameter and including the line $h=\eta(a_{min})(\theta,\rho)$ in Algorithm~\ref{algo:nuts}, we obtain a NUTS sampler with locally adaptive step size.  For this sampler to be reversible, $\eta(a_{min})(\theta,\rho)$ must satisfy the initial point symmetry condition given in \eqref{eq:step-size-initial-point-symmetry}. However, as discussed in Remark~\ref{rmk:modifiedHnormal}, $\eta$ might not always meet this requirement.  Moreover, computing $\eta(a_{min})(\theta,\rho)$ requires searching over a continuous range of step sizes $h>0$ and an infinite set of leapfrog iterates $\{ \Phi_h^i(\theta, \rho) \}_{i \in \mathbb{Z}}$, making it impractical to evaluate. 

To practically implement locally adaptive step-size selection functions, we  approximate $\eta$ by limiting the search to a discrete set of step sizes and a finite leapfrog path.  Since these approximations will not exactly satisfy the initial point symmetry condition \eqref{eq:step-size-initial-point-symmetry},  a Metropolis accept-reject step is used to maintain reversibility.  This step is straightforward to incorporate in the GIST framework, since it allows for a Metropolis-within-Gibbs step as defined in \eqref{eq:met_within_gibbs}. The corresponding GIST acceptance probability will reflect how closely the approximate step size selection function adheres to initial point symmetry. Therefore, good step-size selection functions will approximately satisfy the initial point symmetry condition. 

For instance, fixing a maximum step-size $h >0$ and a time interval $[-T,T]$ where $T>0$, an initial guess at such an implementable step-size selection function might be
\begin{equation*}
\widetilde{\eta}(a_{min}, T)(\theta, \rho) = \max  \left\{ 2^{-k} h \mid e^{-(H^+ - H^-)} \geq a_{min} \right\} \; , 
\end{equation*}
where
\begin{equation*}
H^+ = \underset{-T \, \leq \, ih\, \leq \, T}{\max} H \circ \Phi_h^i(\theta, \rho)
\quad \textrm{and}  \quad 
H^- = \underset{-T \, \leq \, ih \, \leq \, T}{\min} H \circ \Phi_h^i(\theta, \rho) \; .
\end{equation*}  
In essence, $\widetilde{\eta}$ selects the largest step size in the decreasing sequence $\{ 2^{-k} \, h \}_{k \in \mathbb{N}}$ such that the energy envelope from initial condition $(\theta, \rho)$ over the time interval $[-T,T]$ is within the specified tolerance.

However, selecting a step size based solely on $\widetilde{\eta}$ can lead to low GIST acceptance rates, particularly for paths that pass through regions with different energy errors. For a given point $(\theta, \rho)$,  $\widetilde{\eta}$ will select a step size $\tilde{h}$ such that the discrete path  $\{ \Phi_{\tilde{h}}^i(\theta, \rho) \}_{-T \, \leq \, i \tilde{h} \, \leq \, T}$ closely approximates the continuous path $\{\varphi_t(\theta, \rho) \}_{-T \, \leq \, t \, \leq \, T}$. However, for two different points $(\theta, \rho)$ and $(\theta^*, \rho^*)$, their respective continuous paths may differ. A step size that accurately approximates one path  may not work for the other, especially if one path segment lies in a region of low energy error while the other lies in a region of high energy error. 

Moreover, a step size that accurately approximates $\{\varphi_t(\theta, \rho) \}_{-T \, \leq \, t \, \leq \, T}$ may not work well for generating the leapfrog iterates $\{ \Phi_{\tilde{h}}^i(\theta, \rho) \}_{i \in [a : b]}$ used in NUTS, as these two sets of iterates can have different energy errors.   A robust step-size selection procedure for NUTS should ensure near initial point symmetry. Specifically, if a step size $\tilde{h}$ is chosen based on $(\theta, \rho)$, and $(\theta^*, \rho^*)$ is a point in the set $\{ \Phi_{\tilde{h}}^i(\theta, \rho) \}_{i \in [a : b]}$, then the step-size search starting from $(\theta^*, \rho^*)$  should yield a step size similar to $\tilde{h}$.  To address the potential initial point asymmetry in this step-size selection, we introduce randomness by carefully designing a step-size tuning parameter distribution that aligns with the NUTS path-length selection procedure.  This local step-size adaptivity should also maintain the reversibility of the NUTS path-length selection procedure.  The GIST framework provides a unified approach to achieve this, as described next. 

\subsection{Step-size-adaptive NUTS}

Let $h>0$ be a coarse step size, $a_{min} \in (0,1)$ be the acceptance probability threshold, and $B$ be a binary sequence of fixed size $M$.  We define the function:  
\begin{equation} \label{eq:step-size-NUTS}
\texttt{step-reduction}(\theta, \rho, B, h, a_{\textrm{min}}) = \min\{ k \in \mathbb{N} \mid e^{-\Delta H_{\text{gap}}} \geq a_{min} \}
\end{equation}
where $(\_, \_, \Delta H_{\text{gap}}) = \texttt{leapfrog-orbit-selection}(\theta, \rho, B, h, 2^k)$.\footnote{The definition of $\Delta H_{\text{gap}}$ includes the fine gridpoints corresponding to the iterates of $\Phi_{2^{-k}h}$ between the coarse gridpoints defined by the iterates of $\Phi_{2^{-k} h}^{2^k}$.} To select the leapfrog orbit, this algorithm uses $\Phi_{2^{-k} h}^{2^k h}$ in place of $\Phi_h$.  This means that $2^k$ leapfrog steps with a fine step size $2^{-k} h$ are taken instead of a single leapfrog step of size $h$. The coarse step size $h$ remains constant regardless of the value of $k$.  Algorithm~\ref{algo:step-reduction} implements this function.

Using this definition within the GIST framework from Section~\ref{sec:GIST_sampler}, a NUTS sampler with local step-size adaptation can be defined by introducing the tuning parameters \[
(k, B, \ell, a, b, L) \in \mathbb{N} \times \{0,1\}^M \times [1:M] \times \mathbb{Z}^3 \;.
\] This set of tuning parameters is the same as in fixed step-size NUTS, but augmented with $k \in \mathbb{N}$.  The conditional distribution of these tuning parameters is defined as:
\begin{equation} 
\label{eq:adapt-nuts-conditional}
\begin{aligned}
& p_{\textrm{adaptNUTS}}( k, B, \ell, a, b, L \mid \theta, \rho) \\
& \quad =  p_k(k \mid \texttt{step-reduction}(\theta, \rho, B, h, a_{\textrm{min}})) \,  p_{\textrm{NUTS}}(B, \ell, a, b, L \mid \theta, \rho, h, 2^k) \;,
\end{aligned}
\end{equation}
where $p_k(k \mid \texttt{step-reduction}(\theta, \rho, B, h, a_{\textrm{min}}))$ is the conditional distribution of $k$ based on the output of $\texttt{step-reduction}$.  Note, $p_{\textrm{adaptNUTS}}$ differs from $ p_{\textrm{NUTS}}$ only by the factor $p_k$.  The measure-preserving involution in the enlarged space is defined as: 
\begin{equation} \label{eq:adapt-nuts-involution}
G: (\theta, \rho, k, B, \ell, a, b,  L) \mapsto (\theta^*, \rho^*, k, B^*, \ell, a-L, b-L, -L) \;,
\end{equation}
where $(\theta^*, \rho^*) = \Phi_{2^{-k}h}^{2^k L}(\theta, \rho)$ and $B^*$ is defined below in Section~\ref{sec:Bstar}.  While the proposal resides in the coarse grid defined by $h$, the algorithm uses $2^k$ leapfrog steps with a fine step size of $2^{-k} h$ to populate this coarse grid.  The  GIST acceptance probability in \eqref{eq:GISTap} is then: 
\begin{equation} \label{eq:step-size-acceptance-ratio}
 1 \wedge \frac{ p_k(k \mid \texttt{step-reduction}(\theta^*, \rho^*, B^*, h, a_{\textrm{min}}))}{p_k(k \mid \texttt{step-reduction}(\theta, \rho, B, h, a_{\textrm{min}}))} \;.
\end{equation} 

For the choice 
\[
p_k(k \mid \texttt{step-reduction}(\theta, \rho, B, h, a_{\textrm{min}}) = \widetilde{k}) 
= 
\frac{1}{3} \mathbb{1}_{ \{ \widetilde{k}-1, \ \widetilde{k}, \ \widetilde{k}+1 \} } (k) \;,
\] 
Algorithm~\ref{algo:adaptive-NUTS} describes the corresponding step-size-adaptive NUTS sampler.  It uses the function \texttt{step-reduction}, which is defined in Algorithm~\ref{algo:step-reduction}.  The choice of using just two surrounding points involves a tradeoff between wanting to stay close to an optimal step size while maintaining a high Metropolis-within-Gibbs acceptance probability.

A complete proof of the correctness of Algorithm~\ref{algo:adaptive-NUTS} is provided in Appendix~\ref{app:proof}.

\begin{algorithm}[t]
\begin{flushleft}
$\texttt{adaptNUTS}(\theta, h, M, a_{min})$
\vspace*{2pt}
\hrule
\vspace*{2pt}
\textrm{Inputs:}
\begin{tabular}[t]{ll}
$\theta \in \mathbb{R}^{d}$ & current position \\[2pt] 
$h>0$ &  coarse step size  \\[2pt] 
$M \in \mathbb{N}$ &  maximum size of leapfrog orbit is $2^M$  \\[2pt] 
$a_{\min} \in (0,1)$ & acceptance probability threshold \\[2pt] 
\end{tabular} 
\vspace*{4pt}
\hrule
\vspace*{8pt}
 $\rho \sim \textrm{normal}(0, \textrm{I}_{d \times d})$ \hfill (complete momentum refreshment)
\\[4pt]
$B \sim  \operatorname{uniform}(\{ 0, 1 \}^M)$ \hfill (symmetric Bernoulli process refreshment)
\\[4pt]
$\widetilde{k} = \texttt{step-reduction}(\theta, \rho, B, h, a_{\textrm{min}})$
\\[4pt]
$k \sim \operatorname{uniform}(\{\widetilde{k}-1, \, \widetilde{k}, \, \widetilde{k}+1 \})$ \hfill 
\\[4pt]
$(a, b, \_) = \texttt{leapfrog-orbit-selection}(\theta, \rho, B, h, 2^{k})$  \hfill (step size is $2^{-k} \, h$) \\ [4pt]
$(\theta^*, \rho^*, L) = \texttt{leapfrog-index-selection}(\theta, \rho, a, b, h, 2^{k})$   \\ [4pt]
$\widetilde{a} = a$, $\widetilde{b} = b$  \hfill (compute $B^*$) \\ [4pt]
 for $i$ from $\log_2(b-a+1)$ down to $1$:
\\[-12pt]
\null \qquad $m=\lfloor (\widetilde{a}+\widetilde{b})/2 \rfloor $ \\ [-12pt]
\null \qquad if $L \in [\widetilde{a}:m]$:
\\[-12pt]
\null \qquad \qquad $B_{i}^* = 0$ , $\widetilde{b} = m$
\\[-12pt]
\null \qquad else if $L \in [m+1:\widetilde{b}]$:
\\[-12pt]
\null \qquad \qquad $B_{i}^* = 1$, $\widetilde{a} = m+1$ 
\\[4pt]
 for $i$ from $\log_2(b-a+1)+1$ to $M$: 
\\[-12pt]
\null \qquad $B_{i}^* = B_{i}$ 
\\[4pt]
$\widetilde{k}^* = \texttt{step-reduction}(\theta^*, \rho^*, B^*, h, a_{\textrm{min}})$
\\[4pt]
if $k \in \{\widetilde{k}^*-1, \widetilde{k}^*, \widetilde{k}^*+1 \}$:  \hfill (GIST accept/reject step)
\\[-12pt]
\null \qquad  return $  \theta^*$ \hfill 
 \\[2pt]
 else: 
\\[-12pt]
 \null \qquad  return $  \theta$ \hfill 
 \\[4pt]
\vspace*{4pt}
\hrule
\caption{\textit{The no-U-turn sampler in the GIST framework with locally adaptive step size}.   }
\label{algo:adaptive-NUTS}
\end{flushleft}
\end{algorithm}

\begin{algorithm}[t]
\begin{flushleft}
$\texttt{step-reduction}(\theta, \rho, B, h, a_{\textrm{min}})$
\vspace*{2pt}
\hrule
\vspace*{2pt}
\textrm{Inputs:}
\begin{tabular}[t]{ll}
$\theta \in \mathbb{R}^{d}$ & position \\[2pt] 
$\rho \in \mathbb{R}^{d}$ & momentum \\[2pt] 
$B \in \{0, 1\}^M$ & Bernoulli directions \\[2pt]
$h>0$ &  coarse step size  \\[2pt] 
\end{tabular} 
\vspace*{4pt}
\hrule
\vspace*{8pt}
for $i$ from $1$ to $\infty$:  \hfill  (compute step size reduction factor from $(\theta, \rho)$)
\\[-12pt]
\null \qquad $(\_, \_,  \Delta H_{\text{gap}}) = \texttt{leapfrog-orbit-selection}(\theta, \rho, B, h, 2^{i})$ \\ [-12pt]
\null \qquad if $e^{-\Delta H_{\text{gap}}} \ge a_{\textrm{min}}:$ \hfill (if $\Delta H_{\text{gap}}$ within threshold) 
\\[-12pt]
\null \qquad \qquad  return $i$
\caption{\textit{Step reduction algorithm.}}\label{algo:step-reduction}
\end{flushleft}
\end{algorithm}


\subsection{Definition of \texorpdfstring{$B^*$}{BstarDef}.}

\label{sec:Bstar}

Unlike standard NUTS, ensuring the reversibility of step-size-adaptive NUTS requires constructing the Bernoulli process $B^*$ for each transition step.  This process is essential for determining the correct acceptance probability in \eqref{eq:step-size-acceptance-ratio}.  We now describe the procedure to construct this process given $B$, $[a:b]$, and $L$.  An implementation of this procedure is provided in Algorithm~\ref{algo:adaptive-NUTS}.   

We begin by defining a recursive function $(\beta_i(L, a, b))_{i \in [1:\ell]}$ that generates a binary string. If $a \ne b$, we define $\ell \in [1:M]$ such that $b-a+1 = 2^\ell$. This value of $\ell$ is well-defined because the random doubling procedure in NUTS results in an orbit of size $2^{\ell}$ for some $\ell \in [1:M]$.  In this case, the function $(\beta_i(L, a, b))_{i \in [1:\ell]}$ is defined as follows \[
\beta_i(L, a, b) = \begin{cases} \beta_i(L, a'(L), b'(L)) & \text{if  } \;i \in [1:\ell-1], \\
\mathbb{1}_{[m+1:b]}(L) & \text{if  } \;i = \ell,
\end{cases} \]
where $m = \lfloor (a+b)/2 \rfloor$ and 
\[
(a'(L), b'(L)) = 
\begin{cases} (a, m) & \text{if $L \in [a:m]$}, \\
(m +1 , b) & \text{if $L \in [m+1:b]$}.
\end{cases} 
\] 
If $a = b$,  $(\beta_i(L, a, b))_{i \in \emptyset}$ is defined as the empty binary string. Next, we define $B^*_i$ as follows:  
\begin{equation} 
B_i^*  = \begin{cases}
\beta_i(L, a,b) & \text{if } i \in [1 : \ell] \;, \\
B_i &  \text{if } i \in [\ell+1 : M] \;.
\end{cases}
\end{equation}
It can be shown by induction that for all $L \in [a : b]$ and $k \in \mathbb{N}$, the following equalities hold:  
\[
a^* = a - L \ \textrm{ and } \  b^* = b - L \;,
\]
\vspace*{-4pt}
where
\[
(a, b, \_) = \texttt{leapfrog-orbit-selection}(\theta, \rho, B, h, 2^k) \;,
\]
\[
 (a^*, b^*, \_) = \texttt{leapfrog-orbit-selection}(\theta^*, \rho^*, B^*, h, 2^k) \;,
\] 
and $(\theta^*, \rho^*) = \Phi_{2^{-k}h}^{2^kL}(\theta, \rho)$.

\section{Neal's funnel}
\label{sec:funnel}

\subsection{Target distribution}\label{sec:funnel-setup}

Consider the $(d+1)$-dimensional funnel distribution introduced by Neal \cite{Neal2003Slice}. This distribution is defined over variables $(\omega, x)$ where $\omega \in \mathbb{R}$ and $x = (x_1, \dots, x_{d}) \in \mathbb{R}^{d}$, with density 
\begin{equation} \label{eq:original-funnel}
p(\omega, x_1, \dots, x_d) = \text{normal}(\omega \mid 0, 3) \prod_{i=1}^d \text{normal}(x_i \mid 0, e^{\omega/2}) \;,
\end{equation}
where $\text{normal}(y \mid \mu, \sigma) = \frac{1}{ \sigma \sqrt{2 \pi}} e^{- \frac{1}{2} \left( \frac{y - \mu}{\sigma}\right)^2} $ is the density of a normal distribution with outcome $y \in \R$, location (i.e., mean) $\mu \in \R$, and scale (i.e., standard deviation) $\sigma \in (0, \infty)$. 


\subsection{Pathology with fixed step size}

The funnel distribution serves as a minimal example demonstrating the pathologies found in Bayesian hierarchical models \cite{Neal2003Slice, betancourt2013hamiltonianmontecarlohierarchical}.  
 As discussed in these references and as we will briefly review, this distribution exhibits multiscale behavior typical of many hierarchical Bayesian models, which often hinders the mixing properties of samplers that use a fixed step size, including NUTS.

In the region $\{ \omega > 0 \}$, the potential energy function $U(\omega, x_1, \dots, x_d)  = -\log(p(\omega, x_1, \dots, x_d))$ has low curvature. Efficient exploration of this region using NUTS requires a relatively large leapfrog step size.  Conversely, in the region $\{ \omega < 0 \}$, the potential energy has high curvature, necessitating a very small step size for the leapfrog integrator used by NUTS to remain stable.  Specifically, for the leapfrog integrator to be stable, the step size $h$ must satisfy $\sqrt{L}h \leq 2$, where $L$ is the local Lipschitz constant of $(D U)$ in a neighborhood of the leapfrog path.  This step size restriction is due to the fact that the leapfrog integrator is only conditionally stable \cite{LeRe2004}.

This local Lipschitz constant $L$ can be upper bounded by the spectral radius of the Hessian matrix $D^2 U$.  In Figure~\ref{fig:cutoff-log-sigma-h-.25}, we use this upper bound to estimate the region of the funnel distribution where the leapfrog integrator becomes unstable. Along the $\omega$-axis,  \[
D^2 U(\omega, 0, \dots, 0) = \operatorname{diag}(1/9, e^{-\omega}, \dots, e^{-\omega}) 
\] and hence, its spectral radius is equal to $\max(1/9, e^{-\omega})$. Using this estimate in the leapfrog stability condition, we find that stability requires $e^{-\omega/2} h \le 2$. Therefore, to explore the region $\omega \in (- \beta, \infty)$, the step size must be no larger than $h = 2 e^{-\beta/2}$. This means that exploring the region $\{ \omega <0 \} $ with sufficiently high precision requires exponentially smaller step sizes. 

As a result, no fixed-step-size NUTS algorithm can stably and efficiently explore both the $\{ \omega > 0 \}$ and the $\{ \omega < 0 \}$ regions of the funnel distribution. Since each of these regions contains half the total mass of the distribution, the difficulties posed by these regions will likely be encountered during sampling, and if the leapfrog step size is fixed, the smallest necessary step size must be used throughout state space, which can impair mixing in terms of computational cost.  Therefore, this funnel distribution strongly motivates and rigorously tests our step-size-adaptive NUTS algorithm.  

\subsection{Results with step size adaptation}

\begin{figure}[t]
    \centering
    \begin{subfigure}[b]{0.46\textwidth}
        \includegraphics[width=\textwidth]{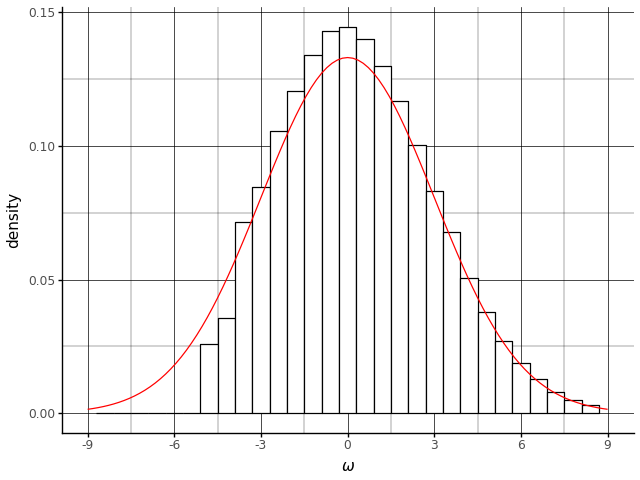}
        \caption{Histogram of $\omega$ for NUTS at $h=0.25$}
        \label{fig:log-sigma-marginal-histogram}
    \end{subfigure}
    \hfill 
    \begin{subfigure}[b]{0.46\textwidth}
        \includegraphics[width=\textwidth]{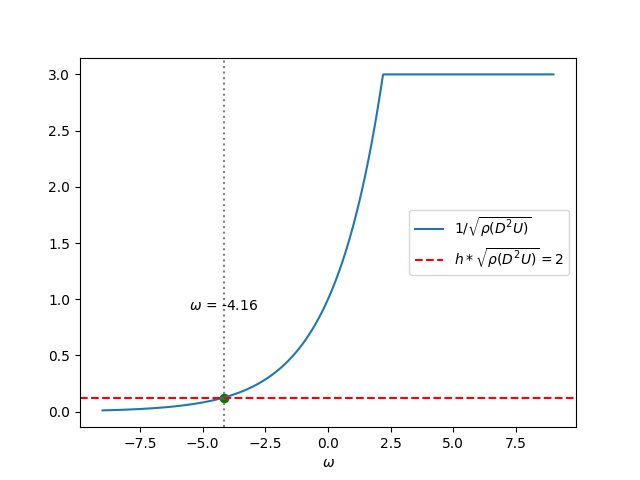}
        \caption{Stability region of leapfrog for $h=.25$}
        \label{fig:cutoff-log-sigma-h-.25}
    \end{subfigure}
    %
    \begin{subfigure}[b]{0.46\textwidth}
        \includegraphics[width=\textwidth]{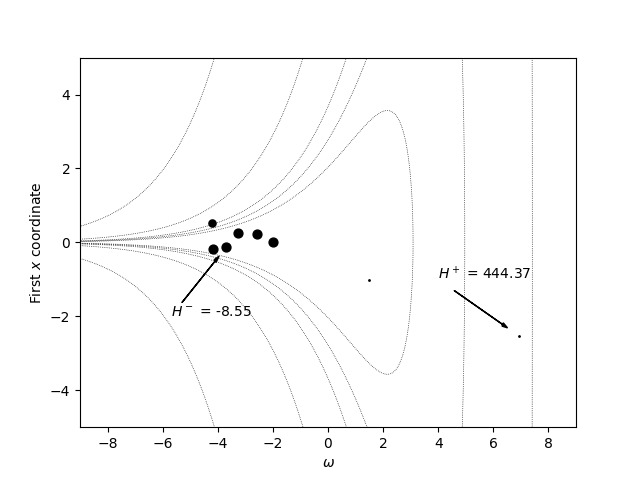}
        \caption{NUTS path with initial $\omega = -2.0$}
        \label{fig:orbit-initial-log-sig-minus2}
    \end{subfigure}
    \hfill
    \begin{subfigure}[b]{0.46\textwidth}
        \includegraphics[width=\textwidth]{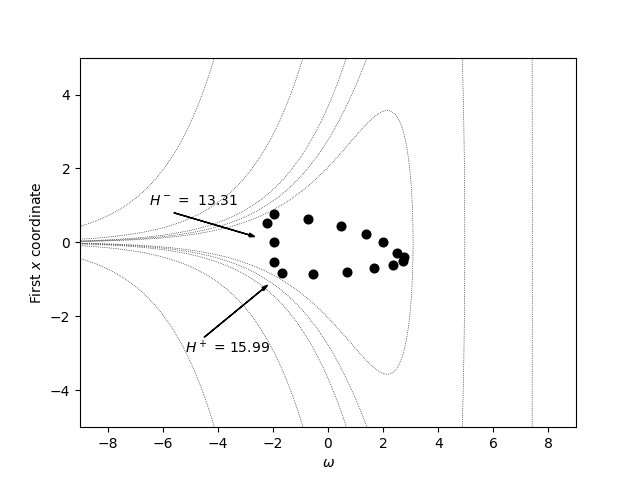}
        \caption{NUTS path with initial $\omega = 2.0$}
        \label{fig:orbit-initial-log-sig-2}
    \end{subfigure}

    \vspace{0.6cm}
    
    \caption{\textit{For NUTS with fixed step size $h=1/4$: (a) Histogram of $\omega$ from 250,000 draws, showing a bottleneck at around $\omega = -4$. (b) Predicted bottleneck based on the stability argument in Section \ref{sec:funnel-setup}, matching the observed bottleneck.  (c) NUTS path initialized at $(-2, 0,\dots,0)$ with initial velocity $(-2, 1,\dots,1)$ rescaled to have norm $\sqrt{11}$, showing instability near $\omega=-4$. (d) NUTS path initialized at $(2, 0,\dots,0)$ with the same initial velocity, showing stability throughout the path. 
     In (c) and (d), the initial velocity is rescaled to have a norm of $\sqrt{11}$, typical for an $11$-dimensional standard normal random variable.  The size of the dots is proportional to $-H(\theta, \rho)$, with smaller dots representing higher energy points and lower Boltzmann weight.}}
    \label{fig:cutoff-value-for-log-sigma}
\end{figure}

Figure~\ref{fig:log-sigma-marginal-histogram} shows a histogram of 250,000 samples produced by NUTS with fixed step size $h= 1/4$ applied to the funnel distribution described above. These samples exhibit a sharp drop near $\omega = -4$ with essentially no samples for values of $\omega$ less than $-4$.  This phenomenon is due to the instability of the leapfrog integrator in this region, as depicted in Figure~\ref{fig:cutoff-log-sigma-h-.25}.

In contrast, Figure~\ref{fig:metropolizable-adaptive-nuts} shows a histogram for 250,000 samples generated by our step-size-adaptive NUTS algorithm given in Algorithm \ref{algo:adaptive-NUTS} with $h = 1/2$, $M=10$, and $a_{min} = 0.7$. Unlike the samples produced by NUTS with a fixed step size, these samples do not show any drop-off. Our step-size-adaptive NUTS algorithm effectively generates samples even for large negative values of $\omega$, demonstrating its effectiveness in this challenging stiff scenario.  

\begin{figure}
\centering
\begin{subfigure}{0.45\textwidth} 
\includegraphics[scale=0.45]{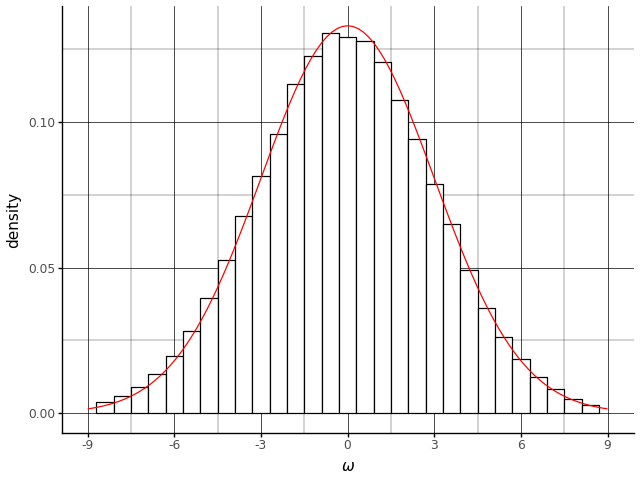} 
\caption{\textit{Histogram of $\omega$ for Algorithm \ref{algo:adaptive-NUTS}}}
\label{fig:metropolizable-adaptive-nuts-log-sigma}
\end{subfigure}
\begin{subfigure}{0.45\textwidth} 
\includegraphics[scale=0.45]{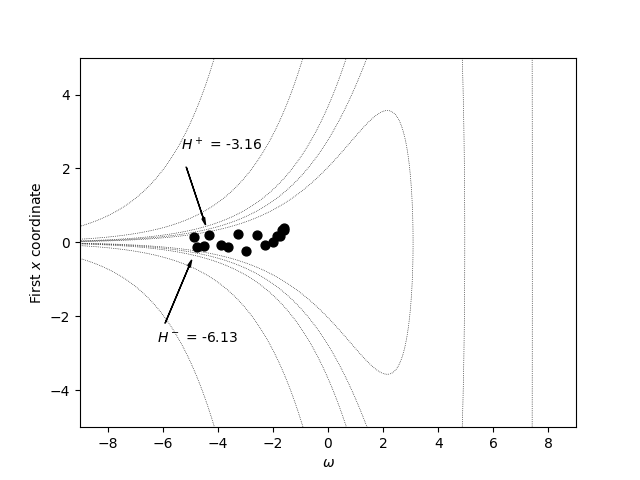} 
\caption{\textit{Example intermediate orbit in Algorithm \ref{algo:adaptive-NUTS}}}
\label{fig:example-orbit} 
\end{subfigure}
\caption{\textit{(a) Histogram of 250,000 samples generated by Algorithm \ref{algo:adaptive-NUTS} for the funnel distribution defined by \eqref{eq:original-funnel} with $h = 1/2$.  (b) Example orbit from the step size selection search, initialized like the one in Figure~\ref{fig:orbit-initial-log-sig-minus2}, but produced using $\Phi_{h/2}^2$ instead of $\Phi_{h}$ for $h=1/4$, showing greater stability compared to the leapfrog path shown in Figure~\ref{fig:orbit-initial-log-sig-minus2}.}}
\label{fig:metropolizable-adaptive-nuts}
\end{figure}

\section{High-dimensional standard normal}

\label{sec:highDnormal}

Consider a $d$-dimensional standard normal target distribution defined over the variables $x = (x_1, \dots, x_{d}) \in \mathbb{R}^{d}$, with density 
\begin{equation} \label{eq:highDnormal}
p(x_1, \dots, x_d) = \prod_{i=1}^d \text{normal}(x_i \mid 0, 1) \;.
\end{equation}
In this scenario, energy errors can accumulate with increasing dimension $d$, creating bottlenecks even when the leapfrog step size meets the stability requirement $h \in (0,2)$.  However, due to a concentration phenomenon, these energy errors typically decrease when the chain enters the typical set where the target distribution concentrates.   Specifically, outside the typical set, the leapfrog energy error scales as $O(h^2 d)$.  Remarkably, within the typical set, this scaling improves to $O(h^4 d)$ \cite{BePiRoSaSt2013}.   As illustrated in Figure~\ref{fig:highDnormal}, this difference in scaling is seamlessly detected by step-size-adaptive NUTS, demonstrating its effectiveness in this challenging high-dimensional scenario.

More precisely, the energy error after $L$ leapfrog steps is given by $$\Delta H(\theta, \rho) \ =\ \frac{1}{2} |\rho^*|^2 + U(\theta^*) - \frac{1}{2} |\rho|^2 - U(\theta) \ =
\ \frac{h^2}{8}\left( |\theta^*|^2-|\theta|^2\right)\,$$ 
where $(\theta^*,\rho^*)\ = \Phi_h^L(\theta,\rho)$.  From the mode, $\Delta H = \mathcal{O}(h^2d)$, implying that $h=\mathcal{O}(d^{-1/2})$ is required for non-degenerate acceptance in the leapfrog-index-selection step. However, if $\theta$ starts from stationarity,  then $|\theta|^2$ and $|\theta^*|^2$ concentrate around $d$ with fluctuations of order $\mathcal O(d^{1/2})$. This leads to $\Delta H = \mathcal{O}(h^2d^{1/2})$, suggesting that $h=\mathcal{O}(d^{-1/4})$ is more appropriate in this scenario.


\begin{figure}
\centering
\begin{subfigure}{0.45\textwidth} 
\includegraphics[scale=0.45]{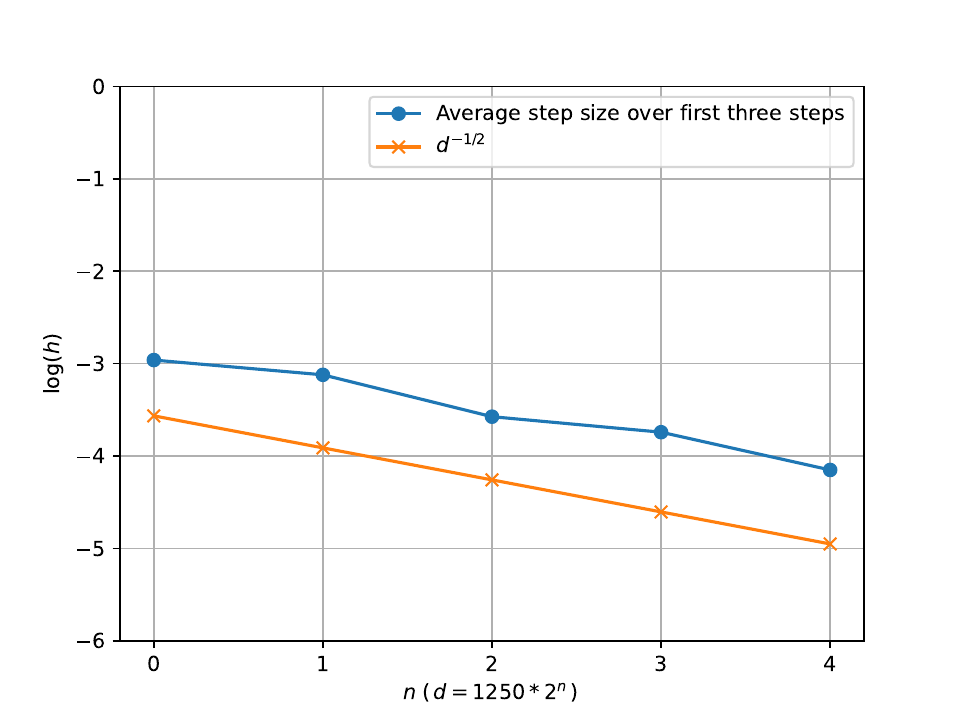} 
\caption{\textit{away from the high-density region}}
\label{fig:dpt5}
\end{subfigure}
\begin{subfigure}{0.45\textwidth} 
\includegraphics[scale=0.45]{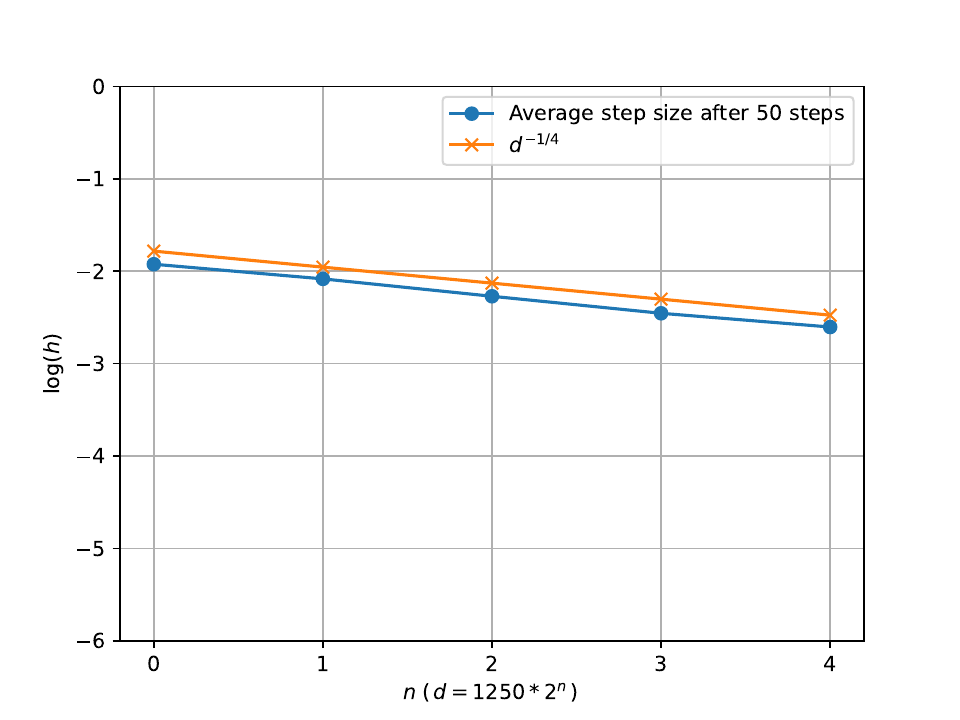} 
\caption{\textit{within the high-density region}}
\label{fig:dpt25} 
\end{subfigure}
\caption{\textit{(a) This plot shows the  average step size over the first three steps of 800 chains produced by Algorithm \ref{algo:adaptive-NUTS} with $h = 1/2$, $M=10$, and $a_{min} = 0.7$ each starting at the mode of the standard Gaussian distribution defined in \eqref{eq:original-funnel} with $h = 1/2$.  In this initial regime, the adaptive step size used by Algorithm \ref{algo:adaptive-NUTS} scales as $O(d^{-1/2})$. (b)  This plot shows the average step size after 50 steps, by which time the chains have  reached the high-density region of the target distribution. In this regime, the adaptive step size scales as $O(d^{-1/4})$.}}
\label{fig:highDnormal}
\end{figure}

\appendix 

\section{Reversibility of step-size-adaptive NUTS} \label{app:proof}

We have defined the conditional probability for step-size-adaptive NUTS in \eqref{eq:adapt-nuts-conditional} and  the transformation $G$ in \eqref{eq:adapt-nuts-involution}. To complete the verification that step-size-adaptive NUTS is an instance of GIST, we now need to show that the transformation $G$ is a measure-preserving involution and confirm that the acceptance ratio in \eqref{eq:step-size-acceptance-ratio} matches  the corresponding GIST acceptance probability.  

\subsection{Notation}
Let $B$ be a binary string of length $M$. Denote by $(B_i)_{i \in [1:\ell]}$ the first $\ell$ bits of $B$. 
Define functions $a_\ell((B_i)_{i \in [1:\ell]})$ and $b_\ell((B_i)_{i \in [1:\ell]})$ inductively as follows:
\begin{align}
a_0 &= 0, \\
b_0 &= 0 , \\
\tilde{a}_\ell((B_i)_{i \in [1:\ell]}) &= a_{\ell-1}((B_i)_{i \in [1:\ell-1]}) + (-1)^{B_\ell} \cdot 2^{\ell-1} \, , \\
\tilde{b}_\ell((B_i)_{i \in [1:\ell]}) &= b_{\ell-1}((B_i)_{i \in [1:\ell-1]}) + (-1)^{B_\ell} \cdot 2^{\ell-1}\, , \\
a_\ell((B_i)_{i \in [1:\ell]}) &= \min(a_{\ell-1}((B_i)_{i \in [1:\ell-1]}), \tilde{a}_\ell((B_i)_{i \in [1:\ell]}))\, , \label{eq:inductive-def-of-a-j} \\
b_\ell((B_i)_{i \in [1:\ell]}) &= \max(b_{\ell-1}((B_i)_{i \in [1:\ell-1]}), \tilde{b}_\ell((B_i)_{i \in [1:\ell]}))\,. \label{eq:inductive-def-of-b-j}
\end{align}
Note that 
\[b_\ell((B_i)_{i \in [1: \ell]}) - a_{\ell}((B_i)_{i \in [1:\ell]}) + 1 = 2^{\ell}\]
and 
\begin{multline*}    
[a_{\ell+1}((B_i)_{i \in [1:\ell+1]}) : b_{\ell+ 1}((B_i)_{i \in [1:\ell+1]}) ]  \\= [a_{\ell}((B_i)_{i \in [1:\ell]}) : b_{\ell}((B_i)_{i \in [1:\ell]}) ] \cup [\tilde{a}_{\ell+1}((B_i)_{i \in [1:\ell+1]}) : \tilde{b}_{\ell+ 1}((B_i)_{i \in [1:\ell+1]}) ] \, , 
\end{multline*}
with one set being the top half and the other set being the bottom half. 

Let \begin{equation}
\begin{aligned}
& (a_T, b_T, \_) = (a_T(\theta, \rho, B, h, R), b_T(\theta, \rho, B, h, R), \_) \\
& \qquad  = \texttt{leapfrog-orbit-selection}(\theta, \rho, B, h, R).
\end{aligned}
\end{equation}
From the definition of \texttt{leapfrog-orbit-selection}, we obtain
\begin{align*}
a_T(\theta, \rho, B, h, R) &= a_{\ell_{min}} ((B_i)_{i \in [1:\ell_{min}]}) \, , \\
b_T(\theta, \rho, B, h, R) &= b_{\ell_{min}}((B_i)_{i \in [1:\ell_{min}]}) \, ,
\end{align*}
where $\ell_{min} = \ell_{min}(\theta, \rho, (B_i)_{i \in [1 : M]}, h, R)$ is defined by 
\begin{align*}
& \ell_{min}( \theta, \rho, B, h, R)
\\
&  \quad = \min( \big \{\ell \in [1:M] \mid \mathbb{1}_{\text{U-turn}}(a_\ell((B_i)_{i \in [1:\ell]}), b_\ell((B_i)_{i \in [1:\ell]}), \theta, \rho, h, R) = 1  \\ & \hspace{1 cm} \text{  or     } \hspace{.5 cm} \mathbb{1}_{\text{sub-U-turn}}(\tilde{a}_{\ell+1}((B_i)_{i \in [1:\ell+1]}), \tilde{b}_{\ell+1} ((B_i)_{i \in [1:\ell+1]}), \theta, \rho, h, R) = 1 \big \}) \wedge M 
\end{align*}
with $\mathbb{1}_{\text{U-turn}}$ and $\mathbb{1}_{\text{sub-U-turn}}$ defined as in \eqref{eq:uturn} and \eqref{eq:subuturn}, and $\min(\emptyset) = \infty$. 

In the proof of correctness we will make use of following shift invariance of the U-turn and sub-U-turn conditions:
\begin{align} \label{eq:u-turn-symmetry}
\mathbb{1}_{\text{U-turn}}(\Phi_{h/R}^{RL}(\theta, \rho), a -L, b-L, h, R) &= \mathbb{1}_{\text{U-turn}}(\theta, \rho, a, b, h, R) \, , \\
\mathbb{1}_{\text{sub-U-turn}}(\Phi_{h/R}^{RL}(\theta, \rho), a -L, b-L, h, R) &= \mathbb{1}_{\text{sub-U-turn}}(\theta, \rho, a, b, h, R)\,.
\end{align}
These follow immediately from the definition of the U-turn set in \eqref{eq:uturn}.

The functions $a_\ell$ and $b_\ell$ admit explicit expressions in terms of $(B_i)_{i \in [1:\ell]}$. Indeed, one observes 
\begin{align}
a_\ell( (B_i)_{i \in [1:\ell]}) &= a_{\ell-1}((B_i)_{i \in [1:\ell-1]}) - 2^{\ell-1} \mathbb{1}_{\{B_\ell =1\}}  = a_{\ell-1}((B_i)_{i \in [1:\ell-1]}) -  2^{\ell-1} B_\ell \, , \\
b_\ell((B_i)_{i \in [1:\ell]}) &= b_{\ell-1}((B_i)_{i \in [1: \ell-1]}) + 2^{\ell-1} \mathbb{1}_{\{B_\ell =0\}}  = b_{\ell-1}((B_i)_{i \in [1:\ell-1]}) + 2^{\ell-1}  (1-B_\ell) \, .
\end{align}

Together with the condition $a_0 = b_0 = 0$ the above yields
\begin{align}
a_\ell((B_i)_{i \in [1:\ell]}) = -\sum_{j=1}^\ell 2^{j-1}B_j \, , \label{eq:explicit-def-of-a-j}\\
b_\ell((B_i)_{i \in [1:\ell]}) = \sum_{j=1}^\ell 2^{j-1}(1-B_j) \label{eq:explicit-def-of-b-j} \, .
\end{align}

We recall the definitions of $(\beta_i(L, a, b))_{i \in [1:\ell]}$ and $B^*$ appearing in Section \ref{sec:adaptNUTS}. 
To review, for any $L,a,b \in \mathbb{Z}$ such that $L \in [a:b]$ and $b-a+1 = 2^{\ell}$ the function $(\beta_i(L, a, b))_{i \in [1:\ell]}$ is defined recursively by
\[
\beta_i(L, a, b) = \begin{cases} \beta_i(L, a'(L), b'(L)) & \text{if  } \;i \in [1:\ell-1] \\
\mathbb{1}_{[m+1:b]}(L) & \text{otherwise.}
\end{cases} 
\]
where $(\beta_i(L,a,b))_{i \in \emptyset}$ is defined as the empty string to serve as a base case. Define $B^*$ as 
\begin{equation} 
B_i^* = B_i^*(L,a,b,B) = \begin{cases}
\beta_i(L, a,b) & i \in [1 : \ell] \\
B_i & i \in [\ell+1 : M]
\end{cases}
\end{equation}
where above we have emphasized the dependence of $B^*$ on $L,a,b,B$. For brevity we suppress this explicit dependence whenever possible. Figure \ref{fig:example-beta} shows an example of the computation of $(\beta_i(L, a, b))_{i \in [1:\ell]}$.

In the above notation, we can write the distribution $P(a, b, \ell \mid \theta, \rho, B, h, R)$ appearing in \eqref{eq:NUTS-kernel} as 
\begin{equation} \label{eq:orbit-selection-kernel-explicit}
P(a,b, \ell \mid \theta, \rho, B, h, R)  = \delta_{\ell_{min}(\theta, \rho, B, h, R)}(\ell) \; \delta_{a_{\ell}((B_i)_{i \in [1:\ell]})}(a) \; \delta_{b_{\ell}((B_i)_{i \in [1:\ell]})}(b).
\end{equation}

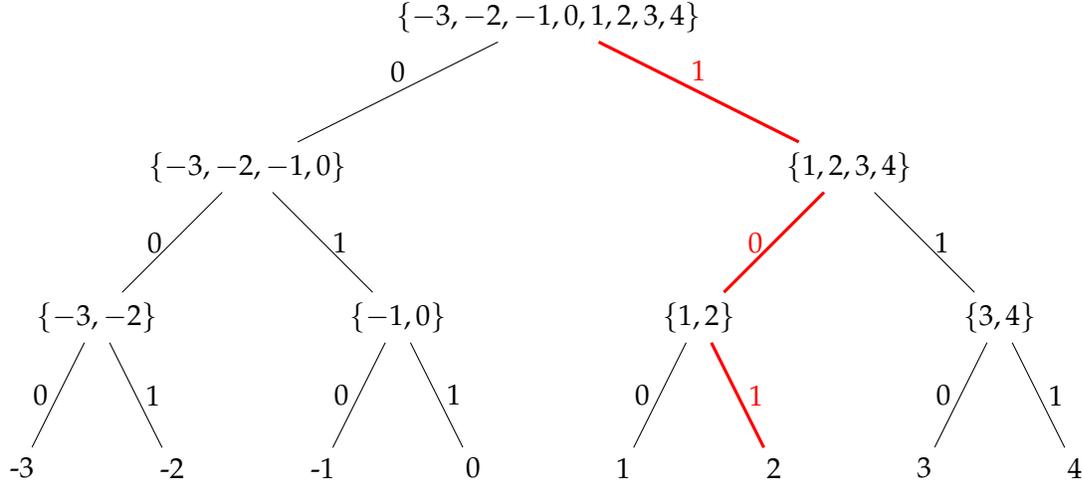
\begin{figure} 
\begin{tikzpicture} [
    level 1/.style={sibling distance = 8cm, level distance = 2cm},
    level 2/.style={sibling distance = 4cm},
    level 3/.style={sibling distance = 2cm},
    emph/.style={edge from parent/.style={red, very thick, draw}},
    norm/.style={edge from parent/.style={black, thin, draw}}
    ]

    \node{$\{-3,-2,-1,0,1,2,3,4\}$}
    child[norm] { node {$\{-3,-2,-1,0\}$}
        child[norm] { node {$\{-3,-2\}$}
            child { node {-3} edge from parent node[midway, left] {0} }
            child { node {-2} edge from parent node[midway, right] {1} }
            edge from parent node[midway, left]{0}
        }
        child[norm] { node {$\{-1,0\}$}
            child { node {-1} edge from parent node[midway, left] {0} }
            child { node {0} edge from parent node[midway, right] {1} }
            edge from parent node[midway, right]{1}
        }
        edge from parent node[midway, above]{0}
    }
    child[emph] { node {$\{1,2,3,4\}$}
        child[emph] { node {$\{1,2\}$}
            child[norm] { node {1} edge from parent node[midway, left] {0} }
            child[emph] { node {2} edge from parent node[midway, right] {1} }
            edge from parent node[midway, left]{0}
        }
        child[norm] { node {$\{3,4\}$}
            child { node {3} edge from parent node[midway, left] {0} }
            child { node {4} edge from parent node[midway, right] {1} }
            edge from parent node[midway, right]{1}
        }
        edge from parent node[midway, above] {1}
    };

\end{tikzpicture}
\caption{Computation of $(\beta_i(2,-3,4))_{i \in [1:3]}$. Here, $(\beta_i(2,-3,4))_{i \in [1:3]} = (1,0,1)$. }
\label{fig:example-beta}
\end{figure}

\subsection{Proof of reversibility}
\begin{theorem}
The transition kernel for the Markov Chain generated by Algorithm \ref{algo:adaptive-NUTS} is reversible with respect to the target distribution.
\end{theorem}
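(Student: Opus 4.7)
The strategy is to apply Theorem~1 of \cite{BouRabeeCarpenterMarsden2024}, which states that any GIST sampler is reversible with respect to its enlarged target $f$. Step-size-adaptive NUTS has been set up in Section~\ref{sec:adaptNUTS} with auxiliary variable $\alpha = (k, B, \ell, a, b, L)$, conditional density $p_{\textrm{adaptNUTS}}$ in \eqref{eq:adapt-nuts-conditional}, and candidate involution $G$ in \eqref{eq:adapt-nuts-involution}, so the proof reduces to (i) showing $G$ is a measure-preserving involution of the enlarged state space with respect to $m^{2d}$ tensored with counting measures on the discrete components, and (ii) checking that the explicit acceptance ratio \eqref{eq:step-size-acceptance-ratio} agrees with the GIST formula \eqref{eq:GISTap} under $p_{\textrm{adaptNUTS}}$. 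The marginal $\theta$-chain then inherits $e^{-U}$-invariance in the usual Gibbs-sampler fashion because $\rho$, $k$, $B$ and the remaining auxiliaries are refreshed at every iteration.

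I would verify (i) componentwise. The depth $\ell$ and the step-size reduction $k$ are preserved outright, and $(a,b,L) \mapsto (a-L,b-L,-L)$ is self-inverse by inspection. The $(\theta,\rho)$ update is $\Phi_{2^{-k}h}^{2^k L}$; iterating $G$ composes it with $\Phi_{2^{-k}h}^{-2^k L}$, returning to $(\theta,\rho)$ by the standard leapfrog reversibility identity $\mathcal{S} \circ \Phi_{h'} \circ \mathcal{S} \circ \Phi_{h'} = \operatorname{id}$ (internalized in Algorithm~\ref{algo:leapfrog}'s treatment of negative step counts), and leapfrog is volume-preserving. The non-obvious piece is the map $B \mapsto B^*$. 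Since $B^*_i = B_i$ for $i \in [\ell+1:M]$, involutivity on the bit string amounts to the identity $\beta_i(-L, a-L, b-L) = B_i$ for $i \in [1:\ell]$ whenever $B$ produces $(a,b)$ through the orbit selection and $L \in [a:b]$. I would prove this by induction on $\ell$ using the recursion for $\beta_i$ together with the explicit formulas \eqref{eq:explicit-def-of-a-j}--\eqref{eq:explicit-def-of-b-j}: the outermost bit $B_\ell$ records which half of the current orbit is added by the $\ell$-th doubling, while $\beta_\ell(L,a,b) = \mathbb{1}_{[m+1:b]}(L)$ records which half contains $L$; the midpoint of $[a-L,b-L]$ is $m-L$, so after shifting, the problem reduces on the half-orbit containing $L$ and the inductive hypothesis applies. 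The same induction, now invoking the shift-invariance \eqref{eq:u-turn-symmetry} of the U-turn and sub-U-turn indicators, gives the companion claim asserted in Section~\ref{sec:Bstar} that $\texttt{leapfrog-orbit-selection}(\theta^*, \rho^*, B^*, h, 2^k)$ returns $(\ell, a-L, b-L)$, which makes $G$ well-defined on the support of $p_{\textrm{adaptNUTS}}$. Counting-measure preservation on $\{0,1\}^M$ then follows because the above involution is a bijection of $\{0,1\}^\ell$ at fixed $(a,b,L,\ell)$ leaving the remaining bits untouched.

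For (ii), I would substitute $p_{\textrm{adaptNUTS}}$ into \eqref{eq:GISTap} and cancel terms. The $1/2^M$ prefactors in $p_{\textrm{NUTS}}$ cancel. The orbit-selection densities $P$ are delta functions by \eqref{eq:orbit-selection-kernel-explicit}, and the consistency claim from the previous paragraph makes both equal to $1$, so the $P$-ratio is $1$. For the categorical index-selection $Q$ in \eqref{eq:categorical}, the leapfrog semigroup identity $\Phi_{h/R}^{Rj} \circ \Phi_{h/R}^{RL} = \Phi_{h/R}^{R(j+L)}$ together with the substitution $i = j+L$ shows the normalizing sums at $(\theta^*,\rho^*,[a-L:b-L])$ and $(\theta,\rho,[a:b])$ coincide. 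What is left in the $Q$-ratio is $e^{-H(\theta,\rho)}/e^{-H(\theta^*,\rho^*)} = e^{\Delta H}$, which exactly cancels the $e^{-\Delta H}$ in \eqref{eq:GISTap}. Only the $p_k$-ratio survives, matching \eqref{eq:step-size-acceptance-ratio}.

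The principal obstacle is the combinatorial bookkeeping in the twin inductions that prove $\beta_i(-L, a-L, b-L) = B_i$ and the orbit-selection consistency simultaneously. Both rest on the same pair of ingredients: shift-invariance of the midpoint $\lfloor (a+b)/2 \rfloor$ and shift-invariance \eqref{eq:u-turn-symmetry} of the U-turn and sub-U-turn indicators. The bookkeeping is intricate because the recursion on $\ell$ for $\beta$ and the recursion through $\ell_{\min}$ inside $P$ must be kept in sync, particularly at the inductive step that peels off the last doubling. Once these two inductive lemmas are in place, the remaining ingredients are standard (leapfrog volume preservation and reversibility, cancellation of $e^{-\Delta H}$ against the Boltzmann $Q$-ratio), and Theorem~1 of \cite{BouRabeeCarpenterMarsden2024} delivers the desired reversibility.
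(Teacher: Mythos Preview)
Your overall strategy matches the paper's: invoke Theorem~1 of \cite{BouRabeeCarpenterMarsden2024} after verifying that $G$ is a measure-preserving involution and that the GIST acceptance ratio collapses to \eqref{eq:step-size-acceptance-ratio}. Your treatment of part (ii) is correct and is exactly Corollary~\ref{cor:NUTS-kernel-reversibility}: the $1/2^M$ factors and the orbit-selection $P$-factors cancel, and the Boltzmann $Q$-ratio absorbs the $e^{-\Delta H}$, leaving only the $p_k$ ratio. For the involutivity of $B\mapsto B^*$ you propose a direct induction on $\ell$, whereas the paper proves Lemma~\ref{lem:initial-point-symmetry} first and then uses the explicit formula \eqref{eq:explicit-def-of-a-j} together with uniqueness of binary expansions (Remark~\ref{rmk:binary-expansions}); both arguments work, yours being more hands-on and the paper's a little slicker.

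There is, however, a real gap in your sketch of the orbit-selection consistency, i.e.\ the claim $\ell_{\min}(\theta^*,\rho^*,B^*,h,R)=\ell_{\min}(\theta,\rho,B,h,R)$ (Lemma~\ref{lem:j-min-symmetry}). You assert this follows from ``the same induction'' together with shift-invariance \eqref{eq:u-turn-symmetry}, but for $j<\ell$ the intermediate orbit $[a_j((B^*_i)):b_j((B^*_i))]$ of the $B^*$-procedure is \emph{not} a translate of the intermediate orbit $[a_j((B_i)):b_j((B_i))]$ of the $B$-procedure (for instance, with $\ell=2$, $B=(0,1)$, $L=-2$ one gets $B^*=(0,0)$, and the level-$1$ orbits are $\{-2,-1\}$ versus $\{0,1\}$ in absolute indices). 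Hence shift-invariance alone cannot transfer the U-turn checks. What the paper actually does is introduce the binary tree $\mathcal{T}([a:b])$ of all sub-intervals implicitly inspected by the sub-U-turn recursion, and then prove two separate facts: every level-$j$ orbit of the $B^*$-procedure, shifted by $L$, is a node of $\mathcal{T}([a_T:b_T])$ (Lemma~\ref{lem:beta-downward-closed} and Corollary~\ref{cor:containment-of-partial-trees}), and every node of $\mathcal{T}([a_T:b_T])\setminus\{[a_T:b_T]\}$ is U-turn-free because the forward procedure reached depth $\ell_{\min}$ without stopping (Lemmas~\ref{lem:sub-u-turn-free-equivalence}, \ref{lem:strong-minimality-of-ell-min}, \ref{lem:strict-descendants-of-top-sub-u-turn-free}). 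This cross-referencing between explicit U-turn checks in one run and implicit sub-U-turn checks in the other is the missing ingredient in your outline; you rightly flag it as ``the principal obstacle,'' but the resolution requires more machinery than the $\beta$-recursion and shift-invariance that you list.
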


\begin{proof}
By Theorem 1 in \cite{BouRabeeCarpenterMarsden2024}, we need only establish the function 
\begin{equation} 
G: (\theta, \rho, k, B, \ell, a, b, L) \mapsto (\Phi_{2^{-k} h}^{2^k L}(\theta, \rho), k, B^*,\ell, a-L, b-L, -L) \;.
\end{equation}
appearing in \eqref{eq:adapt-nuts-involution} is a measure-preserving involution and verify (\ref{eq:step-size-acceptance-ratio}) is the corresponding GIST-acceptance probability. 

$G$ is not defined on all of $\mathbb{R}^{2d} \times \mathbb{N} \times \{0, 1\}^M \times [1:M] \times \mathbb{Z}^3$, but rather is defined on the smaller domain $\mathbb{R}^{2d} \times \mathbb{N} \times \mathcal{D}$ where 
\[\mathcal{D} = \{ (B, \ell, a, b, L) \in \{0,1\}^M \times [1:M] \times \mathbb{Z} \mid a = a_{\ell}((B_i)_{i \in [1:\ell]}), \; b = b_\ell((B_i)_{i \in [1: \ell]}),\; a \leq L \leq b \}.\]
This domain contains the support of the joint distribution for Algorithm \ref{algo:adaptive-NUTS}. Consequently, the resulting Markov Chain will always take values in this restricted set and this restriction has no impact on the definition of Algorithm \ref{algo:adaptive-NUTS}. 

We show $G$ is a measure-preserving involution in Lemma \ref{lem:G-measure-preserving}. 
The acceptance probability  (\ref{eq:step-size-acceptance-ratio}) follows immediately from Corollary \ref{cor:NUTS-kernel-reversibility} of Lemma \ref{lem:orbit-selection-symmetry}. Combining these facts implies the transition kernel of the Markov Chain generated by Algorithm \ref{algo:adaptive-NUTS} is reversible by Theorem 1 in \cite{BouRabeeCarpenterMarsden2024}.
\end{proof}

\begin{lemma} \label{lem:G-measure-preserving}
The map $G: \mathbb{R}^{2d} \times \mathbb{Z} \times \mathcal{D} \to \mathbb{R}^{2d} \times \mathbb{Z} \times \mathcal{D}$ is a measure-preserving involution
\end{lemma}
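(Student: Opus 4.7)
The plan is to verify, in order, that $G$ maps $\mathbb{R}^{2d} \times \mathbb{N} \times \mathcal{D}$ into itself, that $G \circ G$ is the identity on this domain, and that $G$ preserves the product of Lebesgue measure on $\mathbb{R}^{2d}$ with counting measure on $\mathbb{N} \times \mathcal{D}$. Writing $(\theta^*, \rho^*, k^*, B^*, \ell^*, a^*, b^*, L^*) = G(\theta, \rho, k, B, \ell, a, b, L)$, so $k^* = k$, $\ell^* = \ell$, $a^* = a-L$, $b^* = b-L$, $L^* = -L$, $(\theta^*, \rho^*) = \Phi_{2^{-k}h}^{2^k L}(\theta, \rho)$, and $B^*$ is the string from Section~\ref{sec:Bstar}, I would first handle well-definedness. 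The condition $L^* \in [a^*:b^*]$ is immediate from $L \in [a:b]$. The key algebraic identities $a_\ell((B^*_i)_{i \in [1:\ell]}) = a - L = a^*$ and $b_\ell((B^*_i)_{i \in [1:\ell]}) = b - L = b^*$ I would prove by induction on $\ell$, using the explicit formulas \eqref{eq:explicit-def-of-a-j}--\eqref{eq:explicit-def-of-b-j} together with the recursive definition of $\beta_i$ in Section~\ref{sec:Bstar}.

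The remaining ingredient for membership in $\mathcal{D}$ is $\ell_{\min}(\theta^*, \rho^*, B^*, h, 2^k) = \ell$. Here I would apply the shift invariance of the (sub-)U-turn indicators in \eqref{eq:u-turn-symmetry} along the whole chain of subdivisions: at each doubling level $i \in [1:\ell]$, the relevant subintervals associated with $(\theta^*, \rho^*, B^*)$ are the shifts by $-L$ of the subintervals associated with $(\theta, \rho, B)$, so the indicators coincide and the first level at which \texttt{leapfrog-orbit-selection} stops is the same.

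For the involution identity on the continuous coordinates,
\[
\Phi_{2^{-k}h}^{2^k L^*}(\theta^*, \rho^*) \;=\; \Phi_{2^{-k}h}^{-2^k L} \circ \Phi_{2^{-k}h}^{2^k L}(\theta, \rho) \;=\; (\theta, \rho),
\]
using the group property of the leapfrog flow. The discrete coordinates satisfy $(a^{**}, b^{**}, L^{**}) = (a^* - L^*, b^* - L^*, -L^*) = (a, b, L)$, and $k^{**} = k$, $\ell^{**} = \ell$ trivially. The nontrivial part is $(B^*)^* = B$, which I would establish by induction on $\ell$: the bit $\beta_i(L^*, a^*, b^*)$ records which half of the $i$-th subdivision of $[a^*:b^*]$ contains $L^* = -L$, and by the explicit description \eqref{eq:explicit-def-of-a-j}--\eqref{eq:explicit-def-of-b-j} this is exactly the condition that determines $B_i$ in the original doubling from $(\theta, \rho)$. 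The bits at $i \in [\ell+1:M]$ are unchanged by definition.

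For measure preservation, the continuous factor is volume-preserving because $\Phi_{2^{-k}h}^{2^k L}$ is a composition of leapfrog steps. The discrete factor carries counting measure, so it is enough that $G$ restricted to $\{(k, B, \ell, a, b, L)\}$ be a bijection, which is precisely the involution property already verified. The main obstacle will be the combinatorial identity $(B^*)^* = B$ together with $\ell_{\min}$ invariance, since the doubling procedure is recursive and $[a:b]$ is not symmetric about $0$; both reduce cleanly to induction on $\ell$ driven by \eqref{eq:inductive-def-of-a-j}--\eqref{eq:inductive-def-of-b-j} and the shift invariance \eqref{eq:u-turn-symmetry}, after which the rest of the proof is routine.
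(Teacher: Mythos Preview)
Your outline is largely on target and matches the paper's structure, but there is one genuine misunderstanding: membership in $\mathcal{D}$ does \emph{not} require $\ell_{\min}(\theta^*,\rho^*,B^*,h,2^k)=\ell$. The set $\mathcal{D}$ is purely combinatorial,
\[
\mathcal{D}=\{(B,\ell,a,b,L):\ a=a_\ell((B_i)_{i\in[1:\ell]}),\ b=b_\ell((B_i)_{i\in[1:\ell]}),\ a\le L\le b\},
\]
with no reference to $(\theta,\rho)$ or to $\ell_{\min}$. So for this lemma you only need the two identities $a_\ell((B^*_i))=a-L$, $b_\ell((B^*_i))=b-L$ (this is Lemma~\ref{lem:initial-point-symmetry}) together with $a-L\le -L\le b-L$, and you are done with well-definedness.

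The $\ell_{\min}$ invariance you sketch is a separate result (Lemma~\ref{lem:j-min-symmetry}), and your proposed argument for it has a gap: it is \emph{not} true that for each level $i\in[1:\ell]$ the intervals $[a_i((B^*_j)):b_i((B^*_j))]$ and $[\tilde a_{i+1}((B^*_j)):\tilde b_{i+1}((B^*_j))]$ are simply the $(-L)$-shifts of the corresponding intervals built from $B$. Only at the top level $i=\ell$ do they match under shifting; at intermediate levels the $B^*$-intervals are different subtrees of $[a:b]$. Establishing $\ell_{\min}$ invariance requires the tree machinery of Lemmas~\ref{lem:sub-u-turn-free-equivalence}--\ref{lem:strict-descendants-of-top-sub-u-turn-free}, not just shift invariance of the U-turn indicators.

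Aside from that, your proof coincides with the paper's. One minor stylistic difference: for $(B^*)^*=B$ the paper avoids a direct recursion by noting (via Lemma~\ref{lem:initial-point-symmetry} applied twice) that $a_\ell(((B^*)^*_i))=(a-L)-(-L)=a=a_\ell((B_i))$, and then invoking uniqueness of binary expansions via \eqref{eq:explicit-def-of-a-j}. Your inductive argument should also go through, but the binary-expansion shortcut is cleaner.
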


\begin{proof}
Define \[\Gamma(B, \ell, a, b, L) = (B^*, \ell, a-L, b -L, -L).\]
In this notation, we can write 
\[G(\theta, \rho, k, B, \ell, a, b, L) = (\Phi_{2^{-k} h}^{2^k L} (\theta, \rho), k, \Gamma(B, \ell, a, b, L)).\]
We first check that $\mathcal{D}$ is in fact the codomain of the map $\Gamma$ as defined above. Using Lemma \ref{lem:initial-point-symmetry} and the definition of $B^*$ we obtain
\begin{align}
a_\ell((B^*_i)_{i \in [1:\ell]}) &=  a_\ell((\beta_i(L, a, b))_{i \in [1:\ell]}) \notag \\ 
&= a-L \, ,\label{eq:a-ell-in-D} \\
b_\ell((B^*_i)_{i \in [1:\ell]}) & = b_\ell((\beta_i(L, a, b))_{i \in [1:\ell]}) \notag \\ 
&= b-L \, .\label{eq:b-ell-in-D}
\end{align}
Additionally, $a-L \leq -L \leq b- L $ since $a \leq 0 \leq b$. Combining these, $\Gamma(B, \ell, a, b, L) \in \mathcal{D}$. This additionally implies that $G$ is defined as a map
$\mathbb{R}^{2d} \times \mathbb{Z} \times \mathcal{D} \to \mathbb{R}^{2d} \times \mathbb{Z} \times \mathcal{D}$. 

We next show that $G$ is an involution. First, we show $\Gamma$ is an involution.
\begin{align}
& \Gamma \circ \Gamma (B, \ell, a, b, L) &  \notag\\
& \quad =\Gamma((B_i^*(L,a,b,B))_{i \in [1:M]}, \ell, a-L, b-L, -L) \notag \\
& \quad = ((B_i^*(-L, a-L, b-L, (B_i^*(L,a,b,B))_{i \in [1:M]}))_{i \in [1:M]}, \ell, a, b, L).\notag 
\end{align}
To establish that $\Gamma$ is an involution we need only show  
\begin{equation} \label{eq:b-star-involutivity}
\forall \ell' \in [1:M], \quad B_{\ell'}^*(L, -(b-L), -(a-L), (B_i^*(L,a,b,B))_{i \in [1:M]})= B_{\ell'}.
\end{equation}
Since $(B, \ell, a, b, L) \in \mathcal{D}$ we know $(b-L)-(a-L) + 1 = b -a +1 = 2^\ell$. Therefore, by definition for $\ell' \in [\ell +1 : M] $
\begin{align} 
B_{\ell'}^*(-L, a-L, b-L, (B_i^*(L,a,b,B))_{i \in [1:M]}) &= B_{\ell'}^*(L,a,b,B) = B_{\ell'} .\label{eq:b-start-involutity-direct-part}
\end{align}
On the other hand, using Lemma \ref{lem:initial-point-symmetry} 
\begin{align}
& a_\ell( (B_i^* (-L, a-L, b-L,  (B_i^*(L,a,b,B))_{i \in [1:M]}))_{i \in [1: \ell]}) & \notag \\
& \quad = a_\ell((\beta_i(-L, a-L, b-L))_{i \in [1:\ell]})  = (a - L) -(-L) = a \;. \notag 
\end{align}
Thus, using (\ref{eq:explicit-def-of-a-j}) we have
\[\sum_{j = 1}^\ell 2^{j-1}B_j^*(-L, a-L, b-L, (B_i^*(L,a,b,B))_{i \in [1:M]})  = \sum_{j=1}^\ell 2^{j-1} B_j.\]
Therefore,
\begin{equation} \label{eq:b-star-involutivity-binary-part}
\forall \ell' \in [1:\ell], \quad B_{\ell'}^*(-L, a-L, b-L, (B_i^*(L,a,b,B))_{i \in [1:M]}) = B_{\ell'} 
\end{equation}
by the uniqueness of binary expansions (see Remark \ref{rmk:binary-expansions} for a proof sketch). Combining \eqref{eq:b-start-involutity-direct-part} and \eqref{eq:b-star-involutivity-binary-part} gives (\ref{eq:b-star-involutivity}). 

Consequently, $\Gamma$ is an involution. To show that $G$ is an involution write
$G(\theta, \rho, k, B, \ell, a, b, L) = (\Phi_{2^{-k} h}^{2^k L}(\theta, \rho), k, \Gamma(B, \ell, a, b, L))$
and compute
\begin{align}
G \circ G (\theta, \rho, k, B, \ell, a, b, L) &= 
G(\Phi_{2^{-k}h}^{2^k L}(\theta, \rho),k, \Gamma (B, \ell, a, b, L)) \notag \\
&= (\Phi_{2^{-k}h}^{-2^k L}(\Phi_{2^{-k}h}^{2^k L}(\theta, \rho)), k, \Gamma \circ \Gamma (B, \ell, a, b, L)) \notag \\
&= (\theta, \rho, k, B, \ell, a, b, L). \notag
\end{align}

Last, we argue that $G$ is measure-preserving. For fixed $L,k$ the map $(\theta, \rho) \to \Phi_{2^{-k}h}^{2^k L}(\theta, \rho)$ is measure-preserving. As $\Gamma$ is a bijection it preserves the counting measure on $\mathcal{D}$. Combining these observations with an application of Fubini's theorem establishes that $G$ is measure-preserving.

\end{proof}

\begin{lemma} \label{lem:orbit-selection-symmetry}
If $(\theta, \rho) \in \mathbb{R}^{2d}$, $(B, \ell, a, b, L) \in \mathcal{D}$
then
\begin{equation} \label{eq:orbit-selection-detailed-balance}
P(a,b, \ell \mid \theta, \rho, B, h, R) = P(a-L, b-L, \ell \mid \Phi_{h/R}^{RL}(\theta, \rho), B^*,h, R) . 
\end{equation}
\end{lemma}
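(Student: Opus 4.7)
The plan is to plug the explicit form \eqref{eq:orbit-selection-kernel-explicit} into both sides of the claimed equality and match the three Dirac factors one by one. The factors enforcing $a = a_\ell((B_i)_{i \in [1:\ell]})$ and $b = b_\ell((B_i)_{i \in [1:\ell]})$ on the right-hand side are already handled by \eqref{eq:a-ell-in-D} and \eqref{eq:b-ell-in-D}, which were established in the course of proving Lemma \ref{lem:G-measure-preserving}. Hence the real content is the identity $\ell_{min}(\theta, \rho, B, h, R) = \ell_{min}(\Phi_{h/R}^{RL}(\theta, \rho), B^*, h, R)$.

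The first main step is a geometric observation: for every $j \in [1:\ell]$, the shifted interval $[a_j((B^*_i)_{i \in [1:j]}) + L : b_j((B^*_i)_{i \in [1:j]}) + L]$ coincides with the level-$j$ node on the root-to-leaf path from $[a:b]$ to $\{L\}$ in the recursive halving tree of $[a:b]$. I would prove this by induction on $\ell$ using the recursive definition of $\beta_i(L, a, b)$, together with the midpoint rule $m = \lfloor (a+b)/2 \rfloor$ governing dyadic splits. Combined with the shift invariance \eqref{eq:u-turn-symmetry}, this identifies every U-turn or sub-U-turn test that \texttt{leapfrog-orbit-selection} performs when started from $\Phi_{h/R}^{RL}(\theta, \rho)$ with $B^*$ with an equivalent test performed from $(\theta, \rho)$ on the corresponding halving-dyadic sub-interval of $[a:b]$.

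With the translation in hand, I would next show $\ell_{min}^* \geq \ell$ by induction on the intermediate levels $j < \ell$. The hypothesis $\ell_{min} = \ell$ forces $\mathbb{1}_{\text{U-turn}}([a_j:b_j]) = 0$ and $\mathbb{1}_{\text{sub-U-turn}}([\tilde a_{j+1}:\tilde b_{j+1}]) = 0$ for all $j < \ell$, and the recursive definition \eqref{eq:subuturn} of sub-U-turn then implies that no proper halving-dyadic sub-interval of $[a:b]$ is a U-turn, so every test at level $j < \ell$ for $B^*$ returns $0$. To force $\ell_{min}^* \leq \ell$, I would split on the reason the original algorithm stopped: if $[a_\ell:b_\ell]$ is a U-turn, shift invariance transports the U-turn onto $[a-L:b-L]$; otherwise $[\tilde a_{\ell+1}:\tilde b_{\ell+1}]$ is a sub-U-turn, and since $B^*_{\ell+1} = B_{\ell+1}$ by the construction of $B^*$, the extension intervals satisfy $\tilde a_{\ell+1}^* + L = \tilde a_{\ell+1}$ and $\tilde b_{\ell+1}^* + L = \tilde b_{\ell+1}$, so shift invariance transports the sub-U-turn as well.

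I expect the main obstacle to be the halving-tree correspondence in the second paragraph. Verifying that $\beta_j(L,a,b)$ selects exactly the branch of the midpoint split of the correct ancestor dyadic interval, and that this alignment is preserved under the doubling recursions \eqref{eq:inductive-def-of-a-j}--\eqref{eq:inductive-def-of-b-j}, is where all the combinatorial bookkeeping lives; once that is in place, the remainder of the argument reduces to direct applications of \eqref{eq:u-turn-symmetry} and the recursive structure of \eqref{eq:subuturn}.
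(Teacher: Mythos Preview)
Your plan is essentially the paper's: reduce to the $\delta_{\ell_{min}}(\ell)$ factor via \eqref{eq:a-ell-in-D}--\eqref{eq:b-ell-in-D}, then argue that the $B^*$-driven orbit construction from $\Phi_{h/R}^{RL}(\theta,\rho)$ climbs the dyadic halving tree of $[a:b]$ along the leaf-to-root path through $\{L\}$. Your ``halving-tree correspondence'' is the content of Lemma~\ref{lem:beta-downward-closed}, Corollary~\ref{cor:containment-of-partial-trees}, and Lemma~\ref{lem:strict-descendants-of-top-sub-u-turn-free}; your two inequalities $\ell_{min}^* \ge \ell$ and $\ell_{min}^* \le \ell$ match the two steps of Lemma~\ref{lem:j-min-symmetry}, including the observation $B^*_{\ell+1}=B_{\ell+1}$ that aligns the extension intervals.

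There is one logical gap. Your argument only yields the implication $\ell_{min}=\ell \Rightarrow \ell_{min}^*=\ell$, i.e.\ $P=1 \Rightarrow P^*=1$; the lemma also needs the converse. You frame the ``real content'' as the unconditional identity $\ell_{min}=\ell_{min}^*$, but your proof of that identity invokes the hypothesis $\ell_{min}=\ell$ throughout, and for $(B,\ell,a,b,L)\in\mathcal D$ with $\ell>\ell_{min}$ the index $L$ need not lie in $[a_T:b_T]$, so the unconditional identity is not available to you. The paper closes this gap cheaply: since $P$ is $\{0,1\}$-valued and the map $\tilde G(\theta,\rho,B,\ell,a,b,L)=(\Phi_{h/R}^{RL}(\theta,\rho),\Gamma(B,\ell,a,b,L))$ is an involution (established in Lemma~\ref{lem:G-measure-preserving}), the forward implication applied at $\tilde G(\theta,\rho,B,\ell,a,b,L)$ gives $P^*=1 \Rightarrow P^{**}=P=1$. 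Adding this one sentence to your plan makes it complete.
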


\begin{proof}
Define \[\Tilde{G}(\theta, \rho, B, \ell, a, b, L) = (\Phi_{h/R}^{RL}(\theta, \rho), \Gamma(B, \ell, a, b, L)),\] 
for $\Gamma$ defined in the proof of Lemma \ref{lem:G-measure-preserving}. As before, $\Tilde{G}$ is an involution. 

Let \[F(\theta, \rho, B, \ell, a, b, L) = P(a,b,\ell \mid \theta, \rho, B, h ,R).\] Note that $P$ is $\{0,1\}$-valued and hence so is $F$. We want to show \[F(\theta, \rho, B, \ell, a, b, L) = F(\Tilde{G}(\theta, \rho, B, \ell, a, b, L)).\]
As $F$ is $\{0, 1\}$-valued, it is enough to show
\begin{equation} \label{eq:F=1-iff-Fcircg=1}
F(\theta, \rho, B, \ell, a, b, L) = 1 \iff F(\Tilde{G}(\theta, \rho, B, \ell, a, b, L)) = 1 .
\end{equation}
To that end, suppose $F(\theta, \rho, B, \ell, a, b, L) = 1$. Then, 
\[a = a_T(\theta, \rho,B, h, R) \quad \text{and} \quad b = b_T(\theta, \rho, B, h, R).\]
Hence, by Lemma \ref{lem:j-min-symmetry}
\[\delta_{\ell_{min}(\theta, \rho, B, h, R)}(\ell) = 
\delta_{\ell_{min}(\Phi_{h/R}^{RL}(\theta, \rho), B^*, h, R)}(\ell) = 1.\]
Combining this with Lemma \ref{lem:initial-point-symmetry}, we find $F(\Tilde{G}(\theta, \rho, B, \ell, a, b, L)) = 1$. Suppose on the other hand that $F(\Tilde{G}(\theta, \rho, B, \ell, a, b, L)) = 1$. Then, by the above argument and the fact $\Tilde{G}$ is an involution 
\[F(\theta, \rho, B, \ell, a, b, L) = F(\Tilde{G} \circ \Tilde{G} (\theta, \rho, B, \ell, a, b, L))  = 1,\]
from which we deduce \eqref{eq:F=1-iff-Fcircg=1} and therefore \eqref{eq:orbit-selection-detailed-balance}.

\end{proof}

\begin{corollary} \label{cor:NUTS-kernel-reversibility} 
Suppose $Q$ satisfies
\[ e^{-H(\theta, \rho)} Q(L \mid \theta, \rho, a, b, h, R) = e^{-H \circ \Phi_{h/R}^{RL}(\theta, \rho)} Q(-L \mid \Phi_{h/R}^{RL}(\theta, \rho), a-L, b-L, h, R).\]
If $(\theta, \rho) \in \mathbb{R}^{2d}$ and $(B, \ell, a, b, L) \in \mathcal{D}$, then
\begin{align}
\label{eq:NUTS-detailed-balance}
e^{- H(\theta, \rho)} p_{\text{NUTS}}(B, \ell, a, b, L \mid \theta, \rho, h, R) = 
e^{-H \circ \Phi_{h/R}^{RL}(\theta, \rho)} p_{\text{NUTS}}(B^*, \ell, a-L, b-L, -L) .
\end{align}
\end{corollary}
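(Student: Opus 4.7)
The plan is to prove Corollary~\ref{cor:NUTS-kernel-reversibility} by directly unpacking the definition of $p_{\textrm{NUTS}}$ given in \eqref{eq:NUTS-kernel} and then invoking the two structural ingredients already available: Lemma~\ref{lem:orbit-selection-symmetry}, which controls the orbit-selection factor, and the hypothesis on $Q$, which controls the index-selection factor together with the Boltzmann weights. Since neither side involves a choice of conditional arguments that is really in play (the $1/2^M$ factor is independent of $B$ and hence of $B^*$), no combinatorial counting is needed; the whole statement should factor cleanly.

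Concretely, I would first write
\begin{equation*}
e^{-H(\theta,\rho)} p_{\textrm{NUTS}}(B,\ell,a,b,L \mid \theta,\rho,h,R)
= \tfrac{1}{2^M}\, e^{-H(\theta,\rho)}\, P(\ell,a,b\mid \theta,\rho,B,h,R)\, Q(L\mid \theta,\rho,a,b,h,R).
\end{equation*}
For the $P$-factor, since $(B,\ell,a,b,L) \in \mathcal{D}$ and $(\theta,\rho) \in \mathbb{R}^{2d}$ by assumption, Lemma~\ref{lem:orbit-selection-symmetry} gives
\begin{equation*}
P(\ell,a,b \mid \theta,\rho,B,h,R) = P(\ell,a-L,b-L \mid \Phi^{RL}_{h/R}(\theta,\rho),B^*,h,R).
\end{equation*}
For the $Q$-factor combined with the exponential, the hypothesis on $Q$ converts
\begin{equation*}
e^{-H(\theta,\rho)}\, Q(L\mid \theta,\rho,a,b,h,R) = e^{-H\circ \Phi^{RL}_{h/R}(\theta,\rho)}\, Q(-L \mid \Phi^{RL}_{h/R}(\theta,\rho),a-L,b-L,h,R).
\end{equation*}
Substituting these two identities and recombining via the definition of $p_{\textrm{NUTS}}$ (noting that the $1/2^M$ factor is identical on both sides because it does not depend on the Bernoulli sequence) yields the right-hand side of \eqref{eq:NUTS-detailed-balance}.

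I do not expect a serious obstacle here: essentially all the work has been absorbed into Lemma~\ref{lem:orbit-selection-symmetry} (which in turn rests on the initial point symmetry Lemma \ref{lem:initial-point-symmetry} and the $\ell_{\min}$-symmetry Lemma \ref{lem:j-min-symmetry}) and into the hypothesis placed on $Q$. The only point worth pausing on is to verify that the conditioning arguments line up correctly — in particular, that $(B^*,\ell,a-L,b-L,-L)$ lies in $\mathcal{D}$ (so the right-hand side factorization via \eqref{eq:NUTS-kernel} is legitimate), which is exactly the content of equations \eqref{eq:a-ell-in-D}–\eqref{eq:b-ell-in-D} in the proof of Lemma~\ref{lem:G-measure-preserving}. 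Once this admissibility is noted, the chain of equalities above is a one-line computation.
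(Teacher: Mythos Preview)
Your proposal is correct and follows exactly the same approach as the paper's proof, which simply states that \eqref{eq:NUTS-detailed-balance} follows immediately from Lemma~\ref{lem:orbit-selection-symmetry} and the assumption on $Q$. Your version is more detailed---explicitly factoring $p_{\textrm{NUTS}}$ via \eqref{eq:NUTS-kernel}, separating the $P$- and $Q$-factors, and noting that $(B^*,\ell,a-L,b-L,-L)\in\mathcal{D}$---but the logical content is identical.
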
 

\begin{proof}
From Lemma \ref{lem:orbit-selection-symmetry} and the assumption on $Q$, \eqref{eq:NUTS-detailed-balance} immediately follows.

\end{proof}

\begin{remark}(Sketch of binary expansion uniqueness) \label{rmk:binary-expansions}

Suppose $(r_i)_{i \in [1:\ell]}, (q_i)_{i \in [1:\ell]}$ are binary strings and
\[\sum_{j=1}^\ell r_j 2^{j-1} = \sum_{j=1}^\ell q_j 2^{j-1}. \]
Reducing modulo 2 immediately gives $r_1 = q_1$. Subtracting this common value from both sides and dividing by 2 then gives 
\[\sum_{j=1}^{\ell -1} r_{j+1} 2^{j-1} = \sum_{j=1}^{\ell -1} q_{j+1} 2^{j-1} \] 
and repeating this process shows $(r_i)_{i \in [1:\ell]} = (q_i)_{i \in [1:\ell]}$.
\end{remark}

\subsection{Lemmas related to \texorpdfstring{$B^*$}{Bstar}}
\begin{lemma} \label{lem:initial-point-symmetry}
For $L \in [a:b]$, $b-a+1 = 2^\ell$

\begin{equation}
a_\ell((\beta_i(L, a, b))_{i \in [1:\ell]}) = a - L  \quad \text{and} \quad b_\ell((\beta_i(L, a, b))_{i \in [1:\ell]}) = b - L .
\end{equation}
\end{lemma}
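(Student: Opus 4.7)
The plan is to prove this by induction on $\ell$. In the base case $\ell = 0$, we have $b - a + 1 = 1$, so $a = b$ and necessarily $L = a = b$. The string $(\beta_i(L,a,b))_{i \in [1:0]}$ is empty, so by definition $a_0 = b_0 = 0$, and indeed $a - L = b - L = 0$. This handles the base case.

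For the inductive step, assume the claim holds for $\ell - 1$ and let $L \in [a:b]$ with $b - a + 1 = 2^\ell$. Set $m = \lfloor (a+b)/2 \rfloor$. Since $b - a + 1 = 2^\ell$ is even for $\ell \geq 1$, the sum $a + b$ is odd, so $m = (a+b-1)/2$. A small calculation then gives $m - a + 1 = b - m = 2^{\ell-1}$, i.e., $[a:m]$ and $[m+1:b]$ each have length $2^{\ell-1}$. I would split into two cases based on which half contains $L$.

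If $L \in [a:m]$, the recursive definition of $\beta$ yields $\beta_\ell(L, a, b) = 0$ and $(\beta_i(L, a, b))_{i \in [1:\ell-1]} = (\beta_i(L, a, m))_{i \in [1:\ell-1]}$. Applying the inductive hypothesis to the subinterval $[a:m]$ gives $a_{\ell-1} = a - L$ and $b_{\ell-1} = m - L$. Plugging into the definitions \eqref{eq:inductive-def-of-a-j}--\eqref{eq:inductive-def-of-b-j} with $(-1)^{\beta_\ell} = +1$ produces $\tilde a_\ell = (a-L) + 2^{\ell-1}$ and $\tilde b_\ell = (m-L) + 2^{\ell-1}$, whence $a_\ell = \min(a-L, (a-L) + 2^{\ell-1}) = a - L$ and $b_\ell = (m-L) + 2^{\ell-1}$. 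The identity $m + 2^{\ell-1} = b$, equivalent to $(b-a+1)/2 = 2^{\ell-1}$, closes out this case. The case $L \in [m+1:b]$ is entirely symmetric: one uses $\beta_\ell(L,a,b) = 1$ and the inductive hypothesis on $[m+1:b]$, then checks that $(m+1) - 2^{\ell-1} = a$ via the same arithmetic identity.

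The argument is essentially bookkeeping with the recursion, and I do not expect any serious obstacle; the only subtlety is confirming that $m = (a+b-1)/2$ (rather than $(a+b)/2$) so that the two halves $[a:m]$ and $[m+1:b]$ both have size exactly $2^{\ell-1}$, which is required to apply the inductive hypothesis to each side. Once this parity check is in place, the symmetric min/max calculations in the two cases complete the induction.
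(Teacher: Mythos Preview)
Your proof is correct and follows essentially the same inductive strategy as the paper. The only cosmetic difference is that the paper handles both cases simultaneously by writing $a'(L) = a + 2^{\ell-1}\mathbb{1}_{[m+1:b]}(L)$ and $b'(L) = b - 2^{\ell-1}\mathbb{1}_{[a:m]}(L)$ rather than splitting into cases, but the underlying computation is identical.
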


\begin{proof}
It is slightly cleaner to equivalently prove,
\begin{equation*}
a- a_\ell((\beta_i(L, a, b))_{i \in [1:\ell]}) = L  \quad \text{and} \quad b- b_\ell((\beta_i(L, a, b))_{i \in [1:\ell]}) = L.
\end{equation*}

We proceed by induction on $\ell$. For $\ell = 0$, $b-a+1 = 1$. Thus, $b=a =L$ and we obtain directly
\begin{equation}
a - a_0 = a - 0 = L \quad \text{and} \quad b - b_0 = b -0 = L.
\end{equation}

For $\ell>0$, write
\begin{align}
a - a_\ell((\beta_i(L,a,b))_{i \in [1:\ell]}) &= a - (a_{\ell-1}((\beta_i(L, a,b))_{i \in [1:\ell-1]})- 2^{\ell-1}\mathbb{1}_{[m+1:b]}(L)) \notag \\
&= (a + 2^{\ell-1}\mathbb{1}_{[m+1:b]}(L)) - a_{\ell-1}((\beta_i(L, a'(L),b'(L)))_{i \in [1:\ell-1]}) \label{eq:a-minus-a-ell-inductive-step}, \\
b - b_\ell((\beta_i(L,a,b)_{i \in [1:\ell]})) &= b - (b_{\ell-1}((\beta_i(L, a,b))_{i \in [1:\ell-1]}) + 2^{\ell-1}(1 - \mathbb{1}_{[m+1:b]}(L))) \notag \\
&= (b - 2^{\ell-1}\mathbb{1}_{[a:m]}(L)) -  b_{\ell-1}((\beta_i(L, a'(L),b'(L)))_{i \in [1:\ell-1]}), \label{eq:b-minus-b-ell-inductive-step}
\end{align}
where as before 
\begin{equation} \label{eq:a-prime-L-b-prime-L-appendix}
(a'(L), b'(L)) = 
\begin{cases} (a, m) & \text{if $L \in [a:m]$}, \textrm{ and} \\
(m +1 , b) & \text{otherwise}.
\end{cases} 
\end{equation}
for $m = \floor{(a+b)/2}$.

One can write $(a'(L), b'(L))$ more explicitly as
\begin{align}
a'(L) &= a + 2^{\ell-1} \mathbb{1}_{[m+1:b]}(L) \, ,\label{eq:a'(L)-explicit}\\ 
b'(L) &= b - 2^{\ell-1}\mathbb{1}_{[a:m]}(L) .\label{eq:b'(L)-explicit}
\end{align}
Indeed, if $L \in [a:b]$, then $a'(L) = a,\; b'(L) = m = \floor{\frac{b+a}{2}}$. But, $b-a+1 = 2^\ell$, so $b+a$ is odd, $a = b - 2^\ell +1$, and therefore
\begin{equation}
m = \frac{b+a-1}{2} = \frac{b + (b-2^\ell +1) -1 }{2} = b - 2^{\ell-1} .\notag 
\end{equation}

The equivalent computation for $L \in [m+1:b]$ establishes \eqref{eq:a'(L)-explicit} and \eqref{eq:b'(L)-explicit}.
Comparing with \eqref{eq:a-minus-a-ell-inductive-step} and \eqref{eq:b-minus-b-ell-inductive-step}, we see 
\begin{align}
a - a_\ell((\beta_i(L,a,b))_{i \in [1:\ell]}) &= a'(L) - a_{\ell-1}((\beta_i(L,a'(L), b'(L)))_{i \in [1:\ell-1]}), \notag \\ 
b - b_\ell((\beta_i(L,a,b))_{i \in [1:\ell]}) &=b'(L) - b_{\ell-1}((\beta_i(L,a'(L), b'(L)))_{i \in [1:\ell-1]}). \notag 
\end{align}
Hence, by the inductive hypothesis:
\begin{equation}
a - a_\ell((\beta_i(L,a,b))_{i \in [1:\ell]}) = L \quad \text{and } \quad b - b_\ell((\beta_i(L, a, b))_{i \in [1:\ell]}) = L \notag 
\end{equation}
as required.
\end{proof}

\subsection{Lemmas related to \texorpdfstring{$\ell_{min}$}{ellmin}}

The central lemma of this section is 
\begin{lemma} \label{lem:j-min-symmetry}
For
$L \in [a_T : b_T]$,
\begin{align} \label{eq:j-min-symmetry}
\ell_{min}(\theta, \rho, B, h, R) &= \ell_{min}(\Phi_{h/R}^{RL}(\theta, \rho), B^*,h, R).
\end{align}
\end{lemma}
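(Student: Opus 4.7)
The plan is to show $\ell_{min}(\theta, \rho, B, h, R) = \ell_{min}(\Phi_{h/R}^{RL}(\theta, \rho), B^*, h, R)$ by comparing the cumulative collections of sub-orbits on which U-turn has been tested through each step, from each of the two perspectives. The argument relies on the shift invariance \eqref{eq:u-turn-symmetry} of the U-turn (hence sub-U-turn) conditions, together with Lemma \ref{lem:initial-point-symmetry}, which identifies the size-$2^{\ell'}$ full orbit from the $(\Phi_{h/R}^{RL}(\theta, \rho), B^*)$-perspective with the unique size-$2^{\ell'}$ binary sub-orbit of $[a_T:b_T]$ containing $L$ (shifted by $-L$).

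The first step is a structural claim proved by induction on $\ell'$: for an initial point paired with a compatible bit sequence, writing $N_j = [a_j : b_j]$ for the size-$2^j$ full orbit and $\sigma_j$ for its sibling $[\tilde{a}_{j+1} : \tilde{b}_{j+1}]$, the cumulative set of sub-orbits on which U-turn is tested through step $\ell'$ (combining U-turn checks on $N_j$ for $j \in [1:\ell']$ with the recursive sub-U-turn checks on $\sigma_j$ for $j \in [1:\ell']$) is precisely the set of all binary sub-orbits of $N_{\ell'+1}$ of sizes in $\{2, 4, \dots, 2^{\ell'}\}$. The inductive step follows from the recursive definition \eqref{eq:subuturn} of sub-U-turn, which unrolls into U-turn checks on every binary sub-orbit of its argument of size at least two.

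Next, let $k^* \le \ell_{min}$ be the depth of the smallest binary sub-orbit of $[a_T:b_T]$ containing both the original origin and $L$; the upper bound holds because $L \in [a_T:b_T]$. By Lemma \ref{lem:initial-point-symmetry} and the definition of $B^*$, the size-$2^{\ell'+1}$ full orbit from the $B^*$-perspective (shifted by $+L$) coincides with that from the original perspective exactly when $\ell'+1 \ge k^*$, and lies in a disjoint half of $N_{k^*}$ when $\ell'+1 < k^*$. Combining with the structural claim and shift invariance, the two cumulative collections (modulo shift by $L$) agree for all $\ell' \ge k^*-1$, while for $\ell' < k^*-1$ they are disjoint subsets of the common collection of binary sub-orbits of $N_{k^*}$ of sizes $\le 2^{k^*-1}$, which coincides with the cumulative collection at step $k^*-1$ from either perspective.

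The equality of stopping times then follows by bookkeeping. Non-stopping of the original procedure through step $\ell_{min}-1$ forces its cumulative collection at step $k^*-1$ to carry only zero U-turn values, which in turn forces the $L$-procedure's cumulative collections at every step $\le k^*-1$ to also vanish (no early stop); for $\ell' \in [k^*-1, \ell_{min}-1]$ the cumulative collections coincide and again forbid a stop. At step $\ell_{min}$, the coincident cumulative collections must contain at least one U-turn-positive sub-orbit (since the original procedure stops there), so the $L$-procedure also stops exactly at step $\ell_{min}$. The edge case $\ell_{min} = M$ (no stopping condition encountered) is handled by the same cumulative-equality argument applied at step $M-1$. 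The main obstacle I anticipate is verifying the structural characterization in the first step, which requires careful unwinding of the recursive sub-U-turn definition to confirm that precisely all binary sub-orbits of $N_{\ell'+1}$ of size $\le 2^{\ell'}$ are tested cumulatively by step $\ell'$.
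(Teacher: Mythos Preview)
Your proposal is correct and follows essentially the same approach as the paper: characterize the collection of intervals tested for U-turn up to a given step as the binary sub-orbits of the next-level full orbit (the paper does this via its tree notation $\mathcal{T}$ and Lemmas~\ref{lem:sub-u-turn-free-equivalence}, \ref{lem:strong-minimality-of-ell-min}, \ref{lem:strict-descendants-of-top-sub-u-turn-free}), then use Lemma~\ref{lem:initial-point-symmetry} together with shift invariance~\eqref{eq:u-turn-symmetry} to match the two perspectives' checked intervals. Your case split at the merge level $k^*$ is a minor organizational variation; the paper instead shows directly that $\mathcal{T}([a_T:b_T]) \setminus \{[a_T:b_T]\}$ is sub-U-turn free (Lemma~\ref{lem:strict-descendants-of-top-sub-u-turn-free}), which handles all $\ell' < \ell_{min}$ uniformly without needing $k^*$.
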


\begin{figure}
\begin{tikzpicture} [
    level 1/.style={sibling distance = 8cm, level distance = 2cm},
    level 2/.style={sibling distance = 4cm},
    level 3/.style={sibling distance = 2cm},
    emph/.style={edge from parent/.style={red, very thick, draw}},
    norm/.style={edge from parent/.style={black, thin, draw}}
    ]

    \node{$\{-3,-2,-1,0,1,2,3,4\}$}
    child[norm] { node {$\{-3,-2,-1,0\}$}
        child[norm] { node {$\{-3,-2\}$}
            child { node {-3} edge from parent node[midway, left] {0} }
            child { node {-2} edge from parent node[midway, right] {1} }
            edge from parent node[midway, left]{0}
        }
        child[norm] { node {$\{-1,0\}$}
            child { node {-1} edge from parent node[midway, left] {0} }
            child { node {0} edge from parent node[midway, right] {1} }
            edge from parent node[midway, right]{1}
        }
        edge from parent node[midway, above]{0}
    }
    child[norm] { node {\textcolor{red}{$\{1,2,3,4\}$}}
        child[emph] { node {\textcolor{red}{$\{1,2\}$}}
            child[emph] { node {\textcolor{red}{1}} edge from parent node[midway, left] {0} }
            child[emph] { node {\textcolor{red}{2}} edge from parent node[midway, right] {1} }
            edge from parent node[midway, left]{0}
        }
        child[emph] { node {\textcolor{red}{$\{3,4\}$}}
            child { node {\textcolor{red}{3}} edge from parent node[midway, left] {0} }
            child { node {\textcolor{red}{4}} edge from parent node[midway, right] {1} }
            edge from parent node[midway, right]{1}
        }
        edge from parent node[midway, above] {1}
    };

\end{tikzpicture}
\caption{$\mathcal{T}([1:4])$ highlighted inside  $\mathcal{T}([-3:4])$ }
\end{figure}
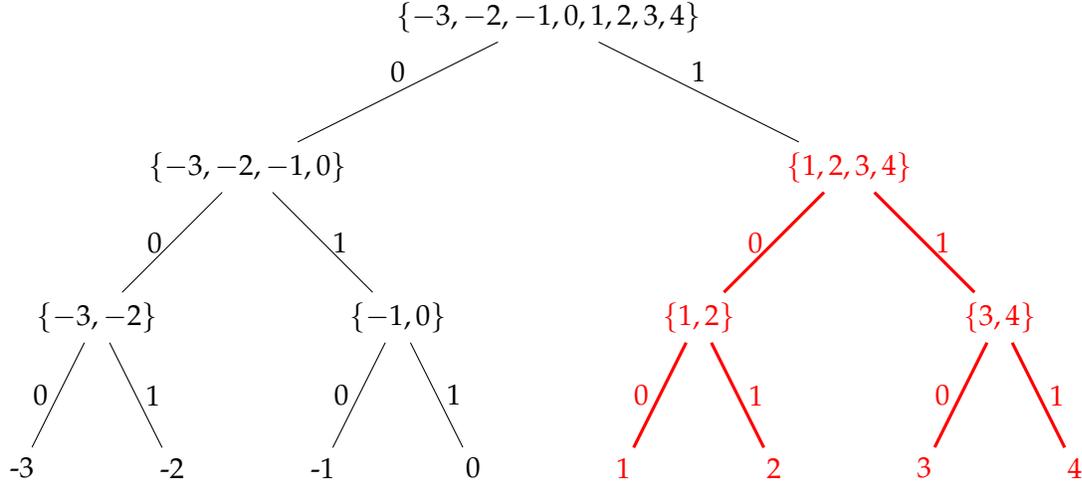

To aid understanding of the proof of Lemma \ref{lem:j-min-symmetry} we provide a brief heuristic explanation. When executing Algorithm \ref{algo:leapfrog-orbit-selection} from $(\theta, \rho)$ with sequence $B$, the algorithm checks many intervals $[a:b]$ for U-turns against $(\theta, \rho)$. This is either done explicitly with $\mathbb{1}_{\text{U-turn}}(\theta, \rho, a, b, h, R)$ or implicitly through $\mathbb{1}_{\text{sub-U-turn}}(\theta, \rho, \tilde{a}, \tilde{b}, h, R)$ for some $[\tilde{a} : \tilde{b}]$ such that $[a:b]$ is a vertex in the binary tree implicitly defined by $[\tilde{a} : \tilde{b}]$.

When executing Algorithm \ref{algo:leapfrog-orbit-selection} from $\Phi_{h/R}^{RL}(\theta, \rho)$ with sequence $B^*$ we want to show that every interval $[a^*:b^*]$ which is implicitly or explicitly checked for a U-turn against $\Phi_{h/R}^{RL}(\theta, \rho)$ is an appropriately shifted version of some interval which was checked when executing from $(\theta, \rho)$. Doing so will let us show that each computation terminates after the same number of doublings $\ell$. However, some of the intervals which are checked explicitly in one computation are checked implicitly in the other and vice-versa, which makes direct comparison difficult.  

To work around this difficulty, we simply characterize the set of all intervals which need to be checked during each execution. Doing so will require additional notation not needed elsewhere. For a set $[a:b]$ with $b-a+1 = 2^\ell$, define a nested sequence of sets as follows. 
\begin{align}
\mathcal{T}_\ell &= \{ [a : b] \}, \notag\\
\mathcal{T}_{j - 1}([a:b]) &= \{ [\tilde{a}: \tilde{b}] \mid \tilde{a} = a', \tilde{b} = m \quad \text{ or } \quad \tilde{a} = m+1, \tilde{b} = b' \quad \\
& \hspace{3cm} \text{for some} \quad [a':b'] \in \mathcal{T}_{j}([a:b]), \; m = \floor{(a+b)/2}\}, \notag \\ 
\mathcal{T}([a:b]) &= \bigcup_{j=1}^\ell \mathcal{T}_j([a:b]). \notag
\end{align}
The set $\mathcal{T}([a:b])$ essentially represents the binary tree corresponding to the interval $[a:b]$. A proof by induction immediately shows 
\begin{equation} \label{eq:descendants-downward-closed}
\mathcal{T}([a:b]) = \mathcal{T}([a:m]) \cup \mathcal{T}([m+1 : b]) \cup \{ [a:b]\}.
\end{equation}
We will say that $\mathcal{T}([a:b])$ is \emph{sub-U-turn free} if 
\begin{equation} \label{eq:def:sub-u-turn-free}    
\forall [\tilde{a} : \tilde{b}] \in \mathcal{T}([a:b]), \; \; \mathbb{1}_{\text{U-turn}}(\theta, \rho, a, b, h, R) = 0 .
\end{equation}
This terminology is motivated by Lemma \ref{lem:sub-u-turn-free-equivalence} which shows that $\mathcal{T}([a:b])$ is sub-U-turn free if and only if $\mathbb{1}_{\text{sub-U-turn}}(\theta, \rho, a, b, h, R) = 0 $. 

\begin{proof}[Proof of Lemma \ref{lem:j-min-symmetry}]
Let $\ell_{min} = \ell_{min}(\theta, \rho, B, h, R)$ and $\ell_{min}^* = \ell_{min}(\Phi_{h/R}^{RL}(\theta, \rho, B^*, h, R)$. We separately show $\ell_{min}^* \leq \ell_{min}$ and $\ell_{min}^* \geq \ell_{min}$.  

\underline{\textbf{Step 1}}: $\ell_{min}^* \leq \ell_{min}.$

If $\ell_{min} = M$ we are done. Otherwise, we need to show 
\begin{multline} \label{eq:j-min-upper-bound}
\max \big( \mathbb{1}_{\text{U-turn}}(\Phi_{h/R}^{RL}(\theta, \rho), a_{\ell_{min}}((B_i^*)_{i \in [1:\ell_{min}]}), b_{\ell_{min}}((B_i^*)_{i \in [1:\ell_{min}]}), h, R), \\ \mathbb{1}_{\text{sub-U-turn}}(\Phi_{h/R}^{RL}(\theta, \rho), \tilde{a}_{\ell_{min}+1}((B_i^*)_{i \in [1:\ell_{min}+1]}), \tilde{b}_{\ell_{min}+1}((B_i^*)_{i \in [1:\ell_{min}+1]}), h, R) \big) = 1 .
\end{multline}

By Lemma \ref{lem:initial-point-symmetry}
\begin{align}
a_{\ell_{min}}((B_i^*)_{i \in [1:\ell_{min}]}) &= a_T - L \, , \notag \\
b_{\ell_{min}}((B_i^*)_{i \in [1:\ell_{min}]}) &= b_T - L  \, ,\notag \\
\tilde{a}_{\ell_{min}+1}((B_i)_{i \in [1: \ell_{min} +1]}) &= a_{\ell_{min}}((B_i^*)_{i \in [1:\ell_{min}]}) + (-1)^{B_{\ell_{min}+1}} \cdot 2^{\ell_{min}} \notag \\
&= (a_T - L) + (-1)^{B_{\ell_{min} +1}} \cdot 2^{\ell_{min}} \notag \\
&= \tilde{a}_{\ell_{min} + 1}((B_i)_{i \in [1: \ell_{min} + 1]}) - L \, ,\notag \\
\tilde{b}_{\ell_{min}+1}((B_i)_{i \in [1: \ell_{min} +1]}) &= b_{\ell_{min}}((B_i^*)_{i \in [1:\ell_{min}]}) + (-1)^{B_{\ell_{min}+1}} \cdot 2^{\ell_{min}} \notag \\
&= (b_T - L) + (-1)^{B_{\ell_{min} +1}} \cdot 2^{\ell_{min}} \notag \\
&= \tilde{b}_{\ell_{min} + 1}((B_i)_{i \in [1: \ell_{min} + 1]}) - L .\notag 
\end{align}
Hence, using \eqref{eq:u-turn-symmetry}
\begin{align} 
&\max \big( \mathbb{1}_{\text{U-turn}}(\Phi_{h/R}^{RL}(\theta, \rho), a_{\ell_{min}}((B_i^*)_{i \in [1:\ell_{min}]}), b_{\ell_{min}}((B_i^*)_{i \in [1:\ell_{min}]}), h, R), & \notag \\
& & \hspace{-15cm} \mathbb{1}_{\text{sub-U-turn}}(\Phi_{h/R}^{RL}(\theta, \rho), \tilde{a}_{\ell_{min}+1}((B_i^*)_{i \in [1:\ell_{min}+1]}), \tilde{b}_{\ell_{min}+1}((B_i^*)_{i \in [1:\ell_{min}+1]}), h, R) \big) =  \notag \\
&\max \big( \mathbb{1}_{\text{U-turn}}(\Phi_{h/R}^{RL}(\theta, \rho), a_T - L , b_T - L, h, R), & \notag \\ 
& &\hspace{-15cm} \mathbb{1}_{\text{sub-U-turn}}(\Phi_{h/R}^{RL}(\theta, \rho), \tilde{a}_{\ell_{min}+1}((B_i)_{i \in [1:\ell_{min}]}) - L, \tilde{b}_{\ell_{min}+1}((B_i)_{i \in [1:\ell_{min}]}) - L, h, R) \big) = \notag \\
&\max \big( \mathbb{1}_{\text{U-turn}}(\theta, \rho , a_T  , b_T , h, R), \mathbb{1}_{\text{sub-U-turn}}(\theta, \rho, \tilde{a}_{\ell_{min}+1}((B_i)_{i \in [1:\ell_{min}]}), \tilde{b}_{\ell_{min}+1}((B_i)_{i \in [1:\ell_{min}]}) , h, R) \big) = 1 . \notag 
\end{align}

\underline{\textbf{Step 2:}}  $\ell_{min}^* \geq \ell_{min}.$

We need to show for $\ell < \ell_{min}$
\begin{align} 
\mathbb{1}_{\text{U-turn}}(\Phi_{h/R}^{RL}(\theta, \rho), a_{\ell}((B_i^*)_{i \in [1:\ell]}), b_{\ell}((B_i^*)_{i \in [1:\ell]}), h, R) &= 0 \, ,\label{eq:j-min-lower-bound-u-turn} \\
\mathbb{1}_{\text{sub-U-turn}}(\Phi_{h/R}^{RL}(\theta, \rho), \tilde{a}_{\ell+1}((B_i^*)_{i \in [1:\ell+1]}), \tilde{b}_{\ell+1}((B_i^*)_{i \in [1:\ell+1]}), h, R) &= 0 .\label{eq:j-min-lower-bound-sub-u-turn}
\end{align}

By Lemma \ref{lem:strict-descendants-of-top-sub-u-turn-free}, we know for $\ell < \ell_{min}$ that 
\[\mathcal{T}([a_\ell((B_i^*)_{i \in [1:\ell]}) + L : b_\ell((B_i^*)_{i \in [1:\ell]}) + L ]) \cup \mathcal{T}([\tilde{a}_{\ell+1}((B_i^*)_{i \in [1:\ell+1]}) + L : \tilde{b}_{\ell + 1}((B_i^*)_{i \in [1:\ell + 1]}) +L ])  \]
is sub-U-turn free. Hence,
\begin{align}
\mathbb{1}_{\text{U-turn}}(\theta, \rho, a_{\ell}((B_i^*)_{i \in [1: \ell]}) + L,  b_{\ell}((B_i^*)_{i \in [1: \ell]}) + L, h, R) &= 0 \, , \label{eq:shifted-no-U-turn} \\
\mathbb{1}_{sub-U-turn}(\theta, \rho,\tilde{a}_{\ell+1}((B_i)_{i \in [1: \ell+1]}) + L , \tilde{b}_{\ell+1}((B_i)_{i \in [1: \ell+1]})+ L, h, R)  &= 0 .\label{eq:shifted-no-sub-U-turn}
\end{align} 
where we used Lemma \ref{lem:sub-u-turn-free-equivalence} to establish \eqref{eq:shifted-no-sub-U-turn}. Using the shift invariance \eqref{eq:u-turn-symmetry} of $\mathbb{1}_{\text{U-turn}}$ and $\mathbb{1}_{\text{sub-U-turn}}$ we obtain \eqref{eq:j-min-lower-bound-u-turn} and \eqref{eq:j-min-lower-bound-sub-u-turn}. 

Combining these two steps above implies 
\[\ell_{min}(\Phi_{h/R}^{RL}(\theta, \rho), B^*, h, R)  = \ell_{min}(\theta, \rho, B, h, R)\]
as desired.
\end{proof}

\begin{lemma} \label{lem:sub-u-turn-free-equivalence}
For $b -a +1 = 2^\ell$
\begin{equation} \label{eq:descendants-sub-u-turn-a-b}
    \mathbb{1}_{\text{sub-U-Turn}}(\theta, \rho, a, b, h, R) = 0,
\end{equation}
if and only if 
\begin{equation} \label{eq:all-descendents-u-turn-free}
\forall [\tilde{a} : \tilde{b}] \in \mathcal{T}([a:b]), \mathbb{1}_{\text{U-turn}}(\theta, \rho, a, b, h, R) = 0.
\end{equation}
\end{lemma}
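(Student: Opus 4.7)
The plan is to prove the equivalence by induction on $\ell$, where $b - a + 1 = 2^\ell$, by lining up the recursive definition of $\mathbb{1}_{\text{sub-U-turn}}$ in \eqref{eq:subuturn} against the recursive decomposition $\mathcal{T}([a:b]) = \mathcal{T}([a:m]) \cup \mathcal{T}([m+1:b]) \cup \{[a:b]\}$ recorded in \eqref{eq:descendants-downward-closed}. I read the quantified condition on the right of \eqref{eq:all-descendents-u-turn-free} as ``$\mathbb{1}_{\text{U-turn}}(\theta, \rho, \tilde{a}, \tilde{b}, h, R) = 0$ for every $[\tilde{a}:\tilde{b}] \in \mathcal{T}([a:b])$'', with the bound variables $\tilde{a},\tilde{b}$ in the subscripts rather than the literal $a,b$ that appear; the $a,b$ subscripts look like the same transcription slip already present in \eqref{eq:def:sub-u-turn-free}.

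The base case $\ell = 0$ is trivial. When $a=b$, formula \eqref{eq:subuturn} stipulates $\mathbb{1}_{\text{sub-U-turn}}(a, a, \theta, \rho, h, R) = 0$ directly, and $\mathcal{T}([a:a])$ is empty so the universally quantified condition holds vacuously; the two sides of the equivalence are both automatically true.

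For the inductive step, fix $\ell \geq 1$ and set $m = \lfloor (a+b)/2 \rfloor$, so the halves $[a:m]$ and $[m+1:b]$ each have size $2^{\ell - 1}$ and the inductive hypothesis applies to both. Formula \eqref{eq:subuturn} writes $\mathbb{1}_{\text{sub-U-turn}}(a, b, \theta, \rho, h, R)$ as the maximum of $\mathbb{1}_{\text{sub-U-turn}}(a, m, \theta, \rho, h, R)$, $\mathbb{1}_{\text{sub-U-turn}}(m+1, b, \theta, \rho, h, R)$, and $\mathbb{1}_{\text{U-turn}}(\theta, \rho, a, b, h, R)$, and this maximum of nonnegative indicators vanishes iff all three terms vanish. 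By the inductive hypothesis, the first two terms vanish iff $\mathbb{1}_{\text{U-turn}}$ is zero on every element of $\mathcal{T}([a:m])$ and of $\mathcal{T}([m+1:b])$, respectively. Combining these equivalences with the vanishing of $\mathbb{1}_{\text{U-turn}}(\theta, \rho, a, b, h, R)$ and the three-way union \eqref{eq:descendants-downward-closed} reproduces exactly the sub-U-turn-free condition on $\mathcal{T}([a:b])$, closing the induction.

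I do not expect any real obstacle here: the lemma essentially asserts that the three-way $\max$ used to recursively define $\mathbb{1}_{\text{sub-U-turn}}$ is in one-to-one correspondence with the three-way union used to recursively build $\mathcal{T}([a:b])$ from its two children plus its root. The only care point is making sure the $a=b$ base case of \eqref{eq:subuturn} is matched by the convention $\mathcal{T}([a:a]) = \emptyset$, which the definitions themselves immediately supply, so the whole argument is a short bookkeeping induction rather than a genuine combinatorial computation.
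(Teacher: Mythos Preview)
Your proposal is correct and follows essentially the same induction-on-$\ell$ argument as the paper, matching the recursive definition \eqref{eq:subuturn} against the decomposition \eqref{eq:descendants-downward-closed}. The only cosmetic difference is that you run both implications simultaneously via an ``iff'' chain on the three-term maximum, whereas the paper splits the forward and backward directions; your observation about the $\tilde a,\tilde b$ typo in \eqref{eq:all-descendents-u-turn-free} is also on point.
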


\begin{remark}
Lemma \ref{lem:sub-u-turn-free-equivalence} simply establishes that $\mathbb{1}_{sub-U-turn}(\theta, \rho, a, b, h, R)$ checks in a literal sense that no vertex in the associated tree has a U-turn.
\end{remark}

\begin{proof}
We prove this by induction on $\ell$. First, suppose $\ell = 0$. In this case, $b-a +1 = 1$ and hence $b=a$ and the claim is vacuously true. 

Suppose next the claim is true for $\ell' < \ell$. Suppose \eqref{eq:descendants-sub-u-turn-a-b} is true. We then have by definition 
\begin{align}
\mathbb{1}_{U-turn}(\theta, \rho, a, b, h, R) &= 0  \, ,\label{eq:u-turn-descendants-a-b-root-node} \\
\mathbb{1}_{sub-U-turn}(\theta, \rho, a, m, h, R) &= 0 \, ,\label{eq:sub-u-turn-induction-a-m}\\
\mathbb{1}_{sub-U-turn}(\theta, \rho, m+1, b, h, R) &= 0 \, ,\label{eq:sub-u-turn-induction-mp1-b}
\end{align}
for $m = \floor{(a+b)/2}$.
By \eqref{eq:sub-u-turn-induction-a-m} and \eqref{eq:sub-u-turn-induction-mp1-b}, 
\begin{equation} \label{eq:strict-descendants-u-turn-free}
\forall [\tilde{a} : \tilde{b}] \in \mathcal{T}([a:m]) \cup \mathcal{T}([m+1:b]), \mathbb{1}_{U-turn}(\theta, \rho, \tilde{a}, \tilde{b}) = 0.
\end{equation}
Finally, using \eqref{eq:u-turn-descendants-a-b-root-node} we have \[\forall [\tilde{a} : \tilde{b}] \in \mathcal{T}([a:m]) \cup \mathcal{T}([m+1:b]) \cup \{ [a:b] \}, \; \; \mathbb{1}_{U-turn}(\theta, \rho, \tilde{a}, \tilde{b}) = 0 .\]
Hence, \eqref{eq:descendants-sub-u-turn-a-b} implies \eqref{eq:all-descendents-u-turn-free}. Going the other way, suppose \eqref{eq:all-descendents-u-turn-free} is true. In this case, as $[a:b] \in \mathcal{T}([a:b])$ we get \eqref{eq:u-turn-descendants-a-b-root-node} directly. Additionally, as \eqref{eq:descendants-downward-closed} and \eqref{eq:all-descendents-u-turn-free} clearly imply \eqref{eq:strict-descendants-u-turn-free}, which implies \eqref{eq:sub-u-turn-induction-a-m} and \eqref{eq:sub-u-turn-induction-mp1-b} by induction. From the definition of $\mathbb{1}_{sub-U-turn}(\theta, \rho, a,b, h, R) = 0$ and thus (\ref{eq:all-descendents-u-turn-free}) implies (\ref{eq:descendants-sub-u-turn-a-b}).

\end{proof}

\begin{lemma} \label{lem:beta-downward-closed}
For $(B, \ell, a, b, L) \in \mathcal{D}$
\begin{equation} \label{eq:beta-downward-closed}
\mathcal{T}([a_{j-1}((B_i^*)_{i \in [1:j-1]}) +L : b_{j-1}((B_i^*)_{i \in [1:j-1]}) + L ]) \subseteq \mathcal{T}([a_{j}((B_i^*)_{i \in [1:j]}) + L : b_{j}((B_i^*)_{i \in [1:j]}) + L]) .
\end{equation}
\end{lemma}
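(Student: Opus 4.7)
The plan is to combine the inductive definitions \eqref{eq:inductive-def-of-a-j}--\eqref{eq:inductive-def-of-b-j} of $a_j, b_j$ with the downward-closure identity \eqref{eq:descendants-downward-closed} of $\mathcal{T}$. The key structural observation I would establish is that the interval $[a_{j-1}((B^*_i)_{i \in [1:j-1]}) + L : b_{j-1}((B^*_i)_{i \in [1:j-1]}) + L]$ is exactly one of the two halves of the longer interval $[a_j((B^*_i)_{i \in [1:j]}) + L : b_j((B^*_i)_{i \in [1:j]}) + L]$. Once this is in hand, applying \eqref{eq:descendants-downward-closed} to the longer interval yields \eqref{eq:beta-downward-closed} immediately.

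To establish the structural fact, I would split into the two cases determined by $B^*_j \in \{0,1\}$ and use the identity $b_{j-1} - a_{j-1} + 1 = 2^{j-1}$ (suppressing the argument $(B^*_i)_{i \in [1:j-1]}$ when convenient). If $B^*_j = 0$, the definitions give $a_j = a_{j-1}$ and $b_j = b_{j-1} + 2^{j-1}$; the midpoint $m = \lfloor (a_j + b_j)/2 \rfloor$ then evaluates to $a_{j-1} + 2^{j-1} - 1 = b_{j-1}$, so that $[a_{j-1} : b_{j-1}] = [a_j : m]$ is the bottom half of $[a_j : b_j]$. Symmetrically, if $B^*_j = 1$, then $a_j = a_{j-1} - 2^{j-1}$, $b_j = b_{j-1}$, and $[a_{j-1} : b_{j-1}] = [m+1 : b_j]$ is the top half. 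In either case, after shifting by $L$ the same identification holds, because the midpoint of the translated interval $[a_j + L : b_j + L]$ is $m + L$: since $b_j - a_j + 1 = 2^j$ is even, the sum $a_j + b_j$ is odd and $\lfloor (a_j + b_j + 2L)/2 \rfloor = \lfloor (a_j + b_j)/2 \rfloor + L$.

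Finally, applying \eqref{eq:descendants-downward-closed} to the interval $[a_j + L : b_j + L]$ (with splitting midpoint $m + L$) gives
\[
\mathcal{T}([a_j + L : m + L]) \cup \mathcal{T}([m + 1 + L : b_j + L]) \ \subseteq\  \mathcal{T}([a_j + L : b_j + L]).
\]
Since $[a_{j-1} + L : b_{j-1} + L]$ coincides with one of the two summands on the left-hand side by the previous paragraph, the containment \eqref{eq:beta-downward-closed} follows.

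This argument is essentially bookkeeping; the only mildly subtle point is tracking the midpoint under translation and handling both parity cases of $B^*_j$. No ingredient beyond the downward-closure identity \eqref{eq:descendants-downward-closed} and the explicit recursion for $a_\ell, b_\ell$ is needed.
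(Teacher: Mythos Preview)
Your proof is correct and in fact more elementary than the paper's. The paper proceeds by induction on $\ell$: it uses the recursive definition of $\beta$ together with Lemma~\ref{lem:initial-point-symmetry} to identify $[a_{\ell-1}((B^*_i)_{i\in[1:\ell-1]})+L : b_{\ell-1}((B^*_i)_{i\in[1:\ell-1]})+L]$ with $[a'(L):b'(L)]$, which is by construction one of the two halves of $[a:b]$, giving the case $j=\ell$; the remaining cases $j\le \ell-1$ are then handled by the inductive hypothesis applied to the smaller tuple $(L,a'(L),b'(L))$. Your argument instead observes directly from the recursion \eqref{eq:inductive-def-of-a-j}--\eqref{eq:inductive-def-of-b-j} that $[a_{j-1}:b_{j-1}]$ is always one of the two halves of $[a_j:b_j]$, for \emph{any} binary string and any $j$, and then applies \eqref{eq:descendants-downward-closed}. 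This bypasses both the induction and the appeal to Lemma~\ref{lem:initial-point-symmetry}, and shows that the lemma has nothing to do with the particular structure of $B^*$; the paper's route, on the other hand, has the side benefit of making explicit the identification with the $(a'(L),b'(L))$ subtree, which is conceptually in line with how $B^*$ is built but not needed for the containment itself.
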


\begin{remark}
Lemma \ref{lem:beta-downward-closed} and Corollary \ref{cor:containment-of-partial-trees} establish that the intervals which need to be checked for U-turns when running Algorithm \ref{algo:leapfrog-orbit-selection} with the transformed sequence $B^*$ are all shifted versions of the intervals which need to be checked when running Algorithm \ref{algo:leapfrog-orbit-selection} with the original sequence $B$.
\end{remark}

\begin{proof} 
Proceed by induction on $\ell$. If $\ell = 0$, the claim is clearly vacuously true. 

For $\ell >0$, first note that by definition 
\[(B^*_i)_{i \in [1:\ell- 1]} = (\beta(L, a'(L), b'(L)))_{i \in [1: \ell-1]} \]
for $(a'(L), b'(L))$ defined in (\ref{eq:a-prime-L-b-prime-L-appendix}). Hence, employing Lemma \ref{lem:initial-point-symmetry} we find 
\[(a'(L), b'(L)) = (a_{\ell-1}((B_i^*)_{i \in [1:\ell -1]}) + L , b_{\ell}((B_i^*)_{i \in [1:\ell -1]} ) + L ).\]
In general, $[a'(L) : b'(L)]$ is either $[a:m]$ or $[m+1:b]$ so by (\ref{eq:descendants-downward-closed}), 
\begin{align}
\mathcal{T}([a_{\ell - 1}((B_i^*)_{i \in [1 :\ell-1]}) + L : b_{\ell -1}((B_i^*)_{i \in [1:\ell -1]}) + L]) &= \mathcal{T}([a'(L): b'(L)] \notag \\
&\subseteq  \mathcal{T}([a:b]) \notag \\
&= \mathcal{T}([a_{\ell}((B_i^*)_{i \in [1:\ell]}) + L : b_{\ell}((B_i^*)_{i \in [1:\ell]}) + L]). \notag 
\end{align}
Hence, (\ref{eq:beta-downward-closed}) holds for $j = \ell$. 

But, for $j \leq \ell-1$, $(B_i^*)_{i \in [1:j]}(L, a, b,B) =(B_i^*)_{i \in [1:j]}(L, a'(L), b'(L), B)$. Therefore, since $b'(L) - a'(L) + 1 = 2^{\ell -1}$, (\ref{eq:beta-downward-closed}) holds for $j \in [1:\ell-1]$ by the inductive hypothesis. Combining these, (\ref{eq:beta-downward-closed}) holds for $j \in [1:\ell]$ as desired.

\end{proof}

\begin{corollary} \label{cor:containment-of-partial-trees}
For $b - a +1 = 2^\ell$, $L \in [a : b]$, $0 \leq j \leq \ell$
\[\mathcal{T}([a_{j}((B_i^*)_{i \in [1:j]}) + L : b_j((B_i^*)_{i \in [1:j]}) + L]) \subseteq \mathcal{T}([a: b]) .\]
\end{corollary}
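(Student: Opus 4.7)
The plan is to prove this by a descending induction on $j$, running from $j = \ell$ down to $j = 0$, so that the chain of inclusions can be tied off with Lemma \ref{lem:beta-downward-closed}.

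First I would handle the base case $j = \ell$. Since $b - a + 1 = 2^\ell$ and $L \in [a:b]$, the definition of $B^*$ in Section \ref{sec:Bstar} gives $(B_i^*)_{i \in [1:\ell]} = (\beta_i(L, a, b))_{i \in [1:\ell]}$. Applying Lemma \ref{lem:initial-point-symmetry} yields
\[
a_\ell((B_i^*)_{i \in [1:\ell]}) = a - L, \qquad b_\ell((B_i^*)_{i \in [1:\ell]}) = b - L,
\]
so $a_\ell((B_i^*)_{i \in [1:\ell]}) + L = a$ and $b_\ell((B_i^*)_{i \in [1:\ell]}) + L = b$. Hence the containment for $j = \ell$ holds trivially.

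For the inductive step, I would simply iterate Lemma \ref{lem:beta-downward-closed}. Observe that Lemma \ref{lem:beta-downward-closed} applies here because $(B, \ell, a, b, L) \in \mathcal{D}$ by hypothesis (the interval $[a:b]$ has length $2^\ell$ and contains $L$). Consequently, for each $j$ with $1 \leq j \leq \ell$,
\[
\mathcal{T}([a_{j-1}((B_i^*)_{i \in [1:j-1]}) + L : b_{j-1}((B_i^*)_{i \in [1:j-1]}) + L]) \subseteq \mathcal{T}([a_j((B_i^*)_{i \in [1:j]}) + L : b_j((B_i^*)_{i \in [1:j]}) + L]).
\]
Chaining these inclusions upward from the level $j$ of interest to $\ell$ and then combining with the base case gives
\[
\mathcal{T}([a_j((B_i^*)_{i \in [1:j]}) + L : b_j((B_i^*)_{i \in [1:j]}) + L]) \subseteq \mathcal{T}([a_\ell((B_i^*)_{i \in [1:\ell]}) + L : b_\ell((B_i^*)_{i \in [1:\ell]}) + L]) = \mathcal{T}([a:b]),
\]
which is the desired conclusion.

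There is essentially no obstacle here: the corollary is a bookkeeping consequence of the two preceding lemmas. The only thing to be careful about is verifying that the hypotheses of Lemma \ref{lem:beta-downward-closed} are in force at each level, which reduces to checking that $L$ lies in the shifted interval $[a_j((B_i^*)_{i \in [1:j]}) + L : b_j((B_i^*)_{i \in [1:j]}) + L]$; but by the explicit formulas \eqref{eq:explicit-def-of-a-j}--\eqref{eq:explicit-def-of-b-j} we have $a_j \leq 0 \leq b_j$, so $L$ is always in the shifted interval, and the induction goes through cleanly.
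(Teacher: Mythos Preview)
Your proof is correct and follows exactly the approach the paper intends: the paper's proof is the single line ``This follows immediately by combining Lemmas~\ref{lem:beta-downward-closed} and~\ref{lem:initial-point-symmetry},'' and you have simply unpacked that line by using Lemma~\ref{lem:initial-point-symmetry} to anchor the $j=\ell$ case and then iterating the nesting from Lemma~\ref{lem:beta-downward-closed} downward.
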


\begin{proof}
This follows immediately by combining Lemmas \ref{lem:beta-downward-closed} and \ref{lem:initial-point-symmetry}.
\end{proof}

\begin{lemma} \label{lem:strong-minimality-of-ell-min}
Let $\ell_{min} = \ell_{min}(\theta, \rho, B, h, R)$. For $\ell < \ell_{min}$, $\mathcal{T}([a_{\ell}((B_i)_{i \in [1:\ell]}): b_{\ell}((B_i)_{i \in [1:\ell]})])$ is sub-U-turn free.
\end{lemma}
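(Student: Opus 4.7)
The plan is to proceed by induction on $\ell$, exploiting the recursive decomposition of $\mathcal{T}$ given in \eqref{eq:descendants-downward-closed} together with the equivalence of sub-U-turn freeness and universal U-turn freeness of descendants established in Lemma \ref{lem:sub-u-turn-free-equivalence}. The base case $\ell = 0$ is trivial: since $a_0 = b_0 = 0$, we have $\mathcal{T}([0:0]) = \bigcup_{j=1}^{0} \mathcal{T}_j([0:0]) = \emptyset$, which is vacuously sub-U-turn free.

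For the inductive step, fix $\ell < \ell_{min}$ and assume the claim for $\ell - 1$. The key structural observation is that the interval $[a_\ell : b_\ell]$ (of size $2^\ell$) splits at its midpoint $m = \lfloor (a_\ell + b_\ell)/2 \rfloor$ into the two half-sized blocks $[a_{\ell-1} : b_{\ell-1}]$ and $[\tilde{a}_\ell : \tilde{b}_\ell]$, one on each side of $m$ depending on the value of $B_\ell$; this is immediate from the explicit formulas in \eqref{eq:inductive-def-of-a-j}--\eqref{eq:inductive-def-of-b-j}. Applying \eqref{eq:descendants-downward-closed} then yields
\[
\mathcal{T}([a_\ell : b_\ell]) = \mathcal{T}([a_{\ell-1} : b_{\ell-1}]) \cup \mathcal{T}([\tilde{a}_\ell : \tilde{b}_\ell]) \cup \{[a_\ell : b_\ell]\}.
\]

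I would then check each of the three pieces separately. The piece $\mathcal{T}([a_{\ell-1} : b_{\ell-1}])$ is sub-U-turn free by the inductive hypothesis since $\ell - 1 < \ell < \ell_{min}$. For the singleton $\{[a_\ell : b_\ell]\}$, the hypothesis $\ell < \ell_{min}$ rules out the first clause in the definition of $\ell_{min}$, giving $\mathbb{1}_{\text{U-turn}}(a_\ell, b_\ell, \theta, \rho, h, R) = 0$. For $\mathcal{T}([\tilde{a}_\ell : \tilde{b}_\ell])$, I would apply the definition of $\ell_{min}$ at level $\ell - 1 < \ell_{min}$ to rule out the second clause, yielding $\mathbb{1}_{\text{sub-U-turn}}(\tilde{a}_\ell, \tilde{b}_\ell, \theta, \rho, h, R) = 0$, and then invoke Lemma \ref{lem:sub-u-turn-free-equivalence} to conclude that every element of this sub-tree is U-turn free. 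Combining the three facts via the definition of sub-U-turn freeness \eqref{eq:def:sub-u-turn-free} finishes the induction.

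The only mildly delicate point is the edge case $\ell = 1$: then the appeal to ``level $\ell - 1 = 0$'' falls outside the domain $[1:M]$ used to define $\ell_{min}$. This is handled by observing that $[\tilde{a}_1 : \tilde{b}_1]$ is a single point, so $\mathcal{T}([\tilde{a}_1:\tilde{b}_1]) = \emptyset$ and is again vacuously sub-U-turn free, making the decomposition argument go through unchanged. No genuine obstacle arises; the whole proof is a clean two-line induction once the midpoint decomposition has been verified and Lemma \ref{lem:sub-u-turn-free-equivalence} is in hand.
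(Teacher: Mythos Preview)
Your proof is correct and follows essentially the same approach as the paper's: induction on $\ell$ via the decomposition \eqref{eq:descendants-downward-closed}, the inductive hypothesis for $\mathcal{T}([a_{\ell-1}:b_{\ell-1}])$, the definition of $\ell_{min}$ to handle $\{[a_\ell:b_\ell]\}$ and $\mathcal{T}([\tilde a_\ell:\tilde b_\ell])$, and Lemma~\ref{lem:sub-u-turn-free-equivalence} to translate the sub-U-turn indicator into sub-U-turn freeness of the tree. Your explicit treatment of the $\ell=1$ edge case is a minor refinement over the paper, which leaves that case implicit (there $\tilde a_1=\tilde b_1$, so $\mathbb{1}_{\text{sub-U-turn}}=0$ by the base case of the recursive definition).
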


\begin{remark}
Lemma \ref{lem:strong-minimality-of-ell-min} establishes that none of the intervals in $\mathcal{T}([a_T:b_T])$ has a U-turn.
\end{remark}

\begin{proof}
We prove this by induction on $\ell$. For $\ell = 0$, the claim is trivially true from 
\[\mathcal{T}([a_0 : b_0]) = \{ [0] \}\]
and 
\[\mathbb{1}_{\text{U-turn}}(\theta, \rho, a,a, h, R) = 0. \]

For $\ell >0$, note that 
\begin{align*} 
& [a_\ell((B_i)_{i \in [1:\ell]}) : b_\ell((B_i)_{i \in [1:\ell]})] \\
& \quad = [a_{\ell-1}((B_i)_{i \in [1:\ell -1 ]}) : b_{\ell -1}((B_i)_{i \in [1:\ell - 1 ]})] \cup [\tilde{a}_\ell((B_i)_{i \in [1:\ell]}) : \tilde{b}_\ell((B_i)_{i \in [1:\ell]})] \notag 
\end{align*}
with one of these sets being the upper half and the other set being the lower half. Hence, by \eqref{eq:descendants-downward-closed} 
\begin{multline}
\mathcal{T}([a_\ell((B_i)_{i \in [1:\ell]}) : b_\ell((B_i)_{i \in [1:\ell]})]) \\ = \mathcal{T}([a_{\ell-1}((B_i)_{i \in [1:\ell -1 ]}) : b_{\ell -1}((B_i)_{i \in [1:\ell - 1 ]})]) \cup \mathcal{T}([\tilde{a}_\ell((B_i)_{i \in [1:\ell]}) : \tilde{b}_\ell((B_i)_{i \in [1:\ell]})])  \\\cup \{[a_\ell((B_i)_{i \in [1:\ell]}) : b_\ell((B_i)_{i \in [1:\ell]})]\}. \notag 
\end{multline}
By the inductive hypothesis, $\mathcal{T}([a_{\ell-1}((B_i)_{i \in [1:\ell -1 ]}) : b_{\ell -1}((B_i)_{i \in [1:\ell - 1 ]})])$ is sub-U-turn free. Since ${\ell < \ell_{min}}$ we find 
\begin{align}
\mathbb{1}_{\text{U-turn}}(\theta, \rho, a_{\ell}((B_i)_{i \in [1:\ell]}), b_{\ell}((B_i)_{i \in [1:\ell]})) &= 0 \, ,\notag \\
\mathbb{1}_{\text{sub-U-turn}}(\theta, \rho, \tilde{a}_{\ell}((B_i)_{i \in [1: \ell]}), \tilde{b}_{\ell}((B_i)_{i \in [1: \ell]})) &= 0. \notag 
\end{align}
Hence, $\mathcal{T}([\tilde{a}_\ell((B_i)_{i \in [1:\ell]}) : \tilde{b}_\ell((B_i)_{i \in [1:\ell]})])$ and $\{[a_\ell((B_i)_{i \in [1:\ell]}) : b_\ell((B_i)_{i \in [1:\ell]})]\}$ are both sub-U-turn free. Therefore, the union of these three sets is sub-U-turn free and the claim follows by induction.  
\end{proof}

\begin{lemma} \label{lem:strict-descendants-of-top-sub-u-turn-free}
Let $\ell_{min} = \ell_{min}(\theta, \rho, B, h, R)$ and $L \in [a_T:b_T]$. 

Then, for all $\ell < \ell_{min}$
\begin{equation} \label{eq:tree-and-proposal-sub-u-turn-free-below-top}
\mathcal{T}([a_{\ell}((B_i^*)_{i \in [1:\ell]}) + L: b_{\ell}((B_i^*)_{i \in [1:\ell]}) + L]) \cup \mathcal{T}([\tilde{a}_{\ell+1}((B_i^*)_{i \in [1:\ell+1]}) + L: \tilde{b}_{\ell+1}((B_i^*)_{i \in [1:\ell+1]}) + L]) 
\end{equation} 
is sub-U-turn free.

\end{lemma}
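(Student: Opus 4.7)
The plan is to locate the union appearing in \eqref{eq:tree-and-proposal-sub-u-turn-free-below-top}, call it $U$, inside $\mathcal{T}([a_T:b_T])$ but strictly below its root, and then invoke the minimality of $\ell_{min}$ to show every non-root interval of $\mathcal{T}([a_T:b_T])$ is U-turn free. For the containment step, I would apply Corollary \ref{cor:containment-of-partial-trees} with enclosing interval $[a:b] = [a_T:b_T]$, which has size $2^{\ell_{min}}$ and contains $L$ by hypothesis; taking the corollary's index parameter to be $\ell$ or $\ell+1$ (both at most $\ell_{min}$ since $\ell < \ell_{min}$), I obtain $\mathcal{T}([a_\ell(B^*)+L : b_\ell(B^*)+L]) \subseteq \mathcal{T}([a_T:b_T])$ and $\mathcal{T}([a_{\ell+1}(B^*)+L : b_{\ell+1}(B^*)+L]) \subseteq \mathcal{T}([a_T:b_T])$. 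Since $[\tilde{a}_{\ell+1}(B^*)+L : \tilde{b}_{\ell+1}(B^*)+L]$ is one of the two halves of $[a_{\ell+1}(B^*)+L : b_{\ell+1}(B^*)+L]$, relation \eqref{eq:descendants-downward-closed} then places $\mathcal{T}([\tilde{a}_{\ell+1}(B^*)+L : \tilde{b}_{\ell+1}(B^*)+L])$ inside $\mathcal{T}([a_T:b_T])$ as well, so $U \subseteq \mathcal{T}([a_T:b_T])$. At the same time, every interval appearing in $U$ has size at most $2^\ell$, whereas $[a_T:b_T]$ has size $2^{\ell_{min}} > 2^\ell$, and hence $[a_T:b_T] \notin U$.

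For the second step, I would use \eqref{eq:descendants-downward-closed} at the root to decompose
\[
\mathcal{T}([a_T:b_T]) = \mathcal{T}([a_{\ell_{min}-1}(B) : b_{\ell_{min}-1}(B)]) \cup \mathcal{T}([\tilde{a}_{\ell_{min}}(B) : \tilde{b}_{\ell_{min}}(B)]) \cup \{[a_T:b_T]\},
\]
reducing the task to showing sub-U-turn freeness of the first two subtrees on the right. Lemma \ref{lem:strong-minimality-of-ell-min} applied at level $\ell_{min}-1$ handles the first immediately. For the second, the minimality of $\ell_{min}$ forces $\mathbb{1}_{\text{sub-U-turn}}(\theta, \rho, \tilde{a}_{\ell_{min}}(B), \tilde{b}_{\ell_{min}}(B), h, R) = 0$ (otherwise $\ell_{min}-1$ would itself satisfy the defining condition for $\ell_{min}$), and then Lemma \ref{lem:sub-u-turn-free-equivalence} upgrades this single no-sub-U-turn condition to sub-U-turn freeness of the entire tree $\mathcal{T}([\tilde{a}_{\ell_{min}}(B) : \tilde{b}_{\ell_{min}}(B)])$. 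Combining the two steps, $U \subseteq \mathcal{T}([a_T:b_T]) \setminus \{[a_T:b_T]\}$ consists entirely of U-turn free intervals, so $U$ is sub-U-turn free.

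The main conceptual subtlety is that $[a_T:b_T]$ itself may be a U-turn --- the algorithm may terminate precisely because of such a U-turn at the top, so one cannot simply embed $U$ into $\mathcal{T}([a_T:b_T])$ and cite Lemma \ref{lem:strong-minimality-of-ell-min} directly. The size bound $2^\ell < 2^{\ell_{min}}$, coming from the restriction $\ell < \ell_{min}$ in the hypothesis, is exactly what keeps $[a_T:b_T]$ out of $U$ and rescues the argument.
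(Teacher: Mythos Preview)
Your proposal is correct and follows essentially the same argument as the paper's proof: both first establish that $\mathcal{T}([a_T:b_T])\setminus\{[a_T:b_T]\}$ is sub-U-turn free via the decomposition into $\mathcal{T}([a_{\ell_{min}-1}(B):b_{\ell_{min}-1}(B)])$ (handled by Lemma~\ref{lem:strong-minimality-of-ell-min}) and $\mathcal{T}([\tilde a_{\ell_{min}}(B):\tilde b_{\ell_{min}}(B)])$ (handled by minimality of $\ell_{min}$ plus Lemma~\ref{lem:sub-u-turn-free-equivalence}), and then use Corollary~\ref{cor:containment-of-partial-trees} at level $\ell+1$ together with \eqref{eq:descendants-downward-closed} and the size bound $2^\ell<2^{\ell_{min}}$ to place $U$ strictly below the root. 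The only cosmetic difference is that you present the containment step first and invoke Corollary~\ref{cor:containment-of-partial-trees} separately at level $\ell$, whereas the paper derives both halves from the single level-$(\ell+1)$ containment via \eqref{eq:descendants-downward-closed}.
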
 

\begin{remark}
Lemma \ref{lem:strict-descendants-of-top-sub-u-turn-free} transfers the results from Lemma \ref{lem:strong-minimality-of-ell-min} from the original sequence of sets $[a_\ell((B_i)_{i \in [1:\ell]}): b_{\ell}((B_i)_{i \in [1:\ell]})]$ to the sequence of sets $[a_{\ell}((B_i^*)_{i \in [1: \ell]}) + L : b_{\ell}((B_i^*)_{i \in [1:\ell]}) + L]$. The above in fact slightly strengthens the results coming from Lemma \ref{lem:strong-minimality-of-ell-min} in order to prove Lemma \ref{lem:j-min-symmetry} cleanly.

Using the shift invariance of $\mathbb{1}_{\text{U-turn}}$ and $\mathbb{1}_{\text{sub-U-turn}}$ in \eqref{eq:u-turn-symmetry}, the lemma is equivalent to saying that \texttt{leapfrog-orbit-selection} started from $\Phi_{h/R}^{RL}(\theta, \rho)$ with sequence $B^*$ does not encounter a U-turn or sub-U-turn for $\ell < \ell_{min}(\theta, \rho, B, h, R)$. 

In the proof of Lemma \ref{lem:j-min-symmetry}, this is essential for establishing
\[\ell_{min}(\Phi_{h/R}^{RL}(\theta, \rho), B^*, h, R) \geq \ell_{min}(\theta, \rho, B, h, R).\]
\end{remark}

\begin{proof}
The lemma follows from establishing that $\mathcal{T}([a_T :b_T]) \setminus \{ [a_T: b_T ]\}$ is sub-U-turn free.

Decompose the set $[a_T : b_T]$ as 
\[[a_T : b_T] = [a_{\ell_{min}-1}((B_i)_{i \in [1:\ell_{min}-1 ]}): b_{\ell_{min}-1}((B_i)_{i \in [1:\ell_{min}-1]})] \cup [\tilde{a}_{\ell_{min}}((B_i)_{i \in [1: \ell_{min}]}) : \tilde{b}_{\ell_{min}}((B_i)_{i \in [1:\ell_{min}]})].\]
By assumption $\mathbb{1}_{sub-U-turn}(\theta, \rho,\tilde{a}_{\ell_{min}}((B_i)_{i \in [1: \ell_{min}]}),  \tilde{b}_{\ell_{min}}((B_i)_{i \in [1:\ell_{min}]}), h, R) = 0$, so by Lemma \ref{lem:sub-u-turn-free-equivalence} the set 
\[ \mathcal{T}( [\tilde{a}_{\ell_{min}}((B_i)_{i \in [1: \ell_{min}]}) : \tilde{b}_{\ell_{min}}((B_i)_{i \in [1:\ell_{min}]})])\] is sub-U-turn free. Additionally, by Lemma \ref{lem:strong-minimality-of-ell-min} the set  \[\mathcal{T}([a_{\ell_{min}-1}((B_i)_{i \in [1:\ell_{min}-1]}): b_{\ell_{min}-1}((B_i)_{i \in [1:\ell]})])\] is sub-U-turn free. 
Therefore, by \eqref{eq:descendants-downward-closed} the set $ \mathcal{T}([a_T:b_T]) \setminus \{ [a_T :b_T]\}$ is sub-U-turn free. 

To establish \eqref{eq:strict-descendants-u-turn-free} write for $\ell < \ell_{min}$
\begin{multline}     
[a_{\ell+1}((B_i^*)_{i \in [1: \ell +1]}) +L : b_{\ell+1}((B_i^*)_{i \in [1:\ell+1]}) + L] = \\ [a_{\ell}((B_i^*)_{i \in [1: \ell]}) +L : b_{\ell}((B_i^*)_{i \in [1:\ell]}) + L] \cup  [\tilde{a}_{\ell+1}((B_i^*)_{i \in [1: \ell +1]}) +L : \tilde{b}_{\ell+1}((B_i^*)_{i \in [1:\ell+1]}) + L].
\end{multline}
By Corollary \ref{cor:containment-of-partial-trees} 
\[\mathcal{T}([a_{\ell+1}((B_i^*)_{i \in [1: \ell +1]}) +L : b_{\ell+1}((B_i^*)_{i \in [1:\ell+1]}) + L]) \subseteq \mathcal{T}([a_T:b_T]).\]
Hence, by \eqref{eq:descendants-downward-closed} 
\begin{multline}    
\mathcal{T}([a_{\ell}((B_i^*)_{i \in [1: \ell]}) +L : b_{\ell}((B_i^*)_{i \in [1:\ell]}) + L]) \\  \cup \mathcal{T} ([\tilde{a}_{\ell+1}((B_i^*)_{i \in [1: \ell +1]}) +L : \tilde{b}_{\ell+1}((B_i^*)_{i \in [1:\ell+1]}) + L]) \subseteq \mathcal{T}([a_T: b_T]).
\end{multline}
Additionally, 
\begin{multline}
[a_T:b_T] \not \in \mathcal{T}([a_{\ell}((B_i^*)_{i \in [1: \ell]}) +L : b_{\ell}((B_i^*)_{i \in [1:\ell]}) + L]) \\  \cup \mathcal{T} ([\tilde{a}_{\ell+1}((B_i^*)_{i \in [1: \ell +1]}) +L : \tilde{b}_{\ell+1}((B_i^*)_{i \in [1:\ell+1]}) + L]).
\end{multline}
Thus, as 
\[\mathcal{T}([a_T; b_T]) \setminus \{ [a_T:b_T]\}\]
is sub-U-turn free, we find
\begin{multline}
\mathcal{T}([a_{\ell}((B_i^*)_{i \in [1: \ell]}) +L : b_{\ell}((B_i^*)_{i \in [1:\ell]}) + L]) \\  \cup \mathcal{T} ([\tilde{a}_{\ell+1}((B_i^*)_{i \in [1: \ell +1]}) +L : \tilde{b}_{\ell+1}((B_i^*)_{i \in [1:\ell+1]}) + L])
\end{multline}
is sub-U-turn free as desired.

\end{proof}

\printbibliography

\end{document}